\newtheorem{theorem}{Theorem}
\newtheorem{assump}{Assumption}
\newtheorem{definition}[theorem]{Definition}
\newtheorem{lemma}[theorem]{Lemma}
\newtheorem{prop}[theorem]{Proposition}
\newtheorem{remark}[theorem]{Remark}
\newcommand\EE {\mathbb E}
\newcommand\RR {\mathbb R}
\newcommand\QQ {\mathbb Q}
\newcommand{\be}{\begin{equation}}
\newcommand{\ee}{\end{equation}}
\newcommand{\ed}{\end{document}}
\newcommand{\sop}[2]{\frac{\partial^2 #1}{\partial #2^2}}
\def\bone{\mathbf{1}}
\begin{document}


\baselineskip=24pt \parskip=12pt

\begin{titlepage}

\vfill
\begin{center}
{\bf \Huge Local Variance Gamma and Explicit Calibration to Option Prices}\\
\vspace{1in}
\begin{tabular}{lcl}
Peter Carr   &   & Sergey Nadtochiy  \\
Courant Institute of Mathematical Sciences & & Department of Mathematics at University of Michigan\\
New York University & & University of Michigan\\
New York, NY 10012 & & Ann Arbor, MI  48109 \\
pcarr@nyc.rr.com & & sergeyn@umich.edu \\
\end{tabular}\\

\vfill

\begin{center}
This version: \today\footnote{We are very grateful for comments from Laurent Cousot, Bruno Dupire, David Eliezer, Travis Fisher, Bjorn Flesaker, Alexey Polishchuk, Serge Tchikanda, Arun Verma, Jan Ob{\l}{\'o}j, and Liuren Wu. We also thank the anonymous referee for valuable remarks and suggestions which helped us improve the paper significantly. 
We are responsible for any remaining errors.}\\
File reference: lvg8.tex\\
First draft: April 11, 2012
\end{center}
\vfill

{\bf Abstract}
\end{center}
In some options markets (e.g. commodities), options are listed with only a single maturity for each underlying. In others, (e.g. equities, currencies), options are listed with multiple maturities. In this paper, we analyze a special class of pure jump Markov martingale models and provide an algorithm for calibrating such model to match the market prices of European options of multiple strikes and maturities. This algorithm matches option prices exactly and only requires solving several one-dimensional root-search problems and applying elementary functions. We show how to construct a time-homogeneous process which meets a single smile, and a piecewise time-homogeneous process which can meet multiple smiles.

\end{titlepage}

\newpage
\section{Introduction}

\begin{quote}
Why is there always so much month left at the end of the money?
  -- Sarah Lloyd
\end{quote}

The very earliest literature on option pricing imposed a process on the underlying asset price
and derived unique option prices as a consequence of the dynamical assumptions and no arbitrage.
We may characterize this literature as going ``From Process to Prices''.
However, once the notion of implied volatility was introduced,
the inverse problem of going ``From Prices to Process'' was established.
The term that practitioners favor for this inverse problem
is calibration - the practice of determining the required inputs to a model
so that they are consistent with a specified set of market prices.
Implied volatility is just the simplest example of this calibration procedure, wherein
a single option price is given and the volatility input to the Black Scholes model is
determined so as to gain exact consistency with this one market price.

When the number of calibration instruments is expanded to
several options of different maturities, the Black Scholes model
can be readily adapted to be consistent with this information set.
One simply assumes that the instantaneous variance is a piecewise constant
function of time, which jumps at each option maturity.
The staircase levels are
chosen so that the time-averaged cumulative variance matches
the implied variance at each maturity.
So long as the given option prices are arbitrage-free,
this deterministic volatility version of the Black Scholes model
is capable of achieving exact consistency with any given
term structure of market option prices.
As a bonus, the ability to have closed form solutions for European option prices is 
retained.

Unfortunately, when the number of calibration instruments is instead expanded to
several co-terminal options of different strikes,
there is no unique simple extension of the Black Scholes model
which is capable of meeting this information set.
Whenever the implied volatility at a single term is a non-constant function of the strike price,
there are, instead, many ways to gain exact consistency with the associated option prices.
The earliest work on this problem seems to be Rubinstein\cite{r94} in his presidential address.
Working in a discrete time setting, he assumed that the price of
the underlying asset is a
Markov chain
evolving on a binomial lattice.
Assuming that the terminal nodes of the lattice fell on option strikes, he was able to
find the nodes and the transition probabilities that determine this lattice.
A continuous time and state version of the Rubinstein result
can be found in Carr and Madan\cite{cm98}.
Other methods of calibrating to a single smile are presented in Cox, Hobson and Ob{\l}{\'o}j\cite{CoxHobsonObloj}, \cite{EkstromHobsonTysk}, \cite{Noble}, 
and to multiple smiles, in Madan and Yor\cite{my02}.
However, all these works assume the availability of continuous smiles, i.e. option prices for a continuum of strikes. 

In this paper, we present a new way to go from a given set of option prices 
to a Markovian martingale in a continuous time setting. 
This calibration method can be successfully applied to continuous smiles, as well as a finite family of option prices
with multiple strikes and maturities.
In order to implement our algorithm, one only needs to solve several one-dimensional root-search problems and 
apply the elementary functions.
To the authors' knowledge, this is the first example of explicit exact calibration
to a finite set of option prices with multiple strikes and maturities, such that the calibrated (continuous time) process has continuous distribution at all times.
In addition, if the given market options are co-terminal, the calibrated process becomes time-homogeneous. 

Suppose that the risk-neutral process for the (forward) price of an asset, underlying
a set of European options, is a driftless time-homogeneous diffusion
running on an independent and unbiased Gamma process.
We christen this model ``Local Variance Gamma'' (LVG),
since it combines ideas from both the Local Variance model of Dupire\cite{d94}
and the Variance Gamma model of Madan and Seneta\cite{ms90}.
Since the diffusion is time-homogeneous and the subordinating Gamma process is L\'evy,
their independence implies that the spot price process is also Markov and time-homogeneous.
Since the subordinator is a pure jump process, the LVG process governing the
underlying spot price is also pure jump.
In addition, the forward and backward equations governing options prices in the LVG
model turn into much simpler partial differential difference equations (PDDE's).
The existence of these PDDE's permits both explicit calibration of the LVG model
and fast numerical valuation of contingent claims.
As a result of the forward PDDE holding, the diffusion coefficient can be explicitly represented (calibrated) in terms of a single (continuous) smile.
The backward PDDE, then, allows for efficient valuation of other contingent claims,
by successively solving a finite sequence of second order linear ordinary differential equations (ODE's) in the spatial variable.

While the single smile results are relevant for commodity option markets,
they are not as relevant for market makers in equity and currency
options markets where multiple option maturities trade simultaneously.
In order to be consistent with multiple smiles, we also present an extension of the LVG model that results in a piecewise time-homogeneous process for the underlying asset price. The calibration procedure remains explicit in the case of multiple smiles: in particular, it does not require application of any optimization methods.

The above results allow us to calibrate LVG model, or its extension, to continuous arbitrage-free smiles, implying, in particular, that option prices for a continuum of strikes must be observed in the market. To get rid of this unrealistic assumption, we show how to use the PDDE's associated with the LVG model to construct continuous arbitrage-free smiles from a finite family of option prices, for multiple strikes and maturities. 
To the authors' knowledge, this is the first construction of a continuously differentiable arbitrage-free interpolation of implied volatility across strikes, that only requires solutions to one-dimensional root-search problems and application of elementary functions.


This paper is structured as follows.
In the next section, we present the basic assumptions and
construct the LVG process, i.e. a driftless time-homogeneous diffusion
subordinated to an unbiased gamma process.
In the following section, we derive the forward and backward PDDE's
that govern option prices.
In the penultimate section, we discuss calibration strategies.
To meet multiple smiles, we construct a piecewise time-homogeneous
extension of the LVG process and develop the corresponding forward and backward PDDE's.
The algorithm for constructing continuous smiles from a finite set of option prices, along with the corresponding theorem and numerical results, is presented in Subsection $\ref{subse:ContFromFinite}$.
The final section summarizes the paper and makes some suggestions for future research.

\section{Local Variance Gamma Process}

\subsection{Model Assumptions}
\label{subse:ModelAss}

In this subsection,
we lay out the general financial and mathematical assumptions
used throughout the paper.
For simplicity, we assume zero carrying costs for all assets.
As a result, we have zero interest rates, dividend yields, etc.
It is straightforward to extend our results to the case when these
quantities are deterministic functions of time (the associated numerical issues are discussed in Remark $\ref{rem:shortMat}$).
We also assume frictionless markets and no arbitrage.
Motivated by the fundamental theorem of asset pricing, we assume that there exists a probability measure $\mathbb{Q}$ such that market prices of all assets are $\mathbb{Q}$-martingales.
Following standard terminology, we will refer to $\mathbb{Q}$ as a risk-neutral probability measure.

We assume that the market includes a family of European call options written on a common
underlying asset whose prices process we denote by $S$.
We assume that the initial spot price $S_0$ is known.
Throughout this paper, we denote by $(L,U)$, with $-\infty \leq L < U \leq \infty$, the spatial interval on which
the $S$ process lives. The boundary points $L$ and $U$ may or may not be attainable. If a boundary point is attainable, we assume that the process is absorbed at that point. If a boundary point is infinite, we assume it is not attainable. Since we interpret $S$ as the price process, for simplicity, one can think of $L$ and $U$ as $0$ and $\infty$, respectively.

In the following subsections, we consider several specific classes of the underlying processes and use them as building blocks to construct the Local Variance Gamma process.
Namely, the desired pure-jump process arises by
subordinating a driftless diffusion to an unbiased gamma process.

\subsection{Driftless Diffusion}
\label{subse:DD}
To elaborate on this additional structure,
let $W$ be a standard Brownian motion.
We define $D$ as a driftless time-homogeneous diffusion with the generator $\frac{1}{2}a^2(D) \partial^2_{DD}$,
\be
dD_s = a(D_s) dW_s, \qquad s \in [0,\zeta],
\label{D}
\ee
and with the initial value $x\in(L,U)$. The stopping time $\zeta$ denotes the first time the diffusion exits the interval $(L,U)$. The process is stopped (absorbed) at $\zeta$. We assume that the diffusion coefficient $a:(L,U)\rightarrow(0,\infty)$ is a piecewise continuous function, with a finite number of discontinuities of the first order (i.e. the left and right limits exist at each point), bounded uniformly from above and away from zero, and having finite limits at those boundary points that are finite. Under these assumptions, for any initial condition $x\in(L,U)$, the SDE (\ref{D}) has a weak solution which is unique in the sense of probability law (cf. Theorem 5.7, on p. 335, in \cite{KaratzasShreve}). 
It is easy to see that, under these assumptions, a boundary point, $L$ or $U$, is accessible if and only if it is finite.
Note also that one can extend the set of initial conditions to the entire real line by assuming that the solution to (\ref{D}) remains at $x$, for any $x\in\RR\setminus(L,U)$. The collection of distributions of the weak solutions to (\ref{D}), for all $x\in\RR$, forms a Markov family, in the sense of Definition 2.5.11, on p. 74, in \cite{KaratzasShreve}.
More precisely, on the canonical space of continuous paths $\Omega^D = C\left([0,\infty)\right)$, equipped with the Borel sigma algebra $\mathcal{B}\left(C\left([0,\infty)\right)\right)$, we consider a family of probability measures $\left\{ \QQ^{D,x} \right\}$, for $x\in\RR$, such that every $\QQ^{D,x}$ is the distribution of a weak solution to (\ref{D}), with initial condition $x$. 
As follows, for example, from Theorem 5.4.20 and Remark 5.4.21, on p. 322, in \cite{KaratzasShreve}, $\left\{ \QQ^{D,x} \right\}$ and the canonical process $D:\omega\mapsto (\omega(t),\,\,t\geq0)$ form a Markov family.
Due to the growth restrictions on $a$, $D$ is a true martingale, with respect to its natural filtration, under any measure $\QQ^{D,x}$. 

It is worth mentioning that there is a reason why we construct $D$ in this particular way, introducing the family $\left\{ \QQ^{D,x} \right\}$. Namely, in order to carry out the constructions in Subsections \ref{fpidec} and, in turn, \ref{subse:calib.multsmiles}, we need to consider the diffusion process $D$ as a function of its initial condition $x$. 
In particular, we use certain properties of its distribution $\QQ^{D,x}$, such as the measurability of the mapping $x\mapsto\QQ^{D,x}$, which follows from the definition of a Markov family (cf. Definition 2.5.11, on p. 74, in \cite{KaratzasShreve}).


We can compute prices of \emph{European options} in a model where the risk-neutral dynamics of the underlying are given by the above driftless diffusion. Namely, given a Borel measurable and exponentially bounded payoff function $\phi$, we recall the time $t$ price of the associated European type claim, with the \emph{time of maturity} $T$:
$$
V^{D,\phi}_t(T) = \EE^x \left(\left.\phi(D_{T}) \right| \mathcal{F}^{D}_t\right)
= \left.\EE^y \left( \phi(D_{T-t})\right)\right|_{y=D_t},
$$
which holds for all $x\in\RR$ and $t\in[0,T]$.
In the above, we denote by $\EE^x$ the expectation with respect to $\QQ^{D,x}$ and by $\mathcal{F}^{D}$ the filtration generated by $D$. The last equality is due to Markov property. 
Thus, in a driftless diffusion model, the price of a European type option, at any time, can be computed via the price function:
\begin{equation}\label{eq.DD.Vphi}
V^{D,\phi}(\tau,x) = \EE^x \left( \phi(D_{\tau})\right).
\end{equation}
Throughout the paper, $\tau$ is used as an auxiliary variables, which, often, has a meaning of the \emph{time to maturity}: $\tau=T-t$. In addition, we differentiate prices, as random variables, from the price functions by adding a subscript (typically, ``$t$").
The option price function in a driftless diffusion model is expected to satisfy the Black-Scholes equation in $(\tau,x)$:
\begin{equation}\label{eq.DD.BS}
\partial_{\tau} V^{D,\phi}(\tau,x) = \frac{1}{2} a^2(x) \partial^2_{xx} V^{D,\phi}(\tau,x),
\,\,\,\,\,\,\,\,\,\, V^{D,\phi}(0,x) = \phi(x). 
\end{equation}
Notice, however, that the coefficient in the above equation can be discontinuous, therefore, we can only expect the value function to satisfy this equation in a weak sense.

\begin{theorem}\label{th:DD.th1}
Assume that $\phi:(L,U)\rightarrow\RR$ is exponentially bounded and continuously differentiable, and that $\phi'$ is absolutely continuous, with a square integrable derivative. Then, $V^{D,\phi}$ (defined in (\ref{eq.DD.Vphi})) is the unique weak solution to (\ref{eq.DD.BS}), in the sense that: $V^{D,\phi}$ is continuous, its weak derivatives $\partial_{\tau}V^{D,\phi}$ and $\partial^2_{xx}V^{D,\phi}$ are square integrable in $(0,T)\times(L,U)$, for any fixed $T>0$, and $V^{D,\phi}$ satisfies (\ref{eq.DD.BS}).
\end{theorem}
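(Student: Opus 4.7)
The plan is to combine a smooth--coefficient approximation argument (for existence and the $L^2$ regularity of the derivatives) with a direct energy estimate (for uniqueness). Specifically, mollify $a$ to obtain $a_n\in C^\infty((L,U))$ sharing the same uniform upper bound and positive lower bound as $a$, with $a_n\to a$ almost everywhere on $(L,U)$. Let $D^{(n)}$ be the diffusion driven by $a_n$ and set $V_n(\tau,x):=\EE^x\bigl[\phi(D^{(n)}_\tau)\bigr]$. Classical linear parabolic theory yields that each $V_n$ has enough regularity to solve $\partial_\tau V_n=\tfrac12 a_n^2\partial_{xx}^2 V_n$ in the classical (or at least strong $L^2$) sense, with $V_n(0,\cdot)=\phi$. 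Stability of weak solutions to $dD_s=a(D_s)\,dW_s$ under $a_n\to a$ a.e.\ (the Stroock--Varadhan martingale-problem framework), together with exponential boundedness of $\phi$ and bounded convergence, gives $V_n\to V^{D,\phi}$ pointwise.

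The key a priori estimate is obtained by multiplying the PDE for $V_n$ by $\partial_\tau V_n/a_n^2$ and integrating by parts in $x$ on $(L,U)$:
\[
\int_0^T\!\!\int_L^U \frac{(\partial_\tau V_n)^2}{a_n^2}\,dx\,d\tau + \tfrac{1}{4}\int_L^U \bigl(\partial_x V_n(T,x)\bigr)^2 dx = \tfrac{1}{4}\int_L^U (\phi'(x))^2 dx + \text{boundary terms at }L,U.
\]
Using $a_n\ge c>0$ uniformly in $n$ and the identity $\partial_{xx}^2 V_n = 2\partial_\tau V_n/a_n^2$, this yields uniform $L^2((0,T)\times(L,U))$ bounds on both $\partial_\tau V_n$ and $\partial_{xx}^2 V_n$. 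I would then extract a weakly convergent subsequence and pass to the limit in the weak formulation of $\partial_\tau V_n=\tfrac12 a_n^2\partial_{xx}^2 V_n$: the pointwise a.e.\ convergence $a_n^2\to a^2$ together with the uniform $L^\infty$ bound on $a_n^2$ and weak $L^2$ convergence of $\partial_{xx}^2 V_n$ is enough to identify the product $a_n^2\partial_{xx}^2 V_n$ in the limit, so the limits agree with $\partial_\tau V^{D,\phi}$ and $\partial_{xx}^2 V^{D,\phi}$ and $V^{D,\phi}$ solves the equation weakly.

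For uniqueness, let $V_1,V_2$ be two weak solutions in the stated class and set $w=V_1-V_2$, so that $\partial_\tau w = \tfrac12 a^2\partial_{xx}^2 w$ weakly with $w(0,\cdot)=0$. Testing with $w/a^2$---an admissible multiplier because $a$ is bounded away from zero---and integrating by parts in $x$ yields
\[
\tfrac12\int_L^U \frac{w(T,x)^2}{a^2(x)}\,dx + \tfrac12\int_0^T\!\!\int_L^U (\partial_x w)^2\,dx\,d\tau = \text{boundary contributions},
\]
and the boundary contributions vanish because both $V_1$ and $V_2$ agree with $\phi$ at any finite (hence absorbing) endpoint. This forces $w\equiv 0$.

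The main obstacle I expect is the rigorous handling of the boundary terms and of the low regularity of $a$. At a finite attainable endpoint, one has to propagate the absorbing condition $V_n(\tau,\cdot)=\phi$ at that endpoint to the limit and verify that the traces of $\partial_x V$ appearing in the energy identity are well defined; the natural route is to exhaust $(L,U)$ by compactly contained subintervals and pass to the limit using the uniform $H^1$-in-$x$ control obtained above. At the jump points of $a$ no internal boundary condition is needed, but identifying the limit of the product $a_n^2\partial_{xx}^2 V_n$ requires the combination of a uniform $L^\infty$ bound, pointwise a.e.\ convergence of $a_n^2$, and weak $L^2$ convergence of $\partial_{xx}^2 V_n$ mentioned above---this is the delicate point on which both existence and the characterization of the weak derivatives ultimately rest.
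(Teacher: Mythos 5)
Your approach is genuinely different from the paper's, and it has a concrete gap that would need to be repaired before the argument closes.

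The paper's proof does not mollify $a$ or pass to a limit in the PDE at all. It first subtracts the initial datum: setting $u := V^{D,\phi}-\phi$, it considers
\[
\partial_\tau u \;=\; \tfrac12\,a^2\,\partial^2_{xx}u \;+\; \tfrac12\,a^2\,\phi'',\qquad u(0,\cdot)=0,\ u(\tau,L_+)=u(\tau,U_-)=0,
\]
whose source term $\tfrac12 a^2\phi''$ lies in $L^2$ by hypothesis. Existence, uniqueness, and the $L^2$ bounds on $\partial_\tau u$ and $\partial^2_{xx}u$ are then obtained by citing the parabolic theory for discontinuous (VMO) coefficients in Krylov and, for an unbounded interval, Campanato. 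Finally, the identification $v:=u+\phi=V^{D,\phi}$ is carried out probabilistically: Krylov's generalized It\^o formula is applied to $v$ along the diffusion $D$ stopped at the exit times $\zeta^{r,R}$, and dominated convergence in $r\downarrow L$, $R\uparrow U$ together with the absorbing boundary behavior delivers $v(T,x)=\EE^x\phi(D_T)$. Your proposal is instead a constructive mollification-plus-energy-estimate argument, which, if it went through, would give a more self-contained proof.

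The gap is in the energy estimate. Your identity is on $V_n$ itself and reads
\[
\int_0^T\!\!\int_L^U \frac{(\partial_\tau V_n)^2}{a_n^2}\,dx\,d\tau
\;+\;\tfrac14\int_L^U (\partial_x V_n(T,x))^2\,dx
\;=\;\tfrac14\int_L^U (\phi'(x))^2\,dx \;+\;\text{bdry}.
\]
But the theorem only assumes $\phi''\in L^2(L,U)$, not $\phi'\in L^2(L,U)$. On an unbounded interval --- which is precisely the case $L=0$, $U=\infty$ the paper emphasizes --- a function as simple as $\phi(x)=x$ satisfies the hypotheses while $\int(\phi')^2=\infty$, so the right-hand side of your identity is infinite and the estimate gives no control at all (even though the conclusion of the theorem, $\partial_\tau V\equiv 0$, $\partial^2_{xx}V\equiv 0\in L^2$, still holds). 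The repair is exactly the paper's shift: do the energy estimate on $u_n:=V_n-\phi$, which has zero initial data and a source $\tfrac12 a_n^2\phi''\in L^2$; then the right-hand side is controlled by $\|\phi''\|_{L^2}^2$ plus an absorbable quadratic term, and the uniform bounds on $\partial_\tau u_n$ and $\partial^2_{xx}u_n$ (hence on the corresponding derivatives of $V_n$) follow. Two secondary points: (i) the stability step $V_n\to V^{D,\phi}$ is asserted via ``Stroock--Varadhan,'' but the classical well-posedness/stability theory there assumes continuous coefficients; for merely measurable, bounded, uniformly elliptic $a$ in one dimension you need the 1-D machinery (Engelbert--Schmidt / scale-and-speed convergence), which should be cited explicitly; (ii) the vanishing of the boundary contributions --- particularly at an infinite endpoint in the uniqueness step --- is not automatic from the stated solution class alone and needs a decay/trace argument of the kind you only gesture at. Once the $V_n\mapsto u_n$ shift is made and the stability and boundary points are tightened, the route is viable and is a legitimate alternative to the paper's citation-plus-It\^o argument.
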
 

The proof of Theorem $\ref{th:DD.th1}$ is given in Appendix A.
For a special class of contingent claims, we can derive an additional parabolic PDE, satisfied by the option prices. Recall that, in a driftless diffusion model, the price of a call option with strike $K$ and maturity $T$ is given by:
$$
C^{D}_t(T,K) = \EE^x \left(\left.(D_{T} - K)^+\right| \mathcal{F}^{D}_t\right) = C^D(T-t,K,D_t),
$$
for all $x\in\RR$ and $t\in[0,T]$.
In the above, we denote by $C^D$ the call price function, which is defined as
\begin{equation}\label{eq.DD.CD}
C^D(\tau,K,x) = \EE^x \left((D_{\tau} - K)^+\right).
\end{equation}
The call price function satisfies another parabolic PDE, in $(\tau,K)$, known as the Dupire's equation:
\begin{equation}\label{dup}
\partial_{\tau} C^D(\tau,K,x) = \frac{1}{2} a^2(K) \partial^2_{KK} C^D(\tau,K,x).
\end{equation}
Due to the possible discontinuity of $a$, we can only establish this equation in a weak sense.

\begin{theorem}\label{th:DD.th2}
For any $x\in(L,U)$, the call price function $C^D(\tau,K,x)$ (defined in (\ref{eq.DD.CD})) is absolutely continuous as a function of $K\in[L,U]$. Its partial derivative $\partial_K C^D(\tau,\,.\,,x)$ has a unique nondecreasing and right continuous modification, which defines a probability measure on $[L,U]$ (choosing $\partial_K C^D(\tau,U,x)=0$). Moreover, for any bounded Borel function $\phi$, with a compact support in $(L,U)$, we have
\begin{equation}\label{eq.DD.dup1}
\int_L^U C^D(\tau,K,x)\phi(K) dK = \int_L^U (x-K)^+ \phi(K) dK + \frac{1}{2} \int_0^{\tau} \int_L^U a^2(K) \phi(K)\, d \left( \partial_K C^D(u,K,S)\right) du,
\end{equation}
for all $\tau>0$ and all $x\in(L,U)$.
\end{theorem}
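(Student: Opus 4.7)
\textbf{Proof plan for Theorem \ref{th:DD.th2}.} First I would establish the regularity of $C^D(\tau,K,x)$ as a function of $K$. Since the payoff $y\mapsto(y-K)^+$ is $1$-Lipschitz in $K$ uniformly in $y$, dominated convergence gives that $C^D(\tau,\cdot,x)$ is $1$-Lipschitz, hence absolutely continuous on $[L,U]$. Convexity of $(y-K)^+$ in $K$, pulled through the expectation, gives that $C^D(\tau,\cdot,x)$ is convex. The right-derivative of a convex function is nondecreasing and right-continuous, agrees a.e.\ with the weak derivative, and is uniquely determined by those properties; this is the claimed modification. By differentiating under the expectation (legitimate since the payoff is differentiable in $K$ off a null set and the difference quotient is dominated by $1$), one identifies this modification with $-\QQ^{D,x}(D_\tau>K)$. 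The induced Lebesgue--Stieltjes measure is therefore the law of $D_\tau$ restricted to $(L,U)$, with atoms at $L$ and $U$ accounting for absorption; setting $\partial_K C^D(\tau,U,x)=0$ aligns the convention so that the total mass on $[L,U]$ is $1$.

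For the weak Dupire identity, the natural tool is the Tanaka--Meyer formula applied to the continuous semimartingale $D$ under $\QQ^{D,x}$. Writing
\[
(D_T-K)^+ = (x-K)^+ + \int_0^T \mathbf{1}_{\{D_s>K\}}\,dD_s + \tfrac{1}{2} L_T^K(D),
\]
where $L^K(D)$ denotes the (right) semimartingale local time at level $K$, the stochastic integral is a true $\QQ^{D,x}$-martingale because $dD_s = a(D_s)\,dW_s$ with $a$ bounded. Taking $\EE^x$ on both sides yields the representation $C^D(T,K,x) = (x-K)^+ + \tfrac{1}{2}\EE^x L_T^K(D)$, valid for every $K\in(L,U)$. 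The absorption convention makes this extend consistently to $K\in[L,U]$.

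Next I would multiply by a test function $\phi$ with compact support in $(L,U)$ and integrate in $K$. The occupation times formula
\[
\int_L^U \phi(K)\, L_T^K(D)\, dK \;=\; \int_0^T \phi(D_s)\, a^2(D_s)\, ds
\]
(applicable because $\phi$ is Borel bounded and $d\langle D\rangle_s = a^2(D_s)\,ds$) converts the spatial integral of local times into an occupation integral along the trajectory. After Fubini (legal by boundedness of $\phi$, compact support, and boundedness of $a$), the right-hand side becomes $\int_0^\tau \EE^x[\phi(D_u) a^2(D_u)]\,du$. Because $\phi a^2$ is bounded with compact support in $(L,U)$, the inner expectation equals $\int_L^U \phi(K) a^2(K)\, d\bigl(\partial_K C^D(u,K,x)\bigr)$, using the identification of $d(\partial_K C^D(u,\cdot,x))$ with the law of $D_u$ from the previous step (the compact support kills any contribution from the atoms at $L,U$). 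Combining these rearrangements yields exactly \eqref{eq.DD.dup1}.

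The main obstacle is not any single identity but the care required around the irregularities of $a$ and the boundary behaviour of $D$. The Tanaka formula and the occupation times formula are the correct substitutes for the classical Dupire PDE precisely because $a$ is only piecewise continuous, so Itô's formula applied to a smooth function cannot be used directly; the test-function formulation in \eqref{eq.DD.dup1} is what makes the identity meaningful despite the discontinuities of $a^2$ and the possible atoms of the law of $D_\tau$ at the accessible boundary points. Restricting $\phi$ to have compact support in $(L,U)$ is exactly what isolates the argument from those atoms and from any boundary regularity issue, and it is the point where one must be careful when transferring the occupation-integral representation into the Stieltjes integral against $d(\partial_K C^D)$.
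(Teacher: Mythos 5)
Your proof takes essentially the same route as the paper's: both rest on the Itô--Tanaka formula $(D_T-K)^+ = (x-K)^+ + \int_0^T \mathbf{1}_{\{D_s>K\}}\,dD_s + \tfrac{1}{2}L_T^K(D)$, followed by taking expectations, integrating against a compactly supported test function, invoking the occupation-times formula, and applying Fubini to identify $d(\partial_K C^D(u,\cdot,x))$ with the law of $D_u$ on $(L,U)$. You supply a bit more explicit justification of the preliminary regularity claims (Lipschitz continuity and convexity in $K$, hence a unique nondecreasing right-continuous modification of $\partial_K C^D$) and of why compact support insulates the argument from atoms at the absorbing boundary, but these are the same mechanisms the paper relies on implicitly.
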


The proof of Theorem $\ref{th:DD.th2}$ is given in Appendix A.

\subsection{Gamma Process}
Let $\{\Gamma_t(t^*, \alpha), t > 0\}$ be an independent gamma process with parameters $t^*>0$ and $\alpha>0$.
As is well known, a gamma process is an increasing L\'evy process whose L\'evy density is given by:
\be
k_{\Gamma}(t) =  \frac{e^{- \alpha t}}{t^* t}, \qquad t > 0,
\label{ld}
\ee
with parameters $t^*>0$ and $\alpha>0$.
Intuitively, jumps whose size lies in the interval $[t,t+dt]$ occur as a
Poisson process with intensity $k_{\Gamma}(t)dt$.
The fraction $1/t^*$  controls the rate of jump arrivals,
while  $1/\alpha$ controls the mean jump size, given that a jump has occurred.
One can also get direct intuition on $t^*$ and $\alpha$, rather than their reciprocals.
Since a gamma process has infinite activity, the number of jumps over any
finite time interval is infinite for small jumps and finite for large jumps.
If we ignore the small jumps, then the larger is $t^*$,
the longer one must wait on average for a fixed number of large jumps to occur.
Furthermore, the larger is $\alpha$, the longer it takes the running sum of these large jumps to reach a fixed positive level.
For an introduction to gamma processes, and L\'evy processes more generally,
see Bertoin\cite{b99}, Sato\cite{s99}, or Applebaum\cite{a04}.
For their application in a financial context, see
Schoutens\cite{s03} or Cont and Tankov\cite{ct04}.

The marginal distribution of a gamma process at time
$t \geq 0$ is a gamma distribution:
\be
\mathbb{Q} \{\Gamma_t \in ds \} = \frac{\alpha^{t/t^*}}{\Gamma(t/t^*)} s^{t/t^* - 1}e^{-\alpha s} ds, \qquad s>0,\, t > 0,
\label{gpdf}
\ee
for parameters $\alpha>0$ and $t^*>0$.
For $t < t^*$,
this PDF has a singularity at $s=0$, while for
$t > t^*$, the PDF vanishes at $s=0$.
At $t=t^*$, the PDF is exponential with mean $\frac{1}{\alpha}$.
As a result, we henceforth refer to the parameter $t^*>0$ as the {\em characteristic time} of the Gamma process.
The characteristic time $t^*$ of a Gamma process $\Gamma$ is the unique deterministic waiting time
$t$ until
the distribution of $\Gamma_t$ is exponential.

Recall that the mean of $\Gamma_t$ is given by:
\be
E^{\mathbb{Q}} \Gamma_t = \frac{t}{\alpha t^*},
\label{mean}
\ee
for all $t \geq 0$.
If we  set the parameter $\alpha = 1/t^*$, then the
gamma process becomes unbiased, i.e. $E^{\mathbb{Q}} \Gamma_t = t$ for all $t\geq 0$.
In general, the variance of $\Gamma_t$ is:
\be
\frac{t}{\alpha^2 t^*}, \qquad t>0.
\label{varg}
\ee
As a result, the standard deviation of
an unbiased gamma process $\Gamma_t$ is just
the geometric mean of $t$ and $t^*$.
Setting $\alpha = 1/t^*$ in (\ref{varg}), we obtain the \emph{unbiased} Gamma process.
The distribution of the unbiased gamma process $\Gamma$ at time $t \geq 0$ is:
\be
\mathbb{Q} \{\Gamma_t \in ds \} = \frac{s^{t/t^* - 1}e^{-s/t^*}}{(t^*)^{t/t^*} \Gamma(t/t^*)} ds, \qquad s>0,
\label{gpdfub}
\ee
When $t = t^*$, this PDF is exponential, and the fact that the gamma process is unbiased implies that the mean and the standard deviation of $\Gamma_{t^*}$ are both $t^*$.

\begin{remark}\label{rem:biasedGamma}
The choice of $\alpha=1/t^*$ in the above construction is motivated merely by the desire to have a Gamma process which is unbiased: its expectation at time $t$ is equal to $t$. We consider this as a natural property, since, in what follows, we use Gamma process as a time change. However, it is not at all necessary for the Gamma process to be unbiased. In fact, the results of the subsequent sections will hold for any $\alpha>0$. In particular, if the unbiased Gamma process produces unrealistic paths (e.g. having a lot of very small jumps and very few extremely large ones), one may change the parameter $\alpha$ to obtain more realistic dynamics.\footnote{We thank Jan Ob{\l}{\'o}j for pointing out that the simulated paths of the unbiased Gamma process may look unrealistic, for certain values of $t^*$}
\end{remark}

\subsection{Construction of the Local Variance Gamma Process}
\label{fpidec}

We assume that the risk-neutral process of the underlying spot price $S$ is obtained by subordinating the driftless diffusion $D$ to an independent
unbiased gamma process $\Gamma$. Recall that $D$ is constructed as the canonical process on the space of continuous paths $\Omega^D = C\left([0,\infty)\right)$, equipped with the Borel sigma algebra $\mathcal{B}\left(C\left([0,\infty)\right)\right)$, and with the probability measures $\QQ^{D,x}$, for $x\in\RR$. 
On a different probability space $\left(\Omega^{\Gamma},\mathcal{F}^{\Gamma}\right)$, with a probability measure $\QQ^{\Gamma}$, we construct an unbiased gamma process $\Gamma$, with parameter $t^*$. Finally, on the product space $\left(\Omega^D\times\Omega^{\Gamma},\mathcal{B}\left(C\left([0,\infty)\right)\right)\otimes\mathcal{F}^{\Gamma}\right)$ we define the risk-neutral dynamics of the underlying, for every $(\omega_1,\omega_2)\in\Omega^D\times\Omega^{\Gamma}$, as follows:
\be
S_t(\omega_1,\omega_2) = D_{\Gamma_t(\omega_2)} (\omega_1), \qquad t \geq 0.
\label{tc}
\ee
It can be shown easily, by conditioning, that $S$ inherits the martingale property of $D$, with respect to its natural filtration, under any measure 
\begin{equation}\label{eq.Qx.def}
\QQ^x = \QQ^{D,x}\times\QQ^{\Gamma}
\end{equation}

\begin{remark}
Considered as a function of the forward spatial variable,
the PDF of $S_t$ (when it exists) may possess some unusual properties for $t$ small
but not infinitesimal. For example, when $a$ is constant, it is easy to see that
the PDF is infinite at $x$, for times $t < t^*/2$.
As a result, for short term options, the graph of value against strike
will be $C^1$, but not $C^2$. Similarly, gamma will not exist for
short term ATM options. For a piecewise continuous $a$, we conjecture that the PDF of $S_t$ has a jump at every point of discontinuity of $a$. Then, at every such point, the call price is $C^1$, but not $C^2$, viewed as a function of strike.
\end{remark}

As a diffusion time changed with an independent L\'evy clock, the process $S$, along with $\left\{\QQ^x\right\}$ (defined in (\ref{eq.Qx.def})), for $x\in\RR$, form a Markov family. The following proposition makes this statement rigorous and, in addition, shows how to reduce the computation of option prices in LVG model to the case of driftless diffusion.
Recall that, in a model where the risk-neutral dynamics of underlying are given by $S$, with initial value $S_0=x$, the time $t$ price of a European type option with payoff function $\phi$ and maturity $T$ is given by
\begin{equation}\label{eq.LVG.Vphit}
V^{\phi}_t(T) = \EE^x\left(\left. \phi(S_T) \right| \mathcal{F}^{S}_t \right),
\end{equation}
where $\mathcal{F}^{S}$ is the filtration generated by $S$, and, with a slight abuse of notation, we denote by $\EE^x$ the expectation with respect to $\QQ^x$. The following proposition, in particular, shows that option prices in a LVG model are determined by the price function, defined as
\begin{equation}\label{eq.contClaimPriceFunc}
V^{\phi}(\tau,x) = \EE^x \phi(S_{\tau}),
\end{equation}

\begin{prop}\label{th:lvg.le1}
The process $S$ and the family of measures $\left\{\QQ^x\right\}_{x\in\RR}$ form a Markov family (cf. Definition 2.5.11 in \cite{KaratzasShreve}). In particular, for any Borel measurable and exponentially bounded function $\phi$, the following holds, for all $T\geq0$ and $x\in\RR$: 
\begin{equation}\label{eq.lvg.call1}
V^{\phi}_t(T) = V^{\phi}(T-t,S_t),
\,\,\,\,\,\,\,\,\,\,\,\,\,\,\,V^{\phi}(\tau,x) = \int_0^{\infty} \frac{u^{\tau/t^* - 1}e^{-u/t^*}}{(t^*)^{\tau/t^*} \Gamma(\tau/t^*)} V^{D,\phi}(u,x)  du,
\end{equation}
where $V^{\phi}_t$, $V^{\phi}$, and $V^{D,\phi}$ are given by (\ref{eq.LVG.Vphit}), (\ref{eq.contClaimPriceFunc}), and (\ref{eq.DD.Vphi}), respectively. 
\end{prop}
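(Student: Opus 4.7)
The plan is to split the proposition into two parts: the integral representation for $V^{\phi}(\tau,x)$, which is essentially a Fubini computation, and the Markov family property, which reduces to a standard argument for subordination of a Markov process by an independent L\'evy subordinator. I would prove the integral formula first and then bootstrap it to obtain the transition kernel of $S$.

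For the integral representation, I would exploit the product structure $\QQ^x = \QQ^{D,x} \times \QQ^{\Gamma}$ and the pointwise identity $S_{\tau}(\omega_1,\omega_2) = D_{\Gamma_{\tau}(\omega_2)}(\omega_1)$. Conditioning on the Gamma factor and using Fubini gives
\begin{equation*}
V^{\phi}(\tau,x) = \int_{\Omega^{\Gamma}} \int_{\Omega^D} \phi\bigl(D_{\Gamma_{\tau}(\omega_2)}(\omega_1)\bigr)\, d\QQ^{D,x}(\omega_1)\, d\QQ^{\Gamma}(\omega_2) = \EE^{\QQ^{\Gamma}} V^{D,\phi}(\Gamma_{\tau},x),
\end{equation*}
and substituting the explicit density (\ref{gpdfub}) of $\Gamma_{\tau}$ yields the second equality in (\ref{eq.lvg.call1}). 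The use of Fubini is justified by the exponential bound on $\phi$ together with the fact that $D$ is a true martingale with uniformly bounded diffusion coefficient (so $\EE^x|\phi(D_u)|$ grows at most exponentially in $u$), while the Gamma density has light tails at both $0$ and $\infty$. The joint measurability of $V^{D,\phi}(u,x)$ in $(u,x)$, required to make the outer integral well-defined, is inherited from the measurability of $x \mapsto \QQ^{D,x}$ recorded in Subsection \ref{subse:DD}.

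For the Markov family property, the measurability of $x \mapsto \QQ^x$ is immediate from the product construction. To verify the Markov property itself, I would introduce the enlarged filtration $\mathcal{G}_t := \sigma(D_u : 0 \le u \le \Gamma_t) \vee \sigma(\Gamma_u : 0 \le u \le t)$, which contains $\cF^S_t$. Using independence of $D$ and $\Gamma$, the Markov property of $D$ at the $\mathcal{G}_t$-measurable time $\Gamma_t$, and the stationary independent increments of the L\'evy process $\Gamma$, I would show that for any bounded Borel $f$ and any $s,t \ge 0$,
\begin{equation*}
\EE^x\bigl[ f(S_{t+s}) \bigm| \mathcal{G}_t \bigr] = V^f(s, D_{\Gamma_t}) = V^f(s, S_t),
\end{equation*}
with $V^f$ given by the integral formula just proved. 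Since the right-hand side is $\cF^S_t$-measurable, projecting onto $\cF^S_t \subseteq \mathcal{G}_t$ yields the Markov property, from which the first equality of (\ref{eq.lvg.call1}) follows at once. The main obstacle is the careful bookkeeping on filtrations and on measurability in the initial condition $x$ — precisely the technicalities that motivated the authors to set $D$ up as a canonical Markov family in Subsection \ref{subse:DD}; once they are handled, the rest of the argument is a routine conditioning computation.
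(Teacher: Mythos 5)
Your proof is correct and follows essentially the same route as the paper's: both condition on the independent Gamma clock, invoke the Markov property of $D$, and use Fubini with the Gamma density to obtain the integral formula. The one place you are a bit quick is asserting that measurability of $x\mapsto\QQ^x$ is ``immediate from the product construction''---the paper points out (correctly) that you need a monotone class argument to pass from rectangles $A\times B$ to the full product $\sigma$-algebra---but this is exactly the bookkeeping you acknowledge at the end, and the substance of the argument matches the paper.
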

\begin{proof}
According to Definition 2.5.11 and Proposition 2.5.13 in \cite{KaratzasShreve}, in order to prove that $S$ and $\left\{\QQ^x\right\}_{x\in\RR}$ form a Markov family, we need to show two properties. First, for any $F\in \mathcal{B}\left(C\left([0,\infty)\right)\right)\otimes\mathcal{F}^{\Gamma}$, the mapping $x\mapsto \QQ^x(F) = \left(\QQ^{D,x}\times\QQ^{\Gamma}\right)(F)$ is universally measurable. This property can be deduced easily from the universal measurability of the mapping $x\mapsto \QQ^{D,x}(F)$, for any $F\in\mathcal{B}\left(C\left([0,\infty)\right)\right)$, via the monotone class theorem. Secondly, we need to verify that, for any $B\in\mathcal{B}(\RR)$ and any $u,t\geq0$,
$$
\QQ^x\left(\left.S_{u+t}\in B  \right| \mathcal{F}^S_u\right) = \left.\QQ^y\left(S_t\in B\right)\right|_{y=S_u}
$$
The latter is done easily by conditioning on $\Gamma$ and using the Markov property of $D$. Similarly, one obtains the second equation in (\ref{eq.lvg.call1}). The first equation in (\ref{eq.lvg.call1}) follows from the Markov property.
\end{proof}

Suppose that time to maturity $\tau$ is equal to the characteristic time
$t^*$ of the gamma process.
Then (\ref{eq.lvg.call1}) yields
\begin{equation}\label{eq.pricefunc.LVG}
V^{\phi}(t^*,x) = \int_0^{\infty} \frac{e^{-u/t^*}}{t^*} V^{D,\phi}(u,x)du.
\end{equation}
In particular, for the call options, we obtain:
$$
C_t(T,K) = \EE^x\left(\left. (S_{T} - K)^+ \right| \mathcal{F}^{S}_t \right) = C(T-t,K,S_t),
$$
where $C(\tau,K,x)$ is the call price function in LVG model, satisfying
\begin{equation}\label{eq.callPriceFunc}
C(\tau,K,x) = \int_0^{\infty} \frac{u^{\tau/t^* - 1}e^{-u/t^*}}{(t^*)^{\tau/t^*} \Gamma(\tau/t^*)} C^D(u,K,x)  du,
\end{equation}
with $C^D$ given by (\ref{eq.DD.CD}).
When $\tau=t^*$, we have
\be
C(t^*,K,x) = \int_0^{\infty} \frac{e^{-u/t^*}}{t^*} C^D(u,K,x)du.
\label{sgs1}
\ee
The next section shows how the above representations can be used to
generate new equations that govern option prices.

\section{PDDE's for Option Prices}

In this section, we show that the additional structure
imposed on $S$ in Subsection \ref{fpidec}
causes the equations presented in
the previous section to reduce to much simpler \emph{partial differential difference equations} (PDDE's). The new equations can be used for a faster computation of option prices, as well as for exact calibration of the model to market prices of call options.
In this section and the next,
we will assume that the local variance rate function
$a^2(D)$ of the diffusion and the
characteristic time $t^*$ of the gamma process are
somehow known.
In the following section, we will discuss various ways in which this
positive function and this positive constant can be identified from market data.

\subsection{Black-Scholes PDDE for option prices}

Consider a European type contingent claim, which pays out $\phi(S_T)$ at the time of maturity $T$. Recall the price function of this claim in a LVG model, $V^{\phi}$, defined in (\ref{eq.contClaimPriceFunc}). Equation (\ref{eq.pricefunc.LVG}) implies that this price function is just a Laplace-Carson
\footnote{The Laplace-Carson transform
of a suitable function $f(t)$ is defined as
$\int_0^t \lambda e^{- \lambda t} f(t) dt$, where $\lambda$, in general, is a complex number
whose real part is positive.
}  
transform of the price function in diffusion model, where the transform argument
is evaluated at $1/t^*$.
Integrating both sides of the Black-Scholes equation (\ref{eq.DD.BS}), and observing that
$$
\lim_{\tau\downarrow 0} V^{\phi}(\tau,x) = \phi(x),
$$
we, heuristically, derive the Black-Scholes PDDE (\ref{eq.pdde.BS}) for the price function in LVG model.
Recall that the Black-Scholes PDE (\ref{eq.DD.BS}) is understood in a weak sense, due to the possible discontinuities of the coefficient $a^2$. The following theorem addresses this, as well as some other, difficulties, and makes the derivations rigorous.

\begin{theorem}\label{th:pdde.BS.th1}
Assume that $\phi(x)$ is once continuously differentiable in $x\in(L,U)$ and that it has zero limits at $L_+$ and $U_-$. Assume, in addition, that $\phi'$ is absolutely continuous and has a square integrable derivative. Then, the following holds.
\begin{enumerate}
\item For any $\tau\geq t^*$, the function $V^{\phi}(\tau,\cdot)$ (defined in (\ref{eq.contClaimPriceFunc})) possesses the same properties as $\phi$: it has zero limits at the boundary, it is once continuously differentiable, with absolutely continuous first derivative, and its second derivative is square integrable.
\item In addition, for all $x\in(L,U)$, except the points of discontinuity of $a$, $V^{\phi}(t^*,x)$ is twice continuously differentiable in $x$ and satisfies
\begin{equation}\label{eq.pdde.BS}
\frac{1}{2} a^2(x) \partial^2_{xx} V^{\phi}(t^*,x) - \frac{1}{t^*} \left( V^{\phi}(t^*,x) - \phi(x) \right) = 0
\end{equation}
\end{enumerate}
The properties 1-2 determine function $V^{\phi}(t^*,\cdot)$ uniquely.
\end{theorem}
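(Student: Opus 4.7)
The plan is to derive everything from the Laplace--Carson representation (2.17), namely
$$V^{\phi}(t^*,x)=\int_0^{\infty}\frac{e^{-u/t^*}}{t^*}\,V^{D,\phi}(u,x)\,du,$$
combined with the weak Black--Scholes PDE for $V^{D,\phi}$ supplied by Theorem \ref{th:DD.th1}. The strategy for the PDDE is to multiply the weak BS equation by $e^{-u/t^*}/t^*$, integrate in $u\in(0,\infty)$, and use integration by parts in $u$; the boundary term at $u=0$ produces the $\phi(x)/t^*$ piece, while the spatial Laplacian survives the transform to give $\tfrac12 a^2(x)\,\partial^2_{xx}V^\phi(t^*,x)$. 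The growth assumptions on $\phi$ and the boundedness of $a$ provide the exponential decay in $u$ needed to justify the boundary terms vanishing at $u=\infty$ and the interchange of $\partial^2_{xx}$ with the $u$-integral.

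For part 1 (regularity of $V^\phi(\tau,\cdot)$ for $\tau\geq t^*$), I would first settle the case $\tau=t^*$ and then bootstrap. Theorem \ref{th:DD.th1} gives uniform (in $u$ on compacts) control of $V^{D,\phi}(u,\cdot)$ in the Sobolev class with $L^2$ second derivative and, combined with the weak BS equation and boundedness of $a$, yields $\partial^2_{xx}V^{D,\phi}\in L^2((0,T)\times(L,U))$. The Gaussian-type heat-kernel estimates for the time-homogeneous diffusion $D$ (standard since $a$ is bounded above and away from zero) give uniform exponential-type bounds on $V^{D,\phi}(u,x)$ in $u$ for fixed $x$, so the weight $e^{-u/t^*}$ renders the integral and its spatial derivatives absolutely convergent. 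The zero limits of $V^\phi(t^*,\cdot)$ at $L$ and $U$ are inherited from $V^{D,\phi}(u,\cdot)$, which one verifies through dominated convergence using the zero limits of $\phi$ and the fact that $D$ cannot escape a shrinking neighborhood of a boundary point too quickly in finite time. For $\tau>t^*$, I would use the Markov property of $S$ (Proposition \ref{th:lvg.le1}) to write $V^\phi(\tau,x)=V^{\tilde\phi}(t^*,x)$ for an intermediate payoff $\tilde\phi$ whose regularity is itself granted by the $\tau=t^*$ case applied once, and then iterate.

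For part 2, at any continuity point $x_0$ of $a$ the coefficient is locally Lipschitz, so classical parabolic regularity upgrades $V^{D,\phi}(u,x)$ to $C^{1,2}$ in a neighborhood of $x_0$ uniformly in $u$ in compacts; this lets me take the $u$-integration by parts pointwise in $x$ near $x_0$ and conclude that $V^\phi(t^*,\cdot)$ is $C^2$ near $x_0$ with the PDDE (\ref{eq.pdde.BS}) holding in the classical sense. The main technical obstacle is justifying the two limit interchanges (the boundary term at $u=\infty$ and pulling $\partial^2_{xx}$ under the integral), which both require quantitative decay/growth estimates for $V^{D,\phi}$; I would handle this by combining the $L^2$ bounds from Theorem \ref{th:DD.th1} on bounded time windows with an a-priori exponential bound $|V^{D,\phi}(u,x)|\leq Ce^{\lambda x^2}$ for some $\lambda$ that the $e^{-u/t^*}$ weight dominates.

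For uniqueness, suppose $V_1$ and $V_2$ share the properties listed in 1 and 2, and set $w=V_1-V_2$. Then $w\in C^1(L,U)$ with absolutely continuous derivative, $w(L+)=w(U-)=0$, and at every continuity point of $a$,
$$\tfrac12 a^2(x)\,w''(x)=\tfrac{1}{t^*}\,w(x).$$
Since $a$ has only finitely many discontinuities and $w,w'$ are continuous, this ODE equality extends a.e., and a weak maximum principle argument applies: if $w$ attained a positive interior maximum at a point of continuity $x_0$, then $w''(x_0)\leq 0$ while the ODE forces $w''(x_0)=\tfrac{2}{t^* a^2(x_0)}w(x_0)>0$, a contradiction; near a discontinuity of $a$ one uses the $C^1$ gluing of $w$ and a local comparison with exponential sub/supersolutions on either side to rule out a maximum there as well. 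The symmetric argument excludes negative minima, forcing $w\equiv 0$.
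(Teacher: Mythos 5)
Your overall strategy for part 2 matches the paper's in outline: apply the (Laplace\nobreakdash-)Carson transform to the weak Black--Scholes PDE of Theorem~\ref{th:DD.th1} and integrate by parts in time to pick up the $\phi/t^*$ term. The paper, however, multiplies by the \emph{full Gamma density} $\frac{\tau^{t/t^*-1}e^{-\tau/t^*}}{(t^*)^{t/t^*}\Gamma(t/t^*)}$ for an arbitrary $t\geq t^*$ and works in the distributional sense against test functions $\varphi\in C_0^\infty(L,U)$, which yields a single weak identity for $\partial^2_{xx}V^\phi(t,\cdot)$ valid simultaneously for all $t\geq t^*$. Your plan of settling only $\tau=t^*$ with the exponential kernel and then bootstrapping via the Markov identity $V^\phi(\tau,x)=V^{\tilde\phi}(t^*,x)$, $\tilde\phi=V^\phi(\tau-t^*,\cdot)$, has a gap: it only reaches $\tau\in\{t^*,2t^*,3t^*,\dots\}$, because for $\tau-t^*<t^*$ you have no a priori guarantee that $\tilde\phi$ satisfies the hypotheses of the theorem. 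The restriction $\tau\geq t^*$ is exactly what makes the paper's single integration-by-parts (with the $\tau^{t/t^*-2}$ term) integrable near $\tau=0$, and is the clean way to cover all such $\tau$.

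There is also a concrete error in your part~2 argument. You claim that at a continuity point $x_0$ of $a$ the coefficient is ``locally Lipschitz'' and invoke classical parabolic $C^{1,2}$ regularity for $V^{D,\phi}$. The paper only assumes $a$ is \emph{piecewise continuous} with finitely many first-order discontinuities; continuity at a point gives neither Lipschitz nor H\"older regularity there, and no classical Schauder estimate is available. The paper sidesteps this entirely: once the weak identity gives
\[
\partial^2_{xx}V^\phi(t^*,x)=\frac{2}{t^*\,a^2(x)}\bigl(V^\phi(t^*,x)-\phi(x)\bigr)\quad\text{a.e.},
\]
the right-hand side is itself continuous wherever $a$ is continuous (since $V^\phi(t^*,\cdot)$ and $\phi$ are continuous), so the weak second derivative coincides a.e.\ with a continuous function near $x_0$, upgrading $V^\phi(t^*,\cdot)$ to $C^2$ there without any parabolic regularity theory for $V^{D,\phi}$. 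You should replace your Schauder-type argument with this bootstrap.

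Your uniqueness proof is a genuine, valid alternative to the paper's. The paper proves uniqueness probabilistically: it applies the generalized It\^o formula to $u(D_{t\wedge\zeta^{r,R}})$, takes expectations, sends $r\downarrow L$, $R\uparrow U$, then plugs in the independent time $\Gamma_{t^*}$ and integrates by parts in time to recover $u=V^\phi(t^*,\cdot)$. Your maximum-principle argument---a positive interior maximum at a continuity point of $a$ contradicts $w''=\tfrac{2}{t^*a^2}w>0$, while at a discontinuity of $a$ the $C^1$ matching of $w$ and strict convexity of $w$ on both sides force the critical point to be a minimum---is more elementary and self-contained. It buys simplicity (no stochastic calculus, no stopping-time localization), at the cost of having to verify separately that the supremum of $w$ is attained in $(L,U)$ (which follows from $w(L_+)=w(U_-)=0$). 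Either route is fine; the paper's generalizes more directly to the Dupire-side uniqueness statements, where the probabilistic representation is reused.
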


The proof of Theorem $\ref{th:pdde.BS.th1}$ is given in Appendix A.
Note that the value of a contingent claim with maturity $T>t^*$, at an arbitrary future time $t\in(0,T-t^*]$, $V_t^{\phi}$, can be viewed as a payoff of another claim, with maturity $t$ and the payoff function $V^{\phi}(T-t,\cdot)$. Indeed, Theorem $\ref{th:pdde.BS.th1}$ states that function $V^{\phi}(T-t,\cdot)$ possesses the same properties as $\phi$. Therefore, if the current time to maturity is $t^*+\tau$, with $\tau\geq t^*$, the price function has to satisfy equation (\ref{eq.pdde.BS}), with $V^{\phi}(\tau,\cdot)$ in lieu of $\phi$:
\begin{equation}\label{eq.pdde.BS.1}
\frac{1}{2} a^2(x) \partial^2_{xx} V^{\phi}(t^*+\tau,x) - \frac{1}{t^*} \left( V^{\phi}(t^*+\tau,x) -V^{\phi}(\tau,x) \right) = 0
\end{equation}
The PDDE (\ref{eq.pdde.BS.1}) can be used to compute numerically the price function at all $\tau = nt^*$, for $n=1,2,\ldots$, by solving a sequence of one-dimensional ODE's, as opposed to a parabolic PDE (compare to (\ref{eq.DD.BS})). Namely, the price function can be propagated forward in $\tau$, starting from
the initial condition:
$$
V^{\phi}(0,x)=\phi(x),
$$
and solving (\ref{eq.pdde.BS.1}) recursively, to obtain the values at each next $\tau$.
In fact, we will show that, if $a$ is chosen to be 
piecewise constant, the above ODE can be solved in a closed form.

Furthermore, one can approximate the option value at an arbitrary time to maturity $\tau>0$. For $\tau=t^*$, one can compute option prices via the ODE (\ref{eq.pdde.BS}).
For $\tau\in(0,t^*)\cup(t^*,2t^*)$, the price function $V^{\phi}(\tau,x)$ can be approximated by Monte Carlo methods, or analytically, by computing a numerical solution to (\ref{eq.DD.BS}) and integrating it with the density of $\Gamma_{\tau}$. Having done this, one can use (\ref{eq.pdde.BS.1}) to propagate the price values forward in $\tau$, as described above.

\subsection{Dupire's PDDE for Call Prices}


In this subsection we focus on call options. We will derive equations that, although looking similar to the Black-Scholes equations, are of a very different nature, and are specific to the call (or put) payoff function.
As before, we, first, use equation (\ref{sgs1}) to conclude that the price function of a European call
in a LVG model is a Laplace-Carson transform of its price function in a diffusion model.
Then, similar to the heuristic derivation of (\ref{eq.pdde.BS}), we integrate both sides of the Dupire's equation (\ref{dup}) and, heuristically, derive the Dupire's PDDE for call prices in the LVG model:
\be
\frac{1}{2} a^2(K) \partial^2_{KK} C(t^*,K,x) - \frac{1}{t^*}\left(C(t^*,K,x) - (x-K)^+\right) = 0
\label{cfpdde}
\ee

One of the main obstacles in making the above derivation rigorous is that, as stated in Theorem $\ref{th:DD.th2}$, the Dupire's equation (\ref{dup}) can only be understood in a very weak sense. Let us show how to overcome this obstacle and prove that the call prices in LVG model do, indeed, satisfy equation (\ref{cfpdde}).

\begin{lemma}\label{le:pdde.DUP.le1}
For any $x\in(L,U)$, the call price function $C(t^*,K,x)$ (given by (\ref{sgs1})) is continuously differentiable, as a function of $K\in(L,U)$, and its derivative is absolutely continuous. Moreover, the second order derivative $\partial^2_{KK}C(t^*,\,.\,,x)$ is the density of $S_{t^*}$ on $(L,U)$.
\end{lemma}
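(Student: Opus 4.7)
The strategy is to reduce the regularity of $K\mapsto C(t^*,K,x)$ to the fact that the law of $S_{t^*}$ under $\QQ^x$, restricted to $(L,U)$, is absolutely continuous with respect to Lebesgue measure. The starting point is the subordination identity $S_{t^*}=D_{\Gamma_{t^*}}$ with $D$ and $\Gamma$ independent. Conditioning on $\Gamma_{t^*}$, whose density is $u\mapsto (1/t^*)e^{-u/t^*}$, and applying Fubini yields, for every Borel $B\subset(L,U)$,
\begin{equation*}
\QQ^x\bigl(S_{t^*}\in B\bigr) \;=\; \int_0^\infty \frac{e^{-u/t^*}}{t^*}\,\QQ^{D,x}\bigl(D_u\in B\bigr)\,du,
\end{equation*}
which is the set-valued analogue of the Laplace--Carson formula (\ref{sgs1}).

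The key technical ingredient is the classical fact that, under the standing assumptions on $a$ (piecewise continuous, bounded above and away from zero), the diffusion $D$ of (\ref{D}) admits a sub-probability Lebesgue transition density $p^D_u(x,\cdot)$ on $(L,U)$ for every $u>0$; this can be obtained by writing $D$ as a time-changed Brownian motion via its scale and speed construction (the speed measure being equivalent to Lebesgue under the given bounds on $a$), or from Aronson's Gaussian bounds for the associated uniformly elliptic operator. Setting
\begin{equation*}
f(y) \;:=\; \int_0^\infty \frac{e^{-u/t^*}}{t^*}\,p^D_u(x,y)\,du,\qquad y\in(L,U),
\end{equation*}
Tonelli's theorem gives $f\in L^1((L,U))$ with $\QQ^x(S_{t^*}\in B)=\int_B f(y)\,dy$ for every Borel $B\subset(L,U)$. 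Thus the law of $S_{t^*}$ restricted to $(L,U)$ is absolutely continuous with density $f$, and in particular has no atoms inside $(L,U)$.

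Having the density, the regularity claim is a short bookkeeping step. Writing $(y-K)^+=\int_K^\infty\bone_{\{y>K'\}}\,dK'$, Fubini and $\QQ^x(S_{t^*}>K')=0$ for $K'\geq U$ give
\begin{equation*}
C(t^*,K,x) \;=\; \int_K^U \QQ^x\bigl(S_{t^*}>K'\bigr)\,dK',\qquad K\in(L,U).
\end{equation*}
Absence of atoms in $(L,U)$ makes $K\mapsto\QQ^x(S_{t^*}>K)$ continuous, so $\partial_K C(t^*,K,x)=-\QQ^x(S_{t^*}>K)$ exists and is continuous, giving $C(t^*,\cdot,x)\in C^1((L,U))$. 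Moreover $\QQ^x(S_{t^*}>K)$ differs from $\int_K^U f(y)\,dy$ by the $K$-independent constant $\QQ^x(S_{t^*}=U)$, so $\partial_K C(t^*,\cdot,x)$ is absolutely continuous on $(L,U)$ with a.e.\ derivative $f$, which identifies $\partial^2_{KK}C(t^*,\cdot,x)$ with the density of $S_{t^*}$. The one genuinely non-trivial ingredient is the existence of the Lebesgue density for $D_u$ just mentioned; the remainder is elementary measure-theoretic bookkeeping.
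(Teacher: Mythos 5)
Your proof is correct and takes a genuinely different route from the paper's. The paper establishes that $S_{t^*}$ has a density on $(L,U)$ by a PDE argument: it applies the already-proven Theorem \ref{th:pdde.BS.th1} to the function $u(x)=\EE^x\bigl(\rho_\varepsilon(S_{t^*})\bigr)$, where $\rho_\varepsilon$ is a smooth bump dominating $\bone_{[K,K+\varepsilon]}$, then invokes the Krylov-type $W^2_2$ a priori estimate for the PDDE to control $\|u\|_{W^2_2}$ by $\|\rho_\varepsilon\|_{L^2}$, and finally uses Morrey's embedding to get the uniform bound $\sup_x u(x)\leq c\sqrt{\varepsilon}$, so that $\QQ^x(S_{t^*}\in[K,K+\varepsilon])\leq c\sqrt\varepsilon$. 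You instead push the density question down to the unsubordinated diffusion $D$, appeal to the classical existence of a (jointly measurable, indeed continuous) transition density for a one-dimensional regular driftless diffusion with speed measure $2a^{-2}(y)\,dy$ equivalent to Lebesgue, and then mix over the exponential law of $\Gamma_{t^*}$ by Tonelli. Your route is shorter and conceptually cleaner, at the cost of importing an ingredient outside the paper's toolkit; the paper's Sobolev route has the advantage that it needs nothing beyond the elliptic regularity machinery it already set up for Theorems \ref{th:DD.th1} and \ref{th:pdde.BS.th1}, and it directly quantifies the modulus of continuity of $K\mapsto\QQ^x(S_{t^*}\leq K)$. One small imprecision on your side: Aronson's Gaussian bounds, as usually stated, are for divergence-form operators, whereas $\tfrac12 a^2\partial^2_{xx}$ is non-divergence form with merely piecewise continuous coefficient; in dimension one the existence of a transition density with respect to the speed measure for a regular diffusion (see Revuz--Yor Ch.\ VII or Rogers--Williams Vol.\ 2) is the right citation, and it is the scale/speed route in your parenthetical, not Aronson, that does the work. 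The final bookkeeping — writing $C(t^*,K,x)=\int_K^U\QQ^x(S_{t^*}>K')\,dK'$ and observing that $\QQ^x(S_{t^*}>\cdot)$ differs from $\int_\cdot^U f(y)\,dy$ by the $K$-independent boundary mass $\QQ^x(S_{t^*}\geq U)$ — is correct and essentially matches the identification $\QQ^x(S_{t^*}\leq K)=\partial_K C(t^*,K,x)$ that the paper carries over from Theorem \ref{th:DD.th2}.
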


The proof of Lemma $\ref{le:pdde.DUP.le1}$ is given in Appendix A. Using this result, it is not hard to derive the desired equation. 

\begin{theorem}\label{th:pdde.DUP.th1}
For any $x\in(L,U)$, the call price function $C(t^*,K,x)$ (given by (\ref{sgs1})) satisfies the boundary conditions: 
\begin{equation}\label{eq.LVG.pdde.boundcond}
\lim_{K\downarrow L} \left( C(t^*,K,x) - (x-K)^+\right) = \lim_{K\uparrow U} C(t^*,K,x) = 0
\end{equation}
In addition, the partial derivative of the call price function, $\partial_K C(t^*,K,x)$, is absolutely continuous, and its second derivative, $\partial^2_{KK} C(t^*,K,x)$, is square integrable in $K\in(L,U)$. Moreover, everywhere except the points of discontinuity of $a$, $\partial^2_{KK} C(t^*,\cdot,x)$ is continuous and satisfies (\ref{cfpdde}).  The call price function $C(t^*,\cdot,x)$ is determined uniquely by these properties. 
\end{theorem}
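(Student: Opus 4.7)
The plan is to establish each claim by combining (\ref{sgs1}), the weak Dupire equation from Theorem \ref{th:DD.th2}, and Lemma \ref{le:pdde.DUP.le1}. For the boundary conditions (\ref{eq.LVG.pdde.boundcond}), I would write $C(t^*,K,x)=\EE^x[(S_{t^*}-K)^+]$ and use absorption (so $S_{t^*}\in[L,U]$ almost surely) together with the martingale property $\EE^x S_{t^*}=x$. The limit $K\uparrow U$ then follows by dominated convergence. For $K\downarrow L$, restricting to $K<x$ converts $C(t^*,K,x)-(x-K)^+$ into $\EE^x[(K-S_{t^*})^+]$ via put-call parity combined with the martingale property, and dominated convergence sends this to zero.

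The heart of the argument is deriving (\ref{cfpdde}). I would test against an arbitrary bounded Borel $\phi$ with compact support in $(L,U)$. Multiplying (\ref{sgs1}) by $\phi(K)$, integrating in $K$, and applying (\ref{eq.DD.dup1}) to the inner driftless-diffusion integral yields
\begin{equation*}
\int_L^U \bigl[C(t^*,K,x)-(x-K)^+\bigr]\phi(K)\,dK = \frac{1}{2}\int_0^\infty \frac{e^{-u/t^*}}{t^*}\int_0^u\!\int_L^U a^2(K)\phi(K)\,d(\partial_K C^D(s,K,x))\,ds\,du.
\end{equation*}
A Fubini interchange, justified by the boundedness of $a^2\phi$ with compact support and the fact that $\partial_K C^D(s,\cdot,x)$ is a probability measure by Theorem \ref{th:DD.th2}, together with the identity $\int_s^\infty \frac{e^{-u/t^*}}{t^*}\,du = e^{-s/t^*}$, collapses the right-hand side to $\frac{t^*}{2}\int_L^U a^2(K)\phi(K)\,d\mu_S(K)$, where $\mu_S$ is the mixture of the laws of $D_s$ against the unbiased gamma density. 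Conditioning on $\Gamma_{t^*}$ identifies $\mu_S$ with the law of $S_{t^*}$, and by Lemma \ref{le:pdde.DUP.le1} it has density $\partial^2_{KK}C(t^*,\cdot,x)$ on $(L,U)$. Since $\phi$ was arbitrary, the fundamental lemma of the calculus of variations yields (\ref{cfpdde}) almost everywhere in $(L,U)$.

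Rearranging (\ref{cfpdde}) as $\partial^2_{KK}C(t^*,K,x) = \frac{2}{t^*a^2(K)}[C(t^*,K,x)-(x-K)^+]$ then handles the remaining regularity claims. The bracket is continuous on $(L,U)$ with vanishing limits at both endpoints, hence uniformly bounded there, so $\partial^2_{KK}C(t^*,\cdot,x)$ lies in $L^\infty(L,U)$; since it is also an integrable density, $L^2$-integrability follows, and continuity at every continuity point of $a$ is immediate from the identity. For uniqueness, let $w$ denote the difference of two candidate solutions: $w$ is $C^1$ with absolutely continuous derivative, vanishes at both endpoints, and satisfies the homogeneous equation $\frac{1}{2}a^2(K)w''(K)=\frac{1}{t^*}w(K)$ almost everywhere. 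A maximum-principle argument excludes a positive interior maximum, since at such a $K_0$ one has $w'(K_0)=0$ while the equation forces $w''>0$ in a neighborhood, so $w'$ strictly increases past $K_0$ and contradicts extremality; a symmetric argument excludes negative interior minima, yielding $w\equiv 0$. The main obstacle throughout is the measure-theoretic bookkeeping in the second step, in particular the Fubini interchange and the identification of the time-averaged measure with the law of $S_{t^*}$.
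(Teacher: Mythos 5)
Your proposal is correct but takes a genuinely different route from the paper at two points. To derive (\ref{cfpdde}), you pair (\ref{sgs1}) directly with (\ref{eq.DD.dup1}), apply Fubini to the $(u,s)$ double integral, and collapse the outer integral using $\int_s^\infty e^{-u/t^*}/t^*\,du = e^{-s/t^*}$; the paper instead differentiates (\ref{eq.DD.dup1}) in $\tau$, multiplies by the exponential density and a test function, integrates over $[0,T]$, performs an integration by parts in $\tau$, and lets $T\to\infty$. The two manipulations are equivalent in effect (both then invoke Lemma \ref{le:pdde.DUP.le1} and the fundamental lemma of calculus of variations to obtain (\ref{cfpdde}) a.e.\ and hence at all continuity points of $a$), but your version is more direct and avoids justifying the $\tau$-differentiation of (\ref{eq.DD.dup1}). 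The more substantial difference is in the uniqueness step: you run an elementary ODE maximum-principle argument on the difference $w$ of two candidate solutions, using that $w$ is $C^1$ with absolutely continuous derivative, vanishes at the boundary, and satisfies $a^2 w'' = (2/t^*)\,w$ a.e., so a positive interior maximum would force $w''>0$ a.e.\ on a one-sided neighborhood where $w'$ starts at zero, a contradiction, and symmetrically for negative minima. The paper instead reuses the a priori $W^2_2$ estimate from \cite{KrylovVMOelliptic} (Remark 3.3), already invoked in the proof of Lemma \ref{le:pdde.DUP.le1}, with zero source term. Your maximum-principle argument is more self-contained and avoids the external Sobolev machinery, though it requires the (minor) care that the equation holds only a.e.\ since $a$ may be discontinuous, which is all your argument uses; the paper's route is essentially free once the Krylov estimate is on the table for Lemma \ref{le:pdde.DUP.le1}. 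The boundary conditions via absorption, the martingale property, dominated convergence, and the $L^\infty\cap L^1\Rightarrow L^2$ step for $\partial^2_{KK}C(t^*,\cdot,x)$ obtained by rearranging (\ref{cfpdde}) are in line with the paper's brief remarks.
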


It is shown in the next section how the PDDE (\ref{cfpdde}) can be used for exact calibration of the LVG model to market prices of call options with a single maturity. In order to handle the case of multiple maturities, we need a mild generalization of (\ref{cfpdde}).
Note that (\ref{cfpdde}) can be interpreted as a no arbitrage constraint holding at $t=0$
between the value of a discrete calendar spread,
$$
\frac{C(t+t^*,K,x) - C(t,K,x)}{t^*},
$$
and the value of a limiting butterfly spread,
$$
\sop{}{K} C(t+t^*,K,x) = \lim\limits_{\triangle K \downarrow 0}
\frac{ C(t+t^*,K +\triangle K,x) - 2 C(t+t^*,K,x) + C(t+t^*,K - \triangle K,x)}
{(\triangle K)^2}
$$
This no arbitrage constraint would hold at all prior times
$\tau \leq 0$
and starting at any spot price level $x$.
Furthermore, due to the time homogeneity of the $S^x$ process,
we may write:
\be
\frac{a^2(K)}{2} \sop{}{K} C(\tau+t^*,K,x) - \frac{1}{t^*} \left(C(\tau+t^*,K,x) - C(\tau,K,x)\right) = 0.
\label{cfpdde1}
\ee


\begin{theorem}\label{th:pdde.DUP.th2}
For any $\tau \geq 0$ and $x\in(L,U)$, the call price function $C(\tau+t^*,\cdot,x)$ (given by (\ref{eq.callPriceFunc})) is the unique function that satisfies the boundary conditions (\ref{eq.LVG.pdde.boundcond}), has an absolutely continuous first derivative and a square integrable second derivative, and such that, everywhere except the points of discontinuity of $a$, its second derivative is continuous and satisfies (\ref{cfpdde1}).
\end{theorem}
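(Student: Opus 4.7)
The plan is to reduce Theorem \ref{th:pdde.DUP.th2} to the case $\tau = 0$ already handled in Theorem \ref{th:pdde.DUP.th1}, by exploiting the Markov property of $S$ under $\QQ^x$ from Proposition \ref{th:lvg.le1}. The tower property yields
$$
C(\tau+t^*,K,x) = \EE^x \left[ C(t^*, K, S_\tau) \right]
\qquad \text{and} \qquad
C(\tau, K, x) = \EE^x\left[(S_\tau - K)^+\right],
$$
so (\ref{cfpdde1}) will be obtained by evaluating the PDDE of Theorem \ref{th:pdde.DUP.th1} at the random spot $y = S_\tau$ and integrating against the law of $S_\tau$ under $\QQ^x$. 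The whole argument is centered on making this ``differentiate under the expectation'' step rigorous.

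First I would fix $\tau \geq 0$ and $x \in (L,U)$ and apply Theorem \ref{th:pdde.DUP.th1} pointwise in $y \in (L,U)$ to obtain
$$
\frac{a^2(K)}{2} \partial^2_{KK} C(t^*, K, y) = \frac{1}{t^*}\left(C(t^*, K, y) - (y-K)^+\right)
$$
for every $K$ off the finite set where $a$ is discontinuous, together with the accompanying regularity. Then I would integrate against the law of $S_\tau$ and swap $\partial^2_{KK}$ with $\EE^x$. Two facts supply the domination needed both for the swap and for the square integrability of $\partial^2_{KK} C(\tau+t^*, \cdot, x)$: Lemma \ref{le:pdde.DUP.le1} identifies $\partial^2_{KK} C(t^*, K, y)$ as the density of $S_{t^*}$ under $\QQ^y$ (in particular nonnegative with unit mass in $K$), and $a^2$ is uniformly bounded away from $0$ and $\infty$. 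The right-hand side collapses to $\frac{1}{t^*}(C(\tau+t^*, K, x) - C(\tau, K, x))$, yielding (\ref{cfpdde1}). The regularity of $C(\tau+t^*, \cdot, x)$ and the boundary conditions (\ref{eq.LVG.pdde.boundcond}) transfer from the analogous properties of $C(t^*, \cdot, y)$ in Theorem \ref{th:pdde.DUP.th1} by the same dominated-convergence argument.

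For uniqueness, if $\tilde C$ is another function with the stated properties, then $F := \tilde C - C(\tau+t^*, \cdot, x)$ satisfies the homogeneous ODE $\tfrac{a^2(K)}{2} F''(K) = \tfrac{1}{t^*} F(K)$ (weakly), with $F \to 0$ at both boundaries (the $(x-K)^+$ terms cancel at $L$). A maximum principle argument closes this out: at a putative positive interior maximum $K_0$, continuity of $F''$ off the discontinuities of $a$ forces $F''(K_0) \leq 0$, while the ODE demands $F''(K_0) = 2F(K_0)/(t^* a^2(K_0)) > 0$, a contradiction; a symmetric argument handles a negative interior minimum, so $F \equiv 0$.

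I expect the main obstacle to be the interchange of $\partial^2_{KK}$ and $\EE^x$, complicated by the fact that $\partial^2_{KK} C(t^*, \cdot, y)$ is only guaranteed to exist (and be continuous) off the finite discontinuity set of $a$, and that one must simultaneously control the $y$-dependence of this second derivative. The cleanest safeguard is to perform the interchange only after testing against a smooth, compactly-supported $\phi(K)$, to invoke the weak formulation (\ref{eq.DD.dup1}) of Theorem \ref{th:DD.th2} in conjunction with Fubini, and only to strip off the test function at the end using the regularity of $C(\tau+t^*, \cdot, x)$ established along the way.
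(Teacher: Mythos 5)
Your existence argument is essentially the paper's: both exploit the Markov property of $S$ (Proposition \ref{th:lvg.le1}) to write $C(\tau+t^*,K,x)=\EE^x C(t^*,K,S_\tau)$ and $C(\tau,K,x)=\EE^x(S_\tau-K)^+$, apply Theorem \ref{th:pdde.DUP.th1} at the random spot $y=S_\tau$, and swap $\partial^2_{KK}$ with $\EE^x$ via Fubini, using the identification in Lemma \ref{le:pdde.DUP.le1} of $\partial^2_{KK}C(t^*,\cdot,y)$ as the density of $S_{t^*}$ under $\QQ^y$ to control the interchange. Your test-function safeguard is in the same spirit as the paper's bookkeeping.

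Your uniqueness argument, however, is genuinely different. The paper reduces uniqueness to a Sobolev a priori estimate for elliptic equations with VMO coefficients (Remark 3.3 in the reference used in the proof of Lemma \ref{le:pdde.DUP.le1}, applied with zero right-hand side), which immediately forces the difference $u$ to vanish. Your maximum-principle argument is more elementary and avoids that machinery, but as written it has a small gap: the phrase ``continuity of $F''$ off the discontinuities of $a$ forces $F''(K_0)\leq 0$'' tacitly assumes the positive interior maximum $K_0$ is not a discontinuity point of $a$. If it is, $F''$ may jump there and the second-derivative test does not directly apply. The fix is short: the one-sided limits of $F''$ at $K_0$ both equal $2F(K_0)/\left(t^*a^2(K_0\pm)\right)>0$, so $F''>0$ on a punctured neighborhood of $K_0$; since $F'$ is continuous with $F'(K_0)=0$, this makes $F'$ strictly increasing through $K_0$, hence $F'<0$ just to the left and $F'>0$ just to the right, so $K_0$ is a strict local minimum, contradiction. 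You should also note that when $L=-\infty$ or $U=\infty$ the positive supremum is indeed attained interiorly, which follows from the boundary behavior $F\to 0$ at both ends together with continuity. With these two patches your maximum-principle route closes uniqueness with less apparatus than the paper's Sobolev-estimate route; the trade-off is that the Sobolev approach is more robust (it does not require analyzing the behavior of $F''$ at the exceptional points at all).
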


A rigorous proof of Theorem $\ref{th:pdde.DUP.th2}$ is given in Appendix A.

\begin{remark}
We note that European put prices also satisfy the PDDE (\ref{cfpdde1}). This follows easily from the put-call parity.
\end{remark}

One can differentiate (\ref{cfpdde1}), to obtain, formally, a PDDE for deltas, gammas, and higher order spatial derivatives of
option prices. We, however, do not provide a rigorous derivation of such equations in this paper.
As in the case of Black-Scholes PDDE (\ref{eq.pdde.BS.1}), equation (\ref{cfpdde1}) can be used to compute numerically the price function at all $\tau = nt^*$, for $n=1,2,\ldots$, by solving a sequence of one-dimensional ODE's (\ref{cfpdde1}).
Having an approximation of $C(\tau,K,x)$, for $\tau\in(0,t^*)$, one can, similarly, compute the price values for all $\tau\geq0$. 

In the next section, we show that, if $a$ is chosen to be piecewise constant, the PDDE (\ref{cfpdde1}) can be solved in a closed form. We also show how this equation can be used to calibrate the model to market prices of call options of multiple strikes and maturities.

\section{Calibration}
\label{se:calib}

In the previous sections, we assumed that
the local variance function
$a^2(x)$ and the characteristic time $t^*$ are
somehow known.
In this section, we first discuss how one can deduce $a$ and $t^*$ from a family of observed call prices, for a single maturity and continuum of strikes. We, then, show how a mild extension of the LVG model can be calibrated to a finite number of price curves (or, implied smiles), for different maturities, and, still, continuum of strikes. Finally, we consider a more realistic setting and show how these continuous price curves (or, implied volatility curves) can be reconstructed from a finite number of option prices, by means of a LVG model with piecewise constant diffusion coefficient.
The resulting calibration algorithm only requires solution to a single linear feasibility problem and a finite number of one-dimensional root-search problems. It allows for exact calibration to an arbitrary (finite) number of strikes and maturities!

\subsection{Calibrating LVG model to a Continuous Smile}

Suppose that we are given current prices of a family of call options, $\left\{\bar{C}(K)\right\}$, for a single time to maturity $\tau^*$ and all strikes $K$ changing in the interval $(L,U)$. We can also observe the current level of underlying $x\in(L,U)$. Of course, we can, equivalently, assume the availability of implied smile and the current level of underlying. We assume that:
\begin{enumerate}

\item $\lim_{K\downarrow L} \left( \bar{C}(K) - (x-K)^+ \right) = \lim_{K\uparrow U} \bar{C}(K) = 0$;

\item ${\bar{C}}''$ is strictly positive and continuous everywhere in $[L,U]$, except a finite number of discontinuities of the first order;

\item $\left(\bar{C}(K) - (x-K)^+\right)/\bar{C}''(K)$ is bounded from above and away from zero, uniformly over $K\in(L,U)$. 

\end{enumerate}

Then, we can set $t^*=\tau^*$ and
$$
a^2(K) = \frac{2}{\tau^*}\frac{\bar{C}(K) - (x-K)^+}{\bar{C}''(K)}
$$
It is easy to see, that, under the above assumptions, $\bar{C}''$ is square integrable.
Then, Theorem $\ref{th:pdde.DUP.th2}$ implies that the LVG model with the above parameters reproduces the market call prices $\left\{ \bar{C}(K) \right\}$.

The method presented here is an example of an explicit exact calibration of a time-homogeneous martingale model to a continuous smile of an arbitrary (single) maturity. A more general construction, which, however, may lead to time-inhomogeneous dynamics, is described in \cite{CoxHobsonObloj}.

To motivate the above conditions 1-3, recall that our standing assumptions include zero carrying costs and the existence of a martingale measure $\QQ$ which produces prices of all contingent claims as the expectations of respective payoffs (cf. Subsection \ref{subse:ModelAss}). It is, then, natural to require that the observed market prices are consistent with this assumption: i.e. they are given by expectations of respective payoffs in some martingale model. In this case, the market call prices have to satisfy condition 1 above, along with $\bar{C}''(K)\geq 0$ and $\bar{C}(K) - (x-K)^+ \geq 0$. Thus, the above conditions 1-3 can be viewed as a stronger version of the no-arbitrage assumption. A more realistic setting, with only a finite number of traded options, is considered in Subsection \ref{subse:ContFromFinite}, where slightly less restrictive assumptions on the market data are introduced.



Now suppose that we plan to
recalibrate the model on a daily basis.
In the single smile setting, we may distinguish two types of options markets.
The first type is a fixed term market, where the time to maturity of the single observable smile remains constant as
calibration time moves forward. An example of a fixed term options market is the OTC
FX options market for an EM currency, where only one term is liquid.
The other type of options market that we may distinguish is a fixed expiry market, where
the time to maturity of the single observable smile declines linearly toward zero as
calibration time moves forward.
An example of a fixed expiry options market is the market for options on commodity futures\footnote{These
options are American-style but are rarely exercised early.}.

In a fixed term options market, the $t^*$ parameter would remain constant as
calibration time moves forward. In particular, if the result of the first calibration is $(a,t^*)$, then, it is \emph{possible} (albeit it unlikely) that the output of all subsequent calibrations will be the same: for example, if the true dynamics of the underlying are, indeed, given by an LVG model with these parameters.
Now consider a fixed expiry options market.
If we calibrate to one day options on a daily basis, there is no issue.
However, if we calibrate to longer dated options on a daily basis, then
the $t^*$ parameter would
drop through calibration time as we near maturity.
Hence in this case, calibrating an LVG model, we know a priori that the result of the next calibration will always be different from the current one. In particular, even if the underlying truly follows an LVG model with the parameters captured by the initial calibration, $(a,t^*)$, each subsequent recalibration of the model will lead to different parameter values. 
In this case, the LVG model can be used as a tool for arbitrage-free interpolation of option prices, but one cannot have much faith in the model itself.

\subsection{Calibrating to Multiple Continuous Smiles}
\label{subse:calib.multsmiles}

For many types of underlying, options of multiple maturities and strikes trade liquidly.
For OTC currency options, the maturity dates at which there is price transparency move through
calendar time, so that the time to maturity for each liquid option remains constant as
calendar time evolves.
In contrast, for listed stock options,
the maturity dates at which there is price transparency are fixed calendar times.
Hence, the time to maturity of each listed stock option shortens as
calendar time evolves.
In this section, we address the issue of calibrating to multiple smiles.
In order to do this, we, first, develop a non-homogeneous extension of LVG model.

Recall that, if one wishes to match market-given quotes at a single strike and multiple discrete maturities,
then one can extend the constant volatility Black-Scholes model
by assuming that the square of volatility is a piecewise constant function of time.
Analogously, in order to match market-given smiles at several discrete maturities,
we extend the LVG model so that the local variance function $a^2$ and the parameter $t^*$
are piecewise constant functions of time.
Consider a finite collection of LVG families, 
$$
\left\{S^m,\left\{\QQ^{m,x}\right\}_{x\in\RR}\right\}_{m=1}^M,
$$ 
with the parameters $\left\{a_m,t^*_m,L_m,U_m\right\}_{m=1}^M$, respectively. 
Recall that we extend the definition of each LVG process to all initial conditions $x\in\RR$ by assuming that $\QQ^{m,x}\left(S^m_t=x,\,\,\text{for all}\,\,t\geq0\right)=1$, for any $x\in \RR\setminus(L_m,U_m)$.
For $m=1,\ldots,M$, we denote by $\QQ^{m,x}_t$ the marginal distribution of $S^{m}_t$ under $\QQ^x$. We assume that $\QQ^{M+1,x}_t=\QQ^{M,x}_t$. We also introduce a sequence of times $\left\{T_m\right\}_{m=0}^{M+1}$ via $T_m = \sum_{i=1}^m t^*_i$, $T_0=0$, and $T_{M+1}=\infty$.
Finally, we define the {\bf non-homogeneous LVG process} $\tilde{S}$, with parameters $\left\{a_m,t^*_m,L_m,U_m\right\}_{m=1}^M$, as a non-homogeneous Markov process, whose transition kernel is defined, for all $t\in[T_{i},T_{i+1})$ and $T\in [T_{i+j},T_{i+j+1})$, with $t\leq T$, $0\leq i\leq M$, and $0\leq j \leq M-i$, as follows:
$$
p(t,x;T,B) = \int_{\RR}\int_{\RR}\cdots\int_{\RR}  \QQ^{i+j+1,x_{i+j}}_{T-T_{i+j}}(B) 
\QQ^{i+j,x_{i+j-1}}_{T_{i+j}-T_{i+j-1}}(dx_{i+j})\cdots 
\QQ^{i+2,x_{i+1}}_{T_{i+2}-T_{i+1}}(dx_{i+2}) \QQ^{i+1,x}_{T_{i+1}-t}(dx_{i+1}),
$$
where $B\subset \RR$ is an arbitrary Borel set. Notice that the above integral is well defined, since every mapping $x\mapsto \QQ^{m,x}_t$ is universally measurable (cf. Definition 2.5.11, on p. 74, in \cite{KaratzasShreve}).
Using the above transition kernel, for any fixed initial condition $x\in\RR$, it is a standard exercise to construct a candidate family of finite-dimensional distributions of $\tilde{S}$, so that it is consistent. Then, due to the Kolmogorov's existence theorem, there exists a unique probability measure $\tilde{\QQ}^x$, on the space of paths, that reproduces these finite-dimensional distributions. For every initial condition $x\in\RR$, the LVG process $\tilde{S}$ is constructed as a canonical process on the space of paths, under the measure $\tilde{\QQ}^x$.

Intuitively, the process $\tilde{S}$ evolves as $S^{m+1}$, between the times $T_m$ and $T_{m+1}$, with the initial condition being the left limit of the process $\tilde{S}$ at the end of the previous interval, $[T_{m-1},T_m]$ . However, making such definition rigorous is not straightforward, since it requires certain properties of the processes $S^{m}$ as functions of the initial condition $x\in\RR$ (e.g. the measurability in $x$). Recall that, due to potential discontinuity of $a$, we had to construct each LVG process $S^{m}$ using the weak, rather than strong, solutions to equation (\ref{D}). This, in particular, makes it rather hard to analyze the dependence of $S^{m}$ on the initial condition $x$ in the almost sure sense. This is why we introduced the family of measures $\left\{\QQ^{D,x}\right\}$, and, in turn, $\left\{\QQ^{x}\right\}$, and established the measurability of these families as functions of $x$. However, once the distribution of $\tilde{S}$ is constructed, we no longer need to keep track of its dependence on the initial condition. In particular, it is not necessary to construct the process $\tilde{S}$ on a canonical space (which supports all measures $\tilde{\QQ}^x$): one can construct $\tilde{S}$ for every different value of the initial condition $x$ separately, possibly, on a different probability space. We chose to construct the process $\tilde{S}$ as shown above, only in order to be consistent with the constructions made in preceding sections.



Let us compute the call prices in a model where the risk-neutral dynamics of the underlying are given by $\tilde{S}$ and the market filtration, $\mathcal{F}^{\tilde{S}}$, is generated by the process.
Recall that $\tilde{S}$ is a non-homogeneous Markov process. In particular, for $T_m\leq t\leq T\leq T_{m+1}$, we can compute the time $t$ price of a European call option with strike $K$ and maturity $T$ as follows:
$$
\tilde{C}_t(T,K) = \tilde{\EE}^x\left((\tilde{S}_T - K)^+ \,|\,\mathcal{F}^{\tilde{S}}_t \right) 
= \int_{\RR} p(t,\tilde{S}_t;T,y) (y-K)^+ dy
= \int_{\RR} (y-K)^+ \QQ^{m+1,\tilde{S}_t}_{T-t}(dy)
$$
\begin{equation}\label{eq.callPrice.nonhomLVG}
= C^{m+1}(T-t,K,\tilde{S}_t),
\end{equation}
where $\tilde{\EE}^x$ denotes the expectation with respect to $\tilde{\QQ}^x$ and $C^{m+1}$ is the call price function associated with the LVG model $S^{m+1}$ (given by (\ref{eq.callPriceFunc})). Notice that the time zero call price in a non-homogeneous LVG model, with initial condition $x$, is given by
$$
\tilde{C}_0(T,K) = \tilde{C}(T,K,x) 
$$
where the call price function $\tilde{C}$ is defined as
$$
\tilde{C}(T,K,x)= \tilde{\EE}^x\left(\tilde{S}_T - K\right)^+.
$$
Then
$$
\tilde{C}(T_{m+1},K,x) - \tilde{C}(T_m,K,x)
=\tilde{\EE}^x \left( \tilde{C}_{T_m}(T_{m+1},K) - \tilde{C}_{T_m}(T_{m},K) \right)
$$
$$
= \tilde{\EE}^x \left( C^{m+1}(T_{m+1}-T_m,K,\tilde{S}_{T_m}) - (\tilde{S}_{T_m} - K)^+ \right)
= \tilde{\EE}^x \left( C^{m+1}(t^*_{m+1},K,\tilde{S}_{T_m}) - (\tilde{S}_{T_m} - K)^+ \right)
$$
$$
= \frac{t^*_{m+1}}{2} a_{m+1}^2(K) \tilde{\EE}^x \left( \partial^2_{KK} C^{m+1}(t^*_{m+1},K,\tilde{S}_{T_m}) \right)
= \frac{t^*_{m+1}}{2} a_{m+1}^2(K) \partial^2_{KK} \tilde{\EE}^x C^{m+1}(t^*_{m+1},K,\tilde{S}_{T_m})
$$
$$
= \frac{t^*_{m+1}}{2} a_{m+1}^2(K) \partial^2_{KK} \EE^x C_{T_m}(T_m+t^*_{m+1},K)
= \frac{t^*_{m+1}}{2} a_{m+1}^2(K) \partial^2_{KK} \tilde{C}(T_{m+1},K,x)
$$
where we made use of (\ref{cfpdde}). In the above, we interchanged the differentiation and expectation, using the Fubini's theorem. To justify this, we recall that the function $\partial^2_{KK} C^{m+1}(t^*_{m+1},K,\cdot)$ is measurable, as a limit of continuous functions (since this derivative exists in a classical sense, everywhere except a finite number of points $K$). In addition, it is absolutely bounded, due to equation (\ref{cfpdde1}) and the fact that $C^{m+1}(t^*_{m+1},K,x) - (x-K)^+$ vanishes at $x\uparrow U$ and $x\downarrow L$ (which, in turn, can be shown by a dominated convergence theorem)
Thus, we conclude that, for any $m=1,\ldots,M$, the call price function $\tilde{C}(T_m,\cdot,x)$ satisfies
\be
\frac{1}{2} a_{m}^2(K) \partial^2_{KK} \tilde{C}(T_m,K,x) - \frac{1}{t^*_m}\left(\tilde{C}(T_m,K,x) - \tilde{C}(T_{m-1},K,x)\right) = 0
\label{eq.calib.callPDDE}
\ee

\begin{theorem}\label{th:calib.multsmiles.th1}
For any $x\in(L_m,U_m)$, consider the time zero call price in a non-homogeneous LVG model, $\tilde{C}(T_m,K,x)$ (given by (\ref{eq.callPrice.nonhomLVG})). Then, $\tilde{C}(T_m,K,x)$ is the unique function of $K\in(L_m,U_m)$ that satisfies the boundary conditions (\ref{eq.LVG.pdde.boundcond}), has an absolutely continuous first derivative and a square integrable second derivative, and such that its second derivative is continuous and satisfies (\ref{eq.calib.callPDDE}) everywhere except the points of discontinuity of $a_m$.
\end{theorem}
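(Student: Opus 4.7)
The plan is to combine two ingredients: the PDDE $(\ref{eq.calib.callPDDE})$ has, in effect, already been derived in the paragraph immediately preceding the theorem --- by writing $\tilde{C}(T_m,K,x)=\tilde{\EE}^x\bigl[C^{m}(t^*_m,K,\tilde{S}_{T_{m-1}})\bigr]$, applying the one-step PDDE $(\ref{cfpdde})$ from Theorem $\ref{th:pdde.DUP.th1}$ pointwise in $y=\tilde{S}_{T_{m-1}}$, and swapping $\partial^2_{KK}$ with $\tilde{\EE}^x$ via Fubini. What remains for existence is to verify the stated boundary and regularity conditions, after which uniqueness is a short maximum-principle argument.

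For the existence part, the boundary conditions $(\ref{eq.LVG.pdde.boundcond})$ follow from the probabilistic representation $\tilde{C}(T_m,K,x)=\tilde{\EE}^x(\tilde{S}_{T_m}-K)^+$ together with monotone convergence and the martingale identity $\tilde{\EE}^x\tilde{S}_{T_m}=x$: as $K\uparrow U_m$ the integrand decreases to zero on the event that the process lives below $U_m$ at time $T_m$, while as $K\downarrow L_m$ one obtains the limit $x-L_m=(x-L_m)^+$. For the regularity, rearranging $(\ref{eq.calib.callPDDE})$ gives
$$
\partial^2_{KK}\tilde{C}(T_m,K,x)=\frac{2}{a_m^2(K)\,t^*_m}\bigl[\tilde{C}(T_m,K,x)-\tilde{C}(T_{m-1},K,x)\bigr],
$$
whose right-hand side is uniformly bounded (call prices are dominated by $x$ and $a_m^2$ is bounded away from zero) and continuous in $K$ except at the finitely many discontinuities of $a_m$. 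This yields square integrability of $\partial^2_{KK}\tilde{C}(T_m,\cdot,x)$ together with the asserted pointwise continuity; absolute continuity of $\partial_K\tilde{C}(T_m,\cdot,x)$ then follows by integration.

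For uniqueness, let $u_1,u_2$ be two candidate solutions and set $v=u_1-u_2$. Then $v$ is continuous on $[L_m,U_m]$, vanishes at both endpoints, has an absolutely continuous first derivative, and satisfies the homogeneous ODE
$$
\tfrac{1}{2}\,a_m^2(K)\,v''(K)-\tfrac{1}{t^*_m}\,v(K)=0
$$
at all but the finitely many discontinuities of $a_m$. The maximum principle applies: if $v$ attained a strictly positive interior maximum at $K_0$ (first take $K_0$ to be a point of continuity of $a_m$), then $v''(K_0)\leq 0$ while the ODE forces $v''(K_0)=\frac{2}{t^*_m a_m^2(K_0)}\,v(K_0)>0$, a contradiction; a maximum at a discontinuity of $a_m$ is ruled out in the same way using the one-sided limits of $v''$, since $v$ and $v'$ are continuous across such points. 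A symmetric argument rules out a strictly negative interior minimum, whence $v\equiv 0$.

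The step I expect to be most delicate is the Fubini exchange underpinning the derivation of $(\ref{eq.calib.callPDDE})$: one must show that $\partial^2_{KK}C^m(t^*_m,K,y)$ --- which, by Lemma $\ref{le:pdde.DUP.le1}$, is the density of $S^m_{t^*_m}$ started at $y$ --- is jointly measurable in $(K,y)$ and admits an integrable majorant in $y$ with respect to the law $\mu^x_{m-1}$ of $\tilde{S}_{T_{m-1}}$. The text preceding the theorem achieves this by combining the measurability of $\partial^2_{KK}C^m(t^*_m,K,\cdot)$ as a pointwise limit of continuous functions with the uniform bound derived from $(\ref{cfpdde})$ and the vanishing of $C^m(t^*_m,K,y)-(y-K)^+$ as $y$ leaves $(L_m,U_m)$; producing an explicit majorant is the one place where more than routine estimates are required.
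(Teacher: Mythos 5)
Your proposal is correct, and for the existence and PDDE parts it follows essentially the same route as the paper: the paper's own proof is a two-sentence citation to the derivation given just before the theorem (the Fubini exchange of $\partial^2_{KK}$ with $\tilde{\EE}^x$, justified by measurability and boundedness of $\partial^2_{KK}C^m(t^*_m,K,\cdot)$), exactly as you reconstruct. Your additional verification of the boundary conditions via the martingale identity and monotone/dominated convergence is consistent with, though more explicit than, what the paper writes.

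Where you genuinely diverge is in the uniqueness argument. The paper reduces to the homogeneous ODE $\tfrac12 a_m^2 v'' - \tfrac{1}{t^*_m}v=0$ with zero boundary data and then \emph{delegates} to Theorem \ref{th:pdde.DUP.th2}, whose uniqueness proof (tracing through the proof of Theorem \ref{th:pdde.DUP.th1} and Lemma \ref{le:pdde.DUP.le1} in Appendix A) rests on the Sobolev a priori estimate of Remark 3.3 in \cite{KrylovVMOelliptic}: one bounds $\|v\|_{W^2_2}$ by the $L^2$ norm of the (zero) source term and concludes $v\equiv 0$. You instead run a direct maximum principle: a strictly positive interior maximum of $v$ at a continuity point of $a_m$ forces $v''(K_0)\le 0$ while the ODE gives $v''(K_0)>0$; at a discontinuity of $a_m$, the $C^1$ property of $v$ (so $v'(K_0)=0$) together with the positive one-sided limits of $v''$ forces $v$ to be strictly increasing to the right of $K_0$, again a contradiction. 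This is perfectly sound because the sign of the zeroth-order term ($-\tfrac{1}{t^*_m}<0$) makes the operator coercive. Your approach is more elementary and self-contained (it does not invoke any external PDE estimate and does not require passing through the smoothing argument of Lemma \ref{le:pdde.DUP.le1}); the paper's approach has the virtue of reusing already-established machinery and of simultaneously delivering the $W^2_2$ regularity estimate as a byproduct. One small loose end in your regularity paragraph: ``bounded implies square integrable'' implicitly uses finiteness of $(L_m,U_m)$; in the general setting the cleaner route (which the paper effectively takes) is that $\partial^2_{KK}\tilde{C}(T_m,\cdot,x)$ is a probability density, hence absolutely integrable, and bounded by the PDDE, so it is in $L^2$ regardless of whether the interval is bounded.
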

\begin{proof}
We have already shown that $\tilde{C}(T_m,\cdot,x)$ satisfies (\ref{eq.calib.callPDDE}) and possesses all the properties stated in the above theorem. Notice that the homogeneous version of (\ref{eq.calib.callPDDE}) (with zero in place of $\tilde{C}(T_{m-1},K,x)$) is the same as the homogeneous version of (\ref{cfpdde1}). The uniqueness of solution to the latter equation follows from Theorem $\ref{th:pdde.DUP.th2}$. This completes the proof of Theorem $\ref{th:calib.multsmiles.th1}$.
\end{proof}

Now, the calibration strategy for multiple maturities becomes obvious.
Suppose that the market provides us with continuous price curves for call options at multiple maturities $0<T_1 < T_2 < \cdots < T_M<\infty$:
$$
\left\{\bar{C}^m(K)\,:\, K\in(L_m,U_m)\right\},\,\,\text{for}\,\,m=1,\ldots,M,
$$
as well as the current underlying level $x\in\RR$.
Equivalently, we can assume the knowledge of implied smiles for a continuum of strikes and multiple maturities. Using the notational convention $T_0=0$, we define $\bar{C}^0(K) = (x-K)^+$, for all $K\in\RR$. In addition, we extend each market price curve $\bar{C}^m(K)$ to all $K\in\RR$, recursively, starting with $m=1$, and defining $\bar{C}^m(K) =\bar{C}^{m-1}(K)$, for $K\notin (L_m,U_m)$.

\begin{assump}\label{assump:ass.1}
For all $m=1,\ldots,M$, we assume that:
\begin{enumerate}
\item $\lim_{K\downarrow L_m} \left( \bar{C}^m(K) - \bar{C}^{m-1}(K) \right) = \lim_{K\uparrow U_m} \left( \bar{C}^m(K) - \bar{C}^{m-1}(K) \right) = 0$;

\item $\partial^2_{KK} \bar{C}^m$ is positive and continuous everywhere in $[L_m,U_m]$, except a finite number of discontinuities of the first order;

\item $\left(\bar{C}^m(K) - \bar{C}^{m-1}(K)\right)/\partial^2_{KK}\bar{C}^m(K)$ is bounded from above and away from zero, uniformly over $K\in(L_m,U_m)$. 
\end{enumerate}
\end{assump}

Then, we can set $t^*_m=T_m - T_{m-1}$ and
\begin{equation}\label{eq.calib.a}
a^2_m(K) = \frac{2}{t^*_m}\frac{\bar{C}^m(K) - \bar{C}^{m-1}(K)}{\partial^2_{KK}\bar{C}^m(K)},
\end{equation}
for $m=1,\ldots,M$
It is easy to see, that, under the above assumptions, each $\partial^2_{KK}\bar{C}^m$ is square integrable, for $m=1,\ldots,M$.
Then, Theorem $\ref{th:calib.multsmiles.th1}$ implies that the non-homogeneous LVG model with the parameters $\left\{a_m,t^*_m,L_m,U_m\right\}_{m=1}^M$ reproduces the market price curves $\left\{ \bar{C}^m \right\}$.

When the maturity spacing is not uniform, then the above calibration strategy
may lead to grossly time-inhomogeneous dynamics
for the resulting gamma process.
In particular, even if the true risk-neutral dynamics of the underlying are given by a time-homogeneous Markov process, the calibrated process may have very inhomogeneous paths. This occurs when the times between available maturities vary significantly: then, $t^*_i = T_i - T_{i-1}$ varies with $i$, and the associated unbiased Gamma process has either many small jumps, or few large ones, depending on the time interval. This phenomenon is discussed in Remark $\ref{rem:biasedGamma}$, where it is also suggested that, in order to fix, or mitigate, the problem, one may use a biased, rather than unbiased, Gamma process (i.e. with $\alpha\neq 1/t^*$), which provides more flexibility for controlling the path properties of the process.
If one is nonetheless worried about the lack of homogeneity in the maturities,
one can sometimes add data to induce a more uniform maturity spacing. Of course, in this case, the additional data will affect the dynamics of calibrated process, and one may need to choose this data accordingly, to reproduce the desired characteristics of the paths.

\begin{remark}\label{rem:rem.AltMatInterp}
Notice that the above model for $\tilde{S}$ can be modified to obtain other methods of interpolating option prices across maturities. Namely, instead of using a Gamma process to construct each $S^{m,x}$, we can time change a driftless diffusion by any independent increasing process whose distribution at time $t^*_m$ is exponential. It is easy to see that, in this case, option prices will satisfy equation (\ref{eq.calib.callPDDE}). This was already observed in \cite{CoxHobsonObloj}, in the case of a single maturity. However, when only one maturity is available, the use of Gamma process is justified by the fact that it is the only L\'evy process that has an exponential marginal. Therefore, it is the only possible time change that produces time-homogeneous  Markov dynamics for the underlying. On the other hand, once we calibrate to multiple maturities, the time homogeneity is, typically, lost, and there is no particular reason to use Gamma process anymore.
\end{remark}

\subsection{Constructing Arbitrage-free Smiles from a Finite Family of Call Prices}
\label{subse:ContFromFinite}




In this subsection, we show how to construct continuous call price curves, satisfying Assumption $\ref{assump:ass.1}$ in Subsection \ref{subse:calib.multsmiles}, from a finite number of observed option prices. This can be viewed as an interpolation problem. Nevertheless, due to the non-standard constraints given by Assumption $\ref{assump:ass.1}$, this problem cannot be solved by a simple application of existing methods, such as polynomial splines. In particular, the second part of Assumption $\ref{assump:ass.1}$ implies that we have to restrict the interpolating curves to the space of convex functions, while the other parts of Assumption $\ref{assump:ass.1}$ introduce several additional levels of difficulty.
Perhaps, the most popular existing method for cross-strike interpolation of option prices is known as the SVI (or SSVI) parameterization (see, for example, \cite{SVI} and references therein). In order to fit a function from the SVI family to the observed implied volatilities, one solves numerically a multivariate optimization problem, to find the right values of the parameters. Then, provided these values satisfy certain no-arbitrage conditions, one can easily obtain the interpolating call price curves, for each maturity, such that they satisfy conditions $1-2$ of Assumption $\ref{assump:ass.1}$. The SVI family has many advantages: in particular, it allows one to produce smooth implied volatility (and call price) curves using only few parameters. In addition, each of the parameters has a certain financial interpretation, which makes SVI useful for developing intuition about the observed set of option prices. As it is shown, for example, in \cite{SVI}, the empirical quality of SVI fit, applied to the options on S$\&$P 500, is quite good. However, the SVI family was not designed to fit an arbitrary combination of arbitrage-free option prices, which also reveals itself in the numerical results of \cite{SVI}, where the interpolated implied volatility, occasionally, crosses the bid and ask values.
In this subsection, we provide an interpolation method that is guaranteed to succeed for any given (strictly admissible) set of option prices. We also present an explicit algorithm for constructing such an interpolation, which does not use any multivariate optimization. In addition, the interpolation method proposed here produces price curves that satisfy the, rather non-standard, condition $3$ of Assumption $\ref{assump:ass.1}$ (which is needed for cross-maturity calibration). Finally, the proposed method is particularly appealing in the setting of the present paper, as the interpolating price curves are constructed via the LVG models.

We start by considering a specific class of LVG processes that have a {\bf piecewise constant} variance function $a^2$. Assume that $L$ and $U$ are both finite, and
$$
a(x) = \sum_{j=1}^{R+1} \sigma_{j} \bone_{\left[\nu_{j-1},\nu_j\right)}(x),
$$
for a partition $L=\nu_{0}<\nu_1<\ldots < \nu_{R} < \nu_{R+1}=U$ and a set of strictly positive numbers $\left\{\sigma_j\right\}_{j=1}^{R+1}$.
The choice of $t^*$, in this case, is not important. To simplify the notation, we denote $z=\sqrt{2/t^*}$.
In addition, for fixed $x$ and $z$, we denote 
$$
\chi(K) = C\left(\frac{1}{z^2},K,x\right)
$$
The PDDE (\ref{cfpdde}), in this model, becomes
\begin{equation}\label{eq.5}
a^2(K) \chi''(K) - z^2 \chi(K) = -z^2(x-K)^+
\end{equation}
Notice that, for $K\in(L,x)$, $V(K)=\chi(K)-(x-K)$ satisfies the homogeneous version of the above equation:
\begin{equation}\label{eq.calib.TV.2}
a^2(K) V''(K) - z^2 V(K) = 0
\end{equation}
For $K\in(x,U)$, $V(K)=\chi(K)$ satisfies the above equation as well. Notice that we have introduced the time-value function 
$$
V(K) = \chi(K) - (x-K)^+
$$
Function $V$ is {\bf not} a global solution to equation (\ref{eq.calib.TV.2}), as its weak second derivative contains delta function at $K=x$. However, on each interval, $(L,x)$ and $(x,U)$, it is a once continuously differentiable solution to (\ref{eq.calib.TV.2}), satisfying zero boundary conditions at $L$ or $U$, respectively. In addition, $V(K)$ is continuous at $K=x$ and $V'(K)$ has jump of size $-1$ at $K=x$. It turns out that these conditions determine function $V$ uniquely, as follows.

\begin{itemize}

\item For $K\in(L,x)$, $V(K)=V^1(K)$, where $V^1\in C^1(L,x)$ satisfies (\ref{eq.calib.TV.2}), with zero initial condition at $K=L$, and, hence, has to be of the form:
\begin{equation}\label{eq.2}
V^1(K) = \sum_{j=1}^{R+1} \left( c_j^{1,1} e^{-Kz/\sigma_j} + c_j^{2,1} e^{Kz/\sigma_j} \right) \bone_{\left[\nu_{j-1},\nu_j\right)}(K)
\end{equation}
with the coefficients $c_j^{1,1}$ and $c_j^{2,1}$ determined recursively:
\begin{eqnarray}
&&c_{j+1}^{1,1} e^{-z\nu_{j}/\sigma_{j+1}} = \frac{1}{2}  \left( \left(1+\frac{\sigma_{j+1}}{\sigma_j}\right) c_{j}^{1,1} e^{-z\nu_{j}/\sigma_j}
+ \left(1-\frac{\sigma_{j+1}}{\sigma_j}\right) c_{j}^{2,1} e^{z\nu_{j}/\sigma_j} \right),\nonumber\\
&&\label{eq.3}c_{j+1}^{2,1} e^{z\nu_{j}/\sigma_{j+1}} = \frac{1}{2} \left(
\left(1-\frac{\sigma_{j+1}}{\sigma_j}\right) c_{j}^{1,1} e^{-z\nu_{j}/\sigma_j}
+ \left(1+\frac{\sigma_{j+1}}{\sigma_j}\right) c_{j}^{2,1} e^{z\nu_{j}/\sigma_j} \right),
\end{eqnarray}
for $j\geq1$, starting with $c_{1}^{1,1}=-\lambda_1 e^{zL/\sigma_1}$ and $c_{1}^{2,1} = \lambda_1 e^{-zL/\sigma_1}$, with some $\lambda_1>0$.

\item For $K\in(x,U)$, $V(K)=V^2(K)$, where $V^2\in C^1(x,U)$ satisfies (\ref{eq.calib.TV.2}), with zero initial condition at $K=U$, and, hence, has to be of the form:
\begin{equation}\label{eq.2.1}
V^2(K) = \sum_{j=1}^{R+1} \left( c_j^{1,2} e^{-Kz/\sigma_j} + c_j^{2,2} e^{Kz/\sigma_j} \right) \bone_{\left[\nu_{j-1},\nu_j\right)}(K)
\end{equation}
with the coefficients $c_j^{1,2}$ and $c_j^{2,2}$ determined recursively:
\begin{eqnarray}
&&c_{j}^{1,2} e^{-z\nu_{j}/\sigma_{j}} = \frac{1}{2} \left( \left(1+\frac{\sigma_{j}}{\sigma_{j+1}}\right) c_{j+1}^{1,2} e^{-z\nu_{j}/\sigma_{j+1}}
+ \left(1-\frac{\sigma_{j}}{\sigma_{j+1}}\right) c_{j+1}^{2,2} e^{z\nu_{j}/\sigma_{j+1}} \right),\nonumber\\
&&\label{eq.6}c_{j}^{2,2} e^{z\nu_{j}/\sigma_{j}} = \frac{1}{2}  \left(
\left(1-\frac{\sigma_{j}}{\sigma_{j+1}}\right) c_{j+1}^{1,2} e^{-z\nu_{j}/\sigma_{j+1}}
+ \left(1+\frac{\sigma_{j}}{\sigma_{j+1}}\right) c_{j+1}^{2,2} e^{z\nu_{j}/\sigma_{j+1}} \right),
\end{eqnarray}
for $j\leq R$, starting with $c_{R+1}^{1,2}= \lambda_2 e^{zU/\sigma_{R+1}}$ and $c_{R+1}^{2,2} = -\lambda_2 e^{-zU/\sigma_{R+1}}$, with some $\lambda_2>0$.

\item The constants $\lambda_1>0$ and $\lambda_2>0$ should be chosen to fulfill the $C^1$ property of $\chi(K)$ at $K=x$. Namely, we denote the above functions $V^1(K)$ and $V^2(K)$, by $V^1(\lambda_1,K)$ and  $V^2(\lambda_2,K)$, respectively. We need to show that there exists a unique pair $(\lambda_1,\lambda_2)$, such that: $V^1(\lambda_1,x)=V^2(\lambda_2,x)$ and $\partial_KV^1(\lambda_1,x)=\partial_K V^2(\lambda_2,x) + 1$. It is clear that
$$
V^1(\lambda_1,K) = \lambda_1 V^1(1,K),
\,\,\,\,\,\,\,\,\,V^2(\lambda_2,K) = \lambda_2 V^2(1,K)
$$
Thus, we need to find $\lambda_1>0$ and $\lambda_2>0$, satisfying
$$
\lambda_1 V^1(1,x) = \lambda_2 V^2(1,x),
\,\,\,\,\,\,\,\,\,\,\lambda_1 \partial_KV^1(1,x) = \lambda_2 \partial_KV^2(1,x) + 1
$$

\begin{lemma}\label{le:calib.le1}
Function $V^1(1,K)$ is strictly increasing in $K\in(L,U)$, and $V^2(1,K)$ is strictly decreasing in $K\in(L,U)$.
\end{lemma}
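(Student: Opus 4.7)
The plan is to treat $V^1(1,\cdot)$ and $V^2(1,\cdot)$ as solutions of the piecewise-constant-coefficient ODE
$$
a^2(K)V''(K) - z^2 V(K) = 0, \qquad K\in(L,U),
$$
extended to all of $(L,U)$ by means of the recursions (\ref{eq.3}) and (\ref{eq.6}). By construction each of them is $C^1$ across the grid points $\nu_j$ and smooth on each sub-interval $(\nu_{j-1},\nu_j)$. I will establish positivity first, then monotonicity.

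\textbf{Step 1: positivity of $V^1(1,\cdot)$.} Evaluating (\ref{eq.2}) at $K=L$ with $c_1^{1,1}=-e^{zL/\sigma_1}$ and $c_1^{2,1}=e^{-zL/\sigma_1}$ gives $V^1(1,L)=0$ and $\partial_K V^1(1,L^+)=2z/\sigma_1>0$. I will argue $V^1(1,K)>0$ on $(L,U)$ by contradiction: let $K_0$ be the first zero of $V^1(1,\cdot)$ in $(L,U]$. On $(L,K_0)$ we have $V>0$, so the ODE yields $V''=(z/\sigma_j)^2 V>0$ on each sub-interval. Because $V$ is $C^1$ globally and its derivative is non-decreasing on every sub-interval, $V'$ is non-decreasing on the whole of $(L,K_0)$, i.e.\ $V$ is convex there. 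Convexity together with $V(L)=V(K_0)=0$ forces $V\leq 0$ on $[L,K_0]$, contradicting $V>0$ on the interior.

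\textbf{Step 2: strict monotonicity of $V^1(1,\cdot)$.} Now positivity on $(L,U)$ gives $V''>0$ on every sub-interval $(\nu_{j-1},\nu_j)$, so the $C^1$-matching implies $V'$ is non-decreasing on all of $(L,U)$. Since $V'(L^+)=2z/\sigma_1>0$, we conclude $V'>0$ throughout, hence $V^1(1,\cdot)$ is strictly increasing.

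\textbf{Step 3: the symmetric argument for $V^2(1,\cdot)$.} Using the terminal data $c_{R+1}^{1,2}=e^{zU/\sigma_{R+1}}$, $c_{R+1}^{2,2}=-e^{-zU/\sigma_{R+1}}$, one checks $V^2(1,U)=0$ and $\partial_K V^2(1,U^-)=-2z/\sigma_{R+1}<0$. An entirely analogous contradiction argument (taking the last zero before $U$) shows $V^2(1,\cdot)>0$ on $(L,U)$; then convexity on each piece plus $C^1$-matching gives $V'$ non-decreasing on $(L,U)$, and since $V'(U^-)<0$ this forces $V'<0$ throughout, proving strict decrease. No step is really difficult here; the only mild subtlety is the passage from piecewise convexity to convexity on the whole interval, which is legitimate precisely because the recursions (\ref{eq.3}) and (\ref{eq.6}) were designed to produce $C^1$ matching across each $\nu_j$.
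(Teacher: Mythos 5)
Your proof is correct and follows the same basic approach the paper sketches: use the piecewise ODE (\ref{eq.calib.TV.2}) together with the $C^1$ matching at the grid points to obtain convexity, then propagate strict monotonicity from the boundary data $V^1(1,L_+)=0$, $\partial_K V^1(1,L_+)>0$ (resp.\ the symmetric data at $U$ for $V^2$). You do, however, supply a step that the paper's two-sentence sketch elides: since the ODE reads $V'' = (z/a)^2\,V$, convexity is \emph{not} automatic but only holds where $V\geq 0$, so positivity has to be bootstrapped from the initial data before global convexity can be invoked; your contradiction argument in Step 1 (first zero $K_0$, piecewise convexity on $(L,K_0)$, and the endpoint values $V(L)=V(K_0)=0$) handles this gap cleanly.
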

The proof of Lemma $\ref{le:calib.le1}$ is straightforward. The $C^1$ property and equation (\ref{eq.calib.TV.2}) imply convexity of $V^1$. In addition, the choice of the coefficients $c_{1}^{1,1}$ and $c_{1}^{2,1}$ ensures that $V^1(L_+)=0$ and ${V^1}'(L_+)>0$. Hence, $V^1$ stays strictly increasing in $(L,U)$. Similarly, we can show that $V^2$ is strictly decreasing.
Lemma $\ref{le:calib.le1}$ ensures that there is a unique solution $(\lambda^*_1>0,\lambda^*_2>0)$ to the above system of equations.

\item Thus, the time value function is uniquely determined by
$$
V(K) = V^1(\lambda^*_1,K)\bone_{K\leq x} + V^2(\lambda^*_2,K)\bone_{K> x}
$$

\end{itemize}

We denote by $V^{\nu,\sigma,z,x}$ the time value function produced by an LVG model with piecewise constant diffusion coefficient $a$, given by $\nu=\left\{ \nu_j\right\}_{j=0}^{R+1}$ and $\sigma=\left\{ \sigma_j\right\}_{j=1}^{R+1}$, with $t^*=2/z^2$, and with the initial level of underlying $x$.
Similarly, we denote the call prices produced by such model, for maturity $t^*=2/z^2$ and all strikes $K\in(L,U)$, via
\begin{equation}\label{eq.C.nu.sigma.z.x}
C^{\nu,\sigma,z,x}(K) = V^{\nu,\sigma,z,x}(K) + (x-K)^+
\end{equation}

Now, we can go back to the problem of calibration. 

\begin{assump}\label{assump:ass.2}
We make the following assumptions on the structure of available market data.
\begin{enumerate}
\item We are given a finite family of maturities $0<T_1 < T_2 < \cdots < T_M<\infty$, and, for each maturity $T_i$, there is a set of available strikes $-\infty < K^i_1 < \cdots < K^i_{N_i} < \infty$. In addition, the following is satisfied, for each $i=1,\ldots,M-1$:
$$
\left(\left\{ K^{i+1}_j \right\}_{j=1}^{N_{i+1}} \setminus \left\{ K^{i}_j \right\}_{j=1}^{N_i}\right)
\cap \left(K^i_1, K^i_{N_i}\right) = \emptyset
$$
In other words, we assume that each strike available for the later maturity is either available for the earlier one as well, or has to lie outside of the range of strikes available for the earlier maturity.

\item We observe market prices of the corresponding call options:
$$
\left\{ \bar{C}(T_i,K^i_j)\,:\, j=1,\ldots,N_i,\,i=1,\ldots,M \right\},
$$
as well as the current underlying level $x\in\RR$.

\item In addition, for each $i=1,\ldots,M$, we are given an interval $(L_i,U_i)$, such that:
$$
(K_1^i,K^i_{N_i}) \subset (L_i,U_i) \subset(L_{i+1},U_{i+1}),
$$
$$
x\in(L_i,U_i) \subset 
\left( \max_{m> i,\, j\geq1}\left\{ K^{m}_j\,:\, K^{m}_j < K^i_1 \right\} , \min_{m> i,\, j\geq1}\left\{ K^{m}_j\,:\, K^{m}_j > K^i_{N_i} \right\} \right)
$$
where we use the standard convention: $\max \emptyset = -\infty$ and $\min \emptyset = \infty$. 
The interval $[L_i,U_i]$ represents the set of possible values of the underlying on the time interval $(T_{i-1},T_i]$. Of course, there exist infinitely many families $\left\{ L_i,U_i\right\}$ that satisfy the above properties. However, the economic meaning of the underlying may restrict, or even determine uniquely, the choice of $\left\{ L_i,U_i\right\}$ (e.g. if the underlying is an asset price, then, it is natural to choose $L_i=0$). The intervals $(L_i,U_i)$ can remain the same, for all $i\geq1$, only if all strikes available for the later maturities are also available for the earlier ones. 

\end{enumerate}
\end{assump}

The next property of the market data is closely related to the absence of arbitrage, and, therefore, we present it separately.

\begin{definition}\label{def:strict.admiss}
The market data $\left\{T_i, K^i_j, \bar{C}(T_i,K^i_j), L_i, U_i,x\right\}$, satisfying Assumption $\ref{assump:ass.2}$, is called {\bf strictly admissible} if the following holds, for each $i=1,\ldots,M$:
\begin{itemize}

\item the linear interpolation of the graph 
$$
\left\{\left(x-L_i,L_i\right), \left(\bar{C}(T_i,K^i_1),K^i_1\right), \ldots, \left(\bar{C}(T_i,K^i_{N_i}),K^i_{N_i}\right), \left(0,U_i\right)\right\},
$$ 
is strictly decreasing and convex, and no three distinct points in the above sequence lie on the same line;

\item for all $j=1,\ldots,N_i$, we have $\bar{C}(T_i,K^i_j)>(x-K^i_j)^+$, and, in addition, if $i>1$ and $K^i_j \in (L_{i-1},U_{i-1})$, then $\bar{C}(T_i,K^i_j)>\bar{C}(T_{i-1},K^{i-1}_j)$.

\end{itemize}
\end{definition}

In what follows, we assume that the market data is strictly admissible. In practice, due to the presence of transaction costs, this additional assumption is no loss of generality.

\begin{remark}\label{rem:addStrikes}
The simplest possible setting in which the above assumption on the structure of available strikes (Assumption $\ref{assump:ass.2}$) is satisfied, is when the same strikes are available for all maturities. Then, we can define $L_i$ and $U_i$ to be independent of $i\geq1$.
Even though Assumption $\ref{assump:ass.2}$ allows for a slightly more general structure of available strikes, in practice, this assumption is often violated. In particular, this is typically the case when one considers discounted prices (recall that strikes shift after discounting). However, if option prices for some strikes are not available for earlier maturities, one can fill in the missing prices so that they satisfy the above strict admissibility conditions. This amounts to a multivariate optimization problem, which, nevertheless, is particularly simple, as the constraints are linear. Such problems are known as the {\bf linear feasibility} problems, and they can be solved rather efficiently, for example, by means of the {\bf interior point method} (cf. \cite{Boyd}).
\end{remark}

The problem, now, is to interpolate the observed call prices across strikes. Namely, we need to find a family of functions
$$
K \mapsto \bar{C}^i(K),\,\,\,\,\,\, K\in\RR,
$$
for $i=1,\ldots,M$, such that: they match the observed market prices, $\bar{C}^i(K_j)=\bar{C}(T_i,K_j)$, for all $i$ and $j$, and the family of price curves $\left\{\bar{C}^i\right\}$ satisfies Assumption $\ref{assump:ass.1}$, stated in Subsection $\ref{subse:calib.multsmiles}$.
We provide an explicit algorithm for constructing such functions, making use of the LVG models with piecewise constant diffusion coefficients.

\begin{theorem}\label{th:calib.th.main}
For any $z>0$, and any positive integers $M$ and $\left\{N_i\right\}_{i=1}^M$, there exists a function that maps any strictly admissible market data, $\left\{ T_i, K^i_j, \bar{C}(T_i,K^i_j), L_i, U_i, x\right\}_{i=1,\ldots,M,\, j=1,\ldots,N_i}$ to the vector of parameters: 
$$
\left\{\nu^i=\left\{L_i=\nu^i_0<\nu^i_1<\cdots<\nu^i_{3M(N+2)}<\nu^i_{3M(N+2)+1}=U_i\right\}, \sigma^i=\left\{ \sigma^i_1>0,\ldots,\sigma^i_{3M(N+2)+1}>0\right\} \right\}_{i=1}^M,
$$ 
such that the associated call price curves $\left\{\bar{C}^i=C^{\nu^i,\sigma^i,z,x}\right\}_{i=1}^M$, defined in (\ref{eq.C.nu.sigma.z.x}), match the given values $\left\{\bar{C}(T_i,K^i_j)\right\}$ and satisfy Assumption $\ref{assump:ass.1}$.

Moreover, this mapping can be expressed as a finite superposition of elementary functions ($\exp$, $+$, $\times$, $/$, $\max$), real constants, and functions $f: \RR^n \rightarrow\RR$, such that each value $f(x)$ is determined as the solution to
$$
g(x,y)=0,\,\,\,\,\,\,y\in[a(x),b(x)],
$$
where $a,b:\RR^{n}\rightarrow\RR\cup\left\{-\infty\right\}\cup\left\{+\infty\right\}$ and $g:\RR^{n+1}\rightarrow\RR$ are finite superpositions of elementary functions and real constants, satisfying: $a(x)<b(x)$ and $g(x,\cdot)$ is continuous, with exactly one zero, on $\left(a(x),b(x)\right)$.
\end{theorem}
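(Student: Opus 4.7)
The plan is to build the price curves $\bar{C}^i = C^{\nu^i,\sigma^i,z,x}$ inductively in $i = 1,\ldots, M$, exploiting the explicit LVG formulas (\ref{eq.2})--(\ref{eq.6}) for piecewise constant diffusion coefficients. For each maturity, the partition $\nu^i$ will contain the strikes $\{K^i_j\}$ augmented by three auxiliary knots per inter-strike gap (including the boundary pieces $(L_i,K^i_1)$ and $(K^i_{N_i},U_i)$), which accounts for the total count $3M(N+2)$. The three extra knots per gap are the key to the argument: one $\sigma$ is used to match the target price at the adjacent strike by a single-variable root search, a second $\sigma$ is reserved for enforcing Assumption \ref{assump:ass.1} condition 3 uniformly, and a third $\sigma$ is used in the $C^1$-matching at $K = x$. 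The cornerstone identity is that on any sub-interval of constant $\sigma^i_k$, the LVG PDDE (\ref{eq.5}) gives
\[
\frac{\bar{C}^i(K) - (x-K)^+}{\partial^2_{KK}\bar{C}^i(K)} \;=\; \frac{(\sigma^i_k)^2}{z^2},
\]
automatically positive and bounded; this handles condition 3 directly on $(L_i,U_i)\setminus(L_{i-1},U_{i-1})$, where $\bar{C}^{i-1}\equiv(x-K)^+$.

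For price matching I would sweep the strikes from the two boundaries inward. Starting from $V^i(L_i)=0$ and a free scale $\lambda_1>0$, propagate rightward using (\ref{eq.3}); across a single sub-interval the closed-form update is $v_{\nu_j}=v_{\nu_{j-1}}\cosh(\Delta K\, z/\sigma)+(\sigma v'_{\nu_{j-1}}/z)\sinh(\Delta K\, z/\sigma)$, a continuous and strictly monotone function of $\sigma>0$. Hence, with $\lambda_1$ and the two auxiliary $\sigma$'s of each gap fixed at preliminary values determined by the strictly admissible data, the tuning $\sigma^i_k$ is the unique root of a continuous monotone elementary equation on an explicit interval $[a(\mathrm{data}),b(\mathrm{data})]$, exactly in the single-variable root-search form demanded by the theorem. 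The symmetric recursion (\ref{eq.6}) handles strikes above $x$ with scale $\lambda_2$. The $C^1$-matching $V^1(\lambda_1,x)=V^2(\lambda_2,x)$ and $\partial_KV^1(\lambda_1,x)=\partial_KV^2(\lambda_2,x)+1$ is solved as a pair of one-variable root searches, the strict monotonicity being supplied by Lemma \ref{le:calib.le1}, with the remaining freedom in the third auxiliary $\sigma$ per gap absorbing the constraint that the matched prices at strikes remain correct.

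Verification of Assumption \ref{assump:ass.1} then splits cleanly: conditions 1 and 2 are built into the construction (zero boundary limits and positivity of $\partial^2_{KK}\bar{C}^i$ with finitely many first-order jumps, inherited from positive piecewise constant $\sigma^i$), and condition 3 is handled outside $(L_{i-1},U_{i-1})$ by the cornerstone identity. Inside $(L_{i-1},U_{i-1})$ I would factor
\[
\frac{\bar{C}^i - \bar{C}^{i-1}}{\partial^2_{KK}\bar{C}^i} \;=\; \frac{(\sigma^i_k)^2}{z^2}\cdot\frac{\bar{C}^i - \bar{C}^{i-1}}{\bar{C}^i - (x-K)^+},
\]
whose second factor lies in $(0,1)$ by strict admissibility; the second auxiliary $\sigma$ per gap is then chosen to certify uniform two-sided bounds on that factor, using continuous dependence of the propagated curve on $\sigma$ and explicit brackets derived from the linear-interpolant slopes of the strictly admissible data. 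The main obstacle I anticipate is precisely this uniform control over the between-strike intervals, as opposed to the pointwise control at strikes: it forces one to coordinate the free $\sigma$'s and the scales $\lambda_1,\lambda_2$ simultaneously across the whole interval. The three extra knots per strike provide exactly the redundancy needed to decouple these concerns into sequential one-dimensional root searches; once that decoupling is in place, the complete mapping is manifestly a finite composition of elementary functions, real constants, and one-dimensional monotone root searches of the form allowed in the theorem's conclusion.
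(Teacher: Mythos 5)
Your broad strategy --- induct over maturities, exploit the closed-form solution for piecewise constant $a$, insert auxiliary knots between strikes, and reduce everything to one-dimensional monotone root searches --- is the same as the paper's. But your allocation of the extra knots per gap is not what the paper does, and it leaves a genuine gap.

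First, the reason the paper needs \emph{two} extra pieces between $K^i_j$ and $K^i_{j+1}$ (with a third used only when the running interpolation threatens to cross $V^{i-1}$, Step 1.3.b) is that you must match \emph{both} the value \emph{and} a prescribed left derivative at $K^i_{j+1}$ to keep the forward recursion alive: the derivative must satisfy the analogue of (\ref{eq.calib.B.ineq}) at every strike or the next root search becomes infeasible, and it must also stay large enough to keep $V^i$ above $V^{i-1}$. A single $\sigma$ per sub-interval would pin down value but not the derivative. You instead allocate one knot to value-matching, one to ``enforcing Assumption \ref{assump:ass.1} condition 3,'' and one to the $C^1$ matching at $x$. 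In the paper the $C^1$ matching at $x$ is handled once globally, by constraining the target left and right derivatives at $x$ to differ by $-1$ (Step 3), not by a per-gap knot; so that role is spurious and the remaining two knots are not enough to do value-and-derivative matching plus avoidance of $V^{i-1}$.

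Second, your treatment of Assumption \ref{assump:ass.1} condition 3 is incomplete. You correctly write the factorization with the bounded prefactor $a_i^2(K)/z^2$ and observe $\frac{V^i-V^{i-1}}{V^i}\in(0,1)$, but this is open-ended near the boundary points where both $V^i$ and $V^{i-1}$ vanish (e.g.\ $L_i=L_{i-1}$): the ratio is $0/0$ and could collapse to $0$. You propose to ``reserve a second auxiliary $\sigma$'' and ``certify uniform two-sided bounds,'' but do not say how, and that is the crux. The paper needs no such dedicated parameter: it forces a \emph{strict} boundary-derivative inequality $V^i_+(L_i)>V^{i-1}_+(L_i)$ at the very first root search through the choice (\ref{eq.calib.initderiv}), maintains $V^i>V^{i-1}$ pointwise throughout the sweep by the $w$-and-$y$ mechanism, and then gets the uniform bound for free in Lemma \ref{le:calib.le.1.2} by computing the explicit boundary limit $\left(\lambda^i/\sigma^i_1-\lambda^{i-1}/\sigma^{i-1}_1\right)/\left(\lambda^i/\sigma^i_1\right)>0$. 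Until you either produce that boundary-derivative inequality or an equivalent device, your verification of condition 3 does not close.
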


\begin{remark}
The vectors $\nu^i$ and $\sigma^i$ do not always need to have a length of exactly $3M(N+2)+2$ and $3M(N+2)+1$. In fact, if the length of these vectors is smaller than the aforementioned number, we can always add dummy entries to $\nu^i$, making the corresponding entries of $\sigma^i$ be equal to each other and to one of the adjacent entries. Thus, $3M(N+2)+2$ should be understood as the maximal possible length of $\nu^i$.
\end{remark}

\begin{remark}\label{rem:diffInterp_vs_diffCalib}
The above theorem provides a method for \textbf{explicit exact interpolation} of call prices across strikes.
To interpolate these price curves across maturities, we apply the method described in Subsection $\ref{subse:calib.multsmiles}$.
Note that the pice-wise constant diffusion coefficients, which produce the interpolated price curves, {\bf do not} have to coincide with the diffusion coefficients of the non-homogeneous LVG model, calibrated to these price curves as shown in Subsection $\ref{subse:calib.multsmiles}$. More precisely, the diffusion coefficients only coincide for the smallest maturity ($i=1$). The LVG models with piecewise constant coefficients, introduced in this subsection, are only used to obtain the cross-strike interpolation of options prices. The final calibrated non-homogeneous LVG model will have piecewise continuous, but not necessarily piecewise constant, diffusion coefficients. 
\end{remark}

Figures $\ref{fig:1}$--$\ref{fig:3}$ show the results of cross-maturity interpolation of the market prices of European options written on the S\&P 500 index, on January 12, 2011\footnote{The option prices, as well as the dividend and interest rates, are provided by Bloomberg.}.
Figure $\ref{fig:1}$ contains the resulting price curves of call options, as functions of strike. Each curve corresponds to a different maturity: $2$, $7$, $27$, $47$, and $67$ working days, respectively. In particular, Figure $\ref{fig:1}$ demonstrates that the monotonicity of option prices across maturities is preserved by the interpolation. The quality of the fit is shown on Figures $\ref{fig:2}$--$\ref{fig:3}$, via the implied volatility curves. It is easy to see that the fit is perfect, in the sense that the implied volatility of interpolated prices always falls within the implied volatilities of the bid and ask quotes. 

It is worth mentioning that, in the given market data, the first part of Assumption $\ref{assump:ass.2}$ is not satisfied. In particular, for longer maturities, the new strikes often appear between the strikes that are available for shorter maturities. This occurs quite often, for example, because the option prices and strikes need to be adjusted for the non-zero dividend and interest rates (cf. Remark $\ref{rem:shortMat}$). In addition, the market data contains bid and ask quotes, as opposed to exact prices. To resolve these issues, before initiating the calibration algorithm, we implement the method outlined in Remark $\ref{rem:addStrikes}$. Namely, we solve a linear feasibility problem, to obtain a strictly admissible set of option prices, for the same strikes at every maturity, such that the price of each option satisfies the given bid-ask constraints, if this option was traded in the market (i.e. the trading volume was positive) on that day.

As stated in Theorem $\ref{th:calib.th.main}$, the interpolated call prices, for each maturity, are obtained via an associated piecewise constant diffusion coefficient (see, for example, equation (\ref{eq.5})). 
It is clear from the right hand side of Figure $\ref{fig:3}$ that, despite the perfect fit of market data, the associated diffusion coefficients may exhibit rather wild oscillations. The diffusion coefficients of the resulting LVG model, defined in (\ref{eq.calib.a}), inherit this irregularity, which may be undesirable if, for example, one uses the calibrated model to compute Greeks or the prices of exotic options with discontinuous payoffs. 
One way to resolve this difficulty is to approximate the diffusion coefficients with ``more regular" functions. We can, then, choose the precision of the approximation to be such that the resulting call prices fall within the bid-ask spreads. It may seem that, by doing this, we face the original problem of calibrating a jump-diffusion model to option prices, which we aimed to avoid in this paper. However, there is a major difference. In the present case, we do not need to solve the ill-posed inverse problem of matching option prices: we only need to construct a sequence of regular enough functions that approximate the already known diffusion coefficients, and the associated call prices will converge to the given market values. The reason why we managed to reduce the notoriously difficult calibration problem to a well-posed one, is that we formulate it on the right space of parameters (models), which is not too large, and, yet, is large enough to contain a solution to the calibration problem. 
To simplify the computations, here, we choose to approximate the diffusion coefficient $a$ plotted on the right hand side of Figure $\ref{fig:3}$ with a piecewise constant function that has fewer discontinuities. The approximating function is constructed by choosing a partition of $(L,M)$ and setting the value of the function on each subinterval to be such that the average of $1/a^2$ is matched. The motivation for such metric comes from the observation that, for short maturities, the asymptotic behavior of interpolated option prices is determined by the geodesic distance $\int dx/a^2(x)$. 
The right hand side of Figure $\ref{fig:4}$ shows the two diffusion coefficients, and its left hand side demonstrates the implied volatility produced by the new coefficient. It is easy to see that the new diffusion coefficient is much smoother, while the resulting implied volatility still fits within the bid-ask spread.

\begin{figure}[!htb]
\begin{center}
\includegraphics[width = 0.8\textwidth]{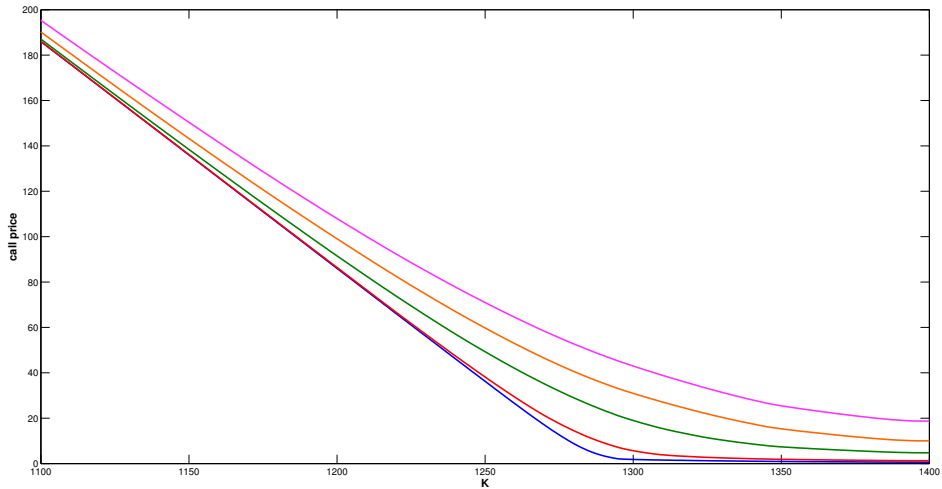}
\caption{Interpolated call prices as functions of strike. Different colors correspond to different maturities. The spot is at $1286$.}\label{fig:1}
\end{center}
\end{figure}

\begin{figure}
\begin{center}
  \begin{tabular} {cc}
    {
    \includegraphics[width = 0.48\textwidth]{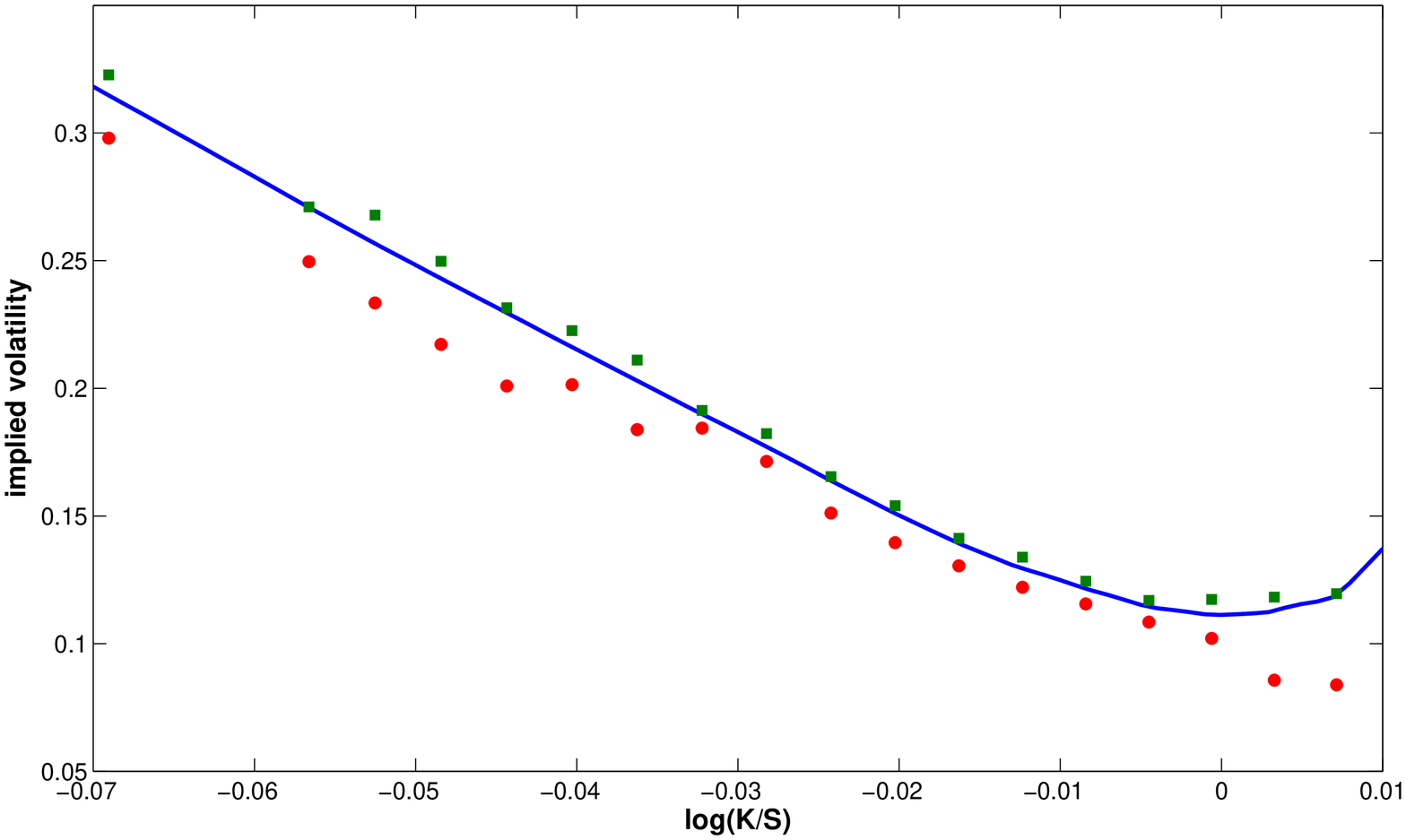}
    } & {
    \includegraphics[width = 0.48\textwidth]{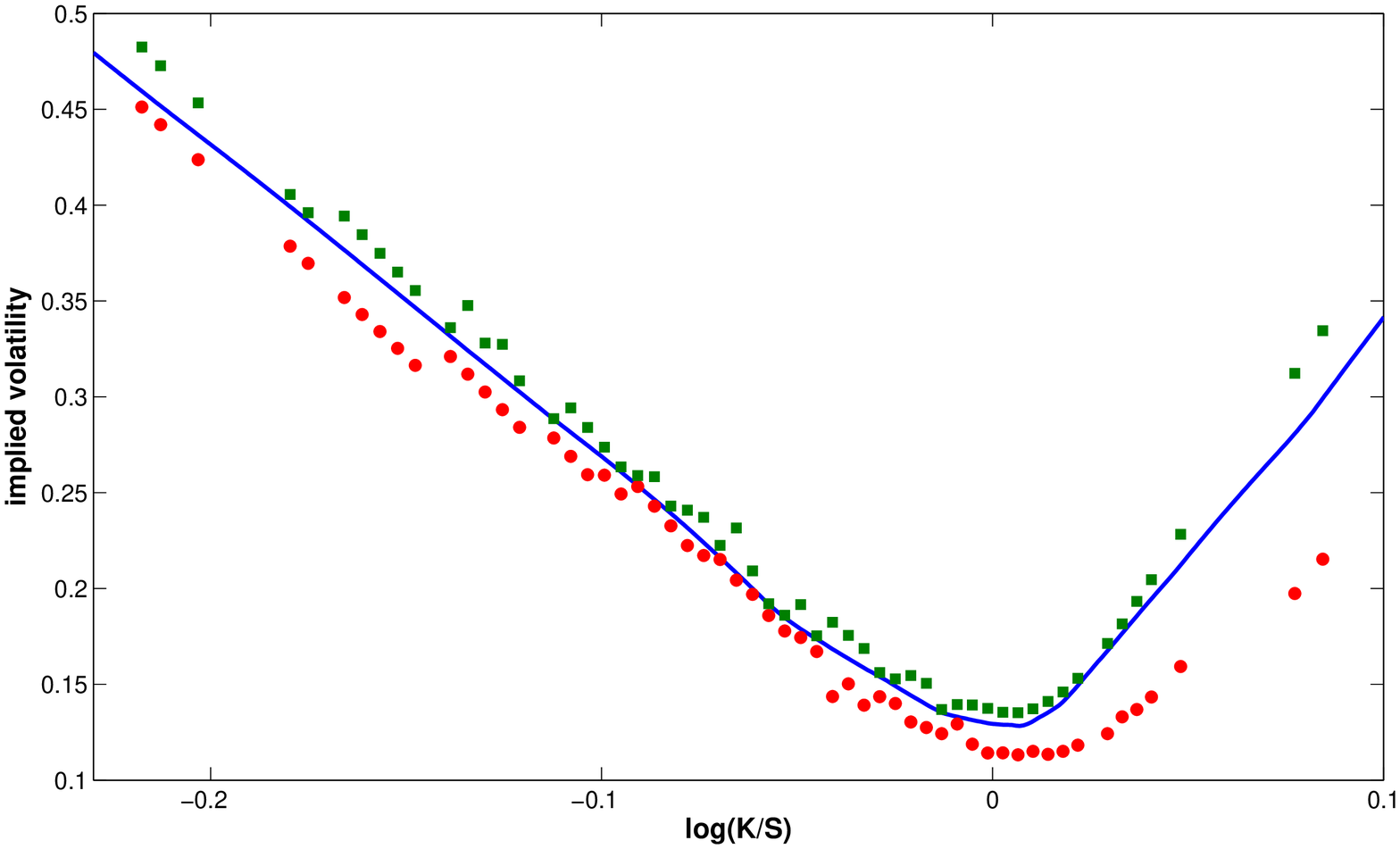}
    }\\
    {
    \includegraphics[width = 0.48\textwidth]{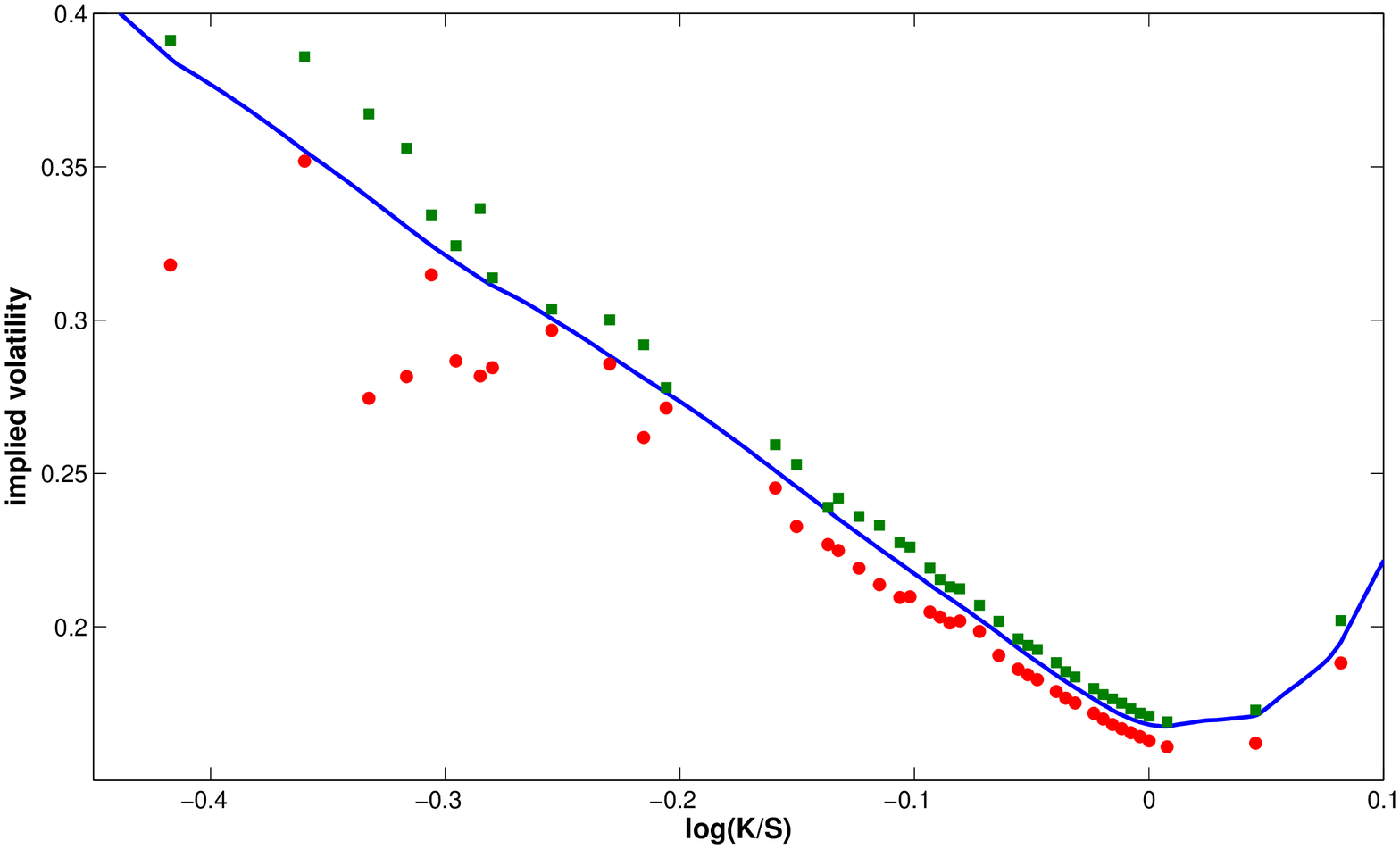}
    } & {
    \includegraphics[width = 0.48\textwidth]{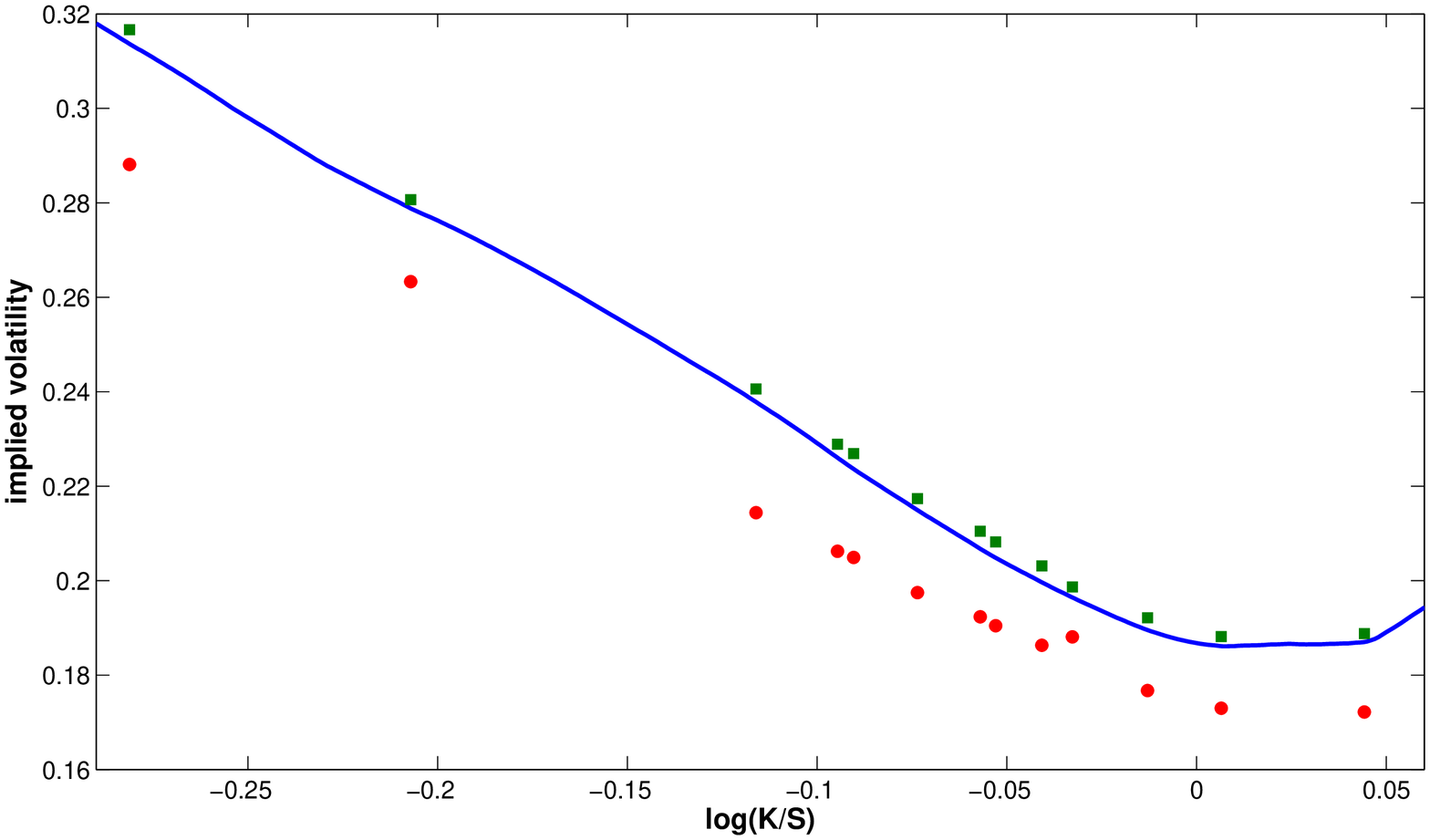}
    }
  \end{tabular}
  \caption{Implied volatility fit for the times to maturity: $2$ (top left), $7$ (top right), $47$ (bottom left), and $67$ (bottom right) working days. Blue lines represent the interpolated implied volatilities, red circles and green squares correspond to the bid and ask quotes, respectively.}
    \label{fig:2}
  \end{center}
\end{figure}

\begin{figure}
\begin{center}
  \begin{tabular} {cc}
    {
    \includegraphics[width = 0.48\textwidth]{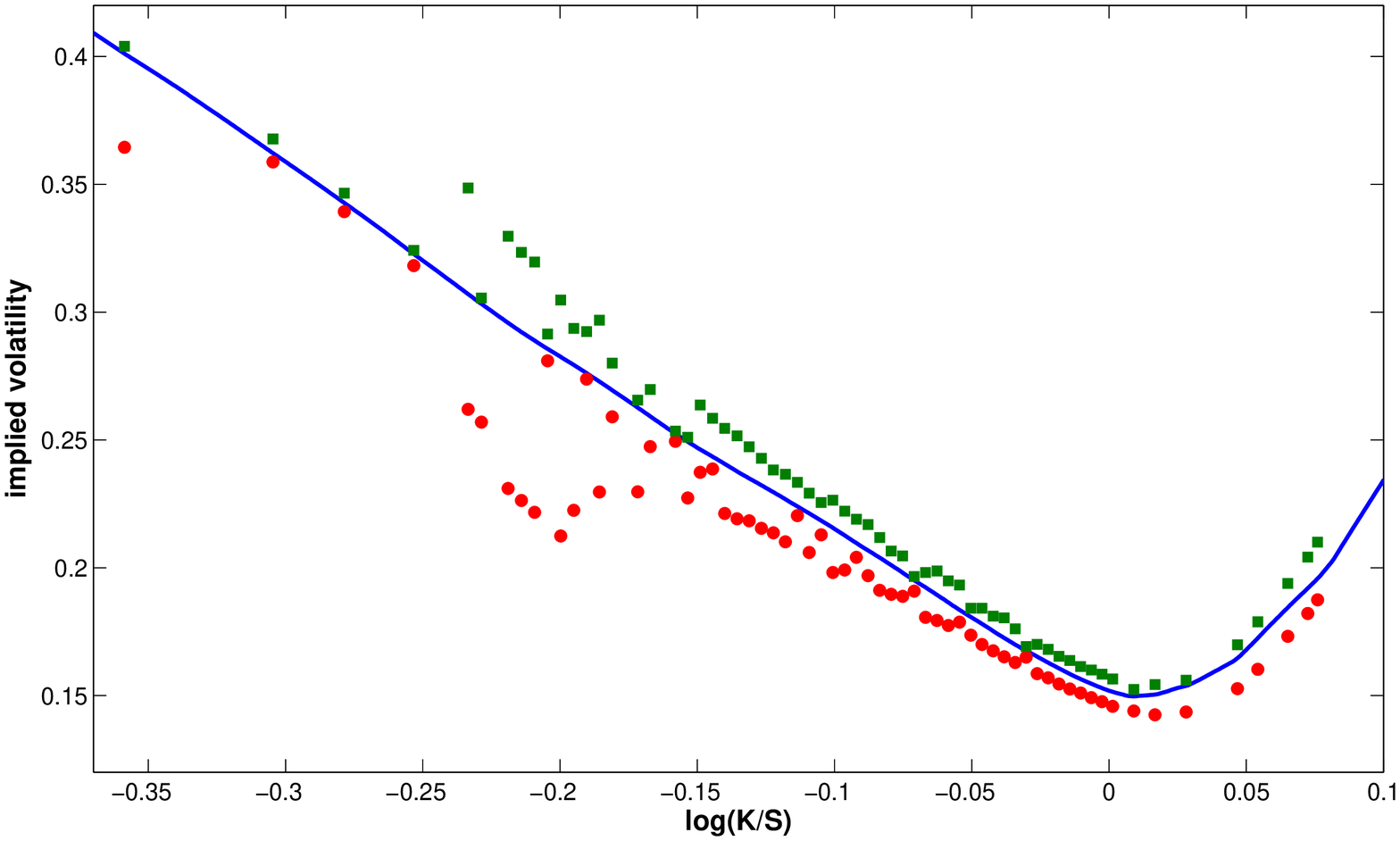}
    } & {
    \includegraphics[width = 0.48\textwidth]{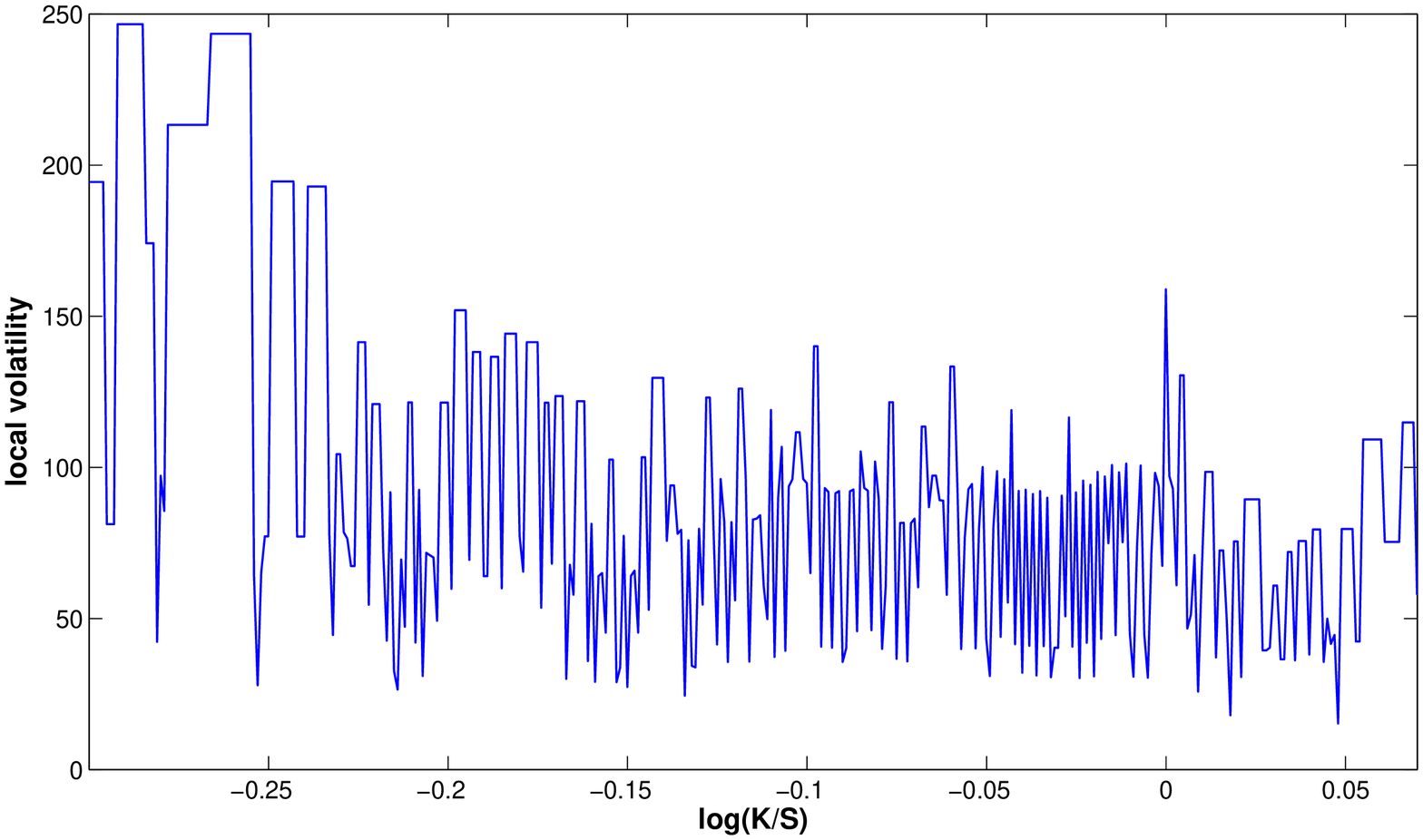}
    }\\
  \end{tabular}
  \caption{On the left: the implied volatility fit for $27$ working days to maturity. On the right: the corresponding piecewise constant diffusion coefficient.}
    \label{fig:3}
\end{center}
\end{figure}

\begin{figure}
\begin{center}
  \begin{tabular} {cc}
    {
    \includegraphics[width = 0.48\textwidth]{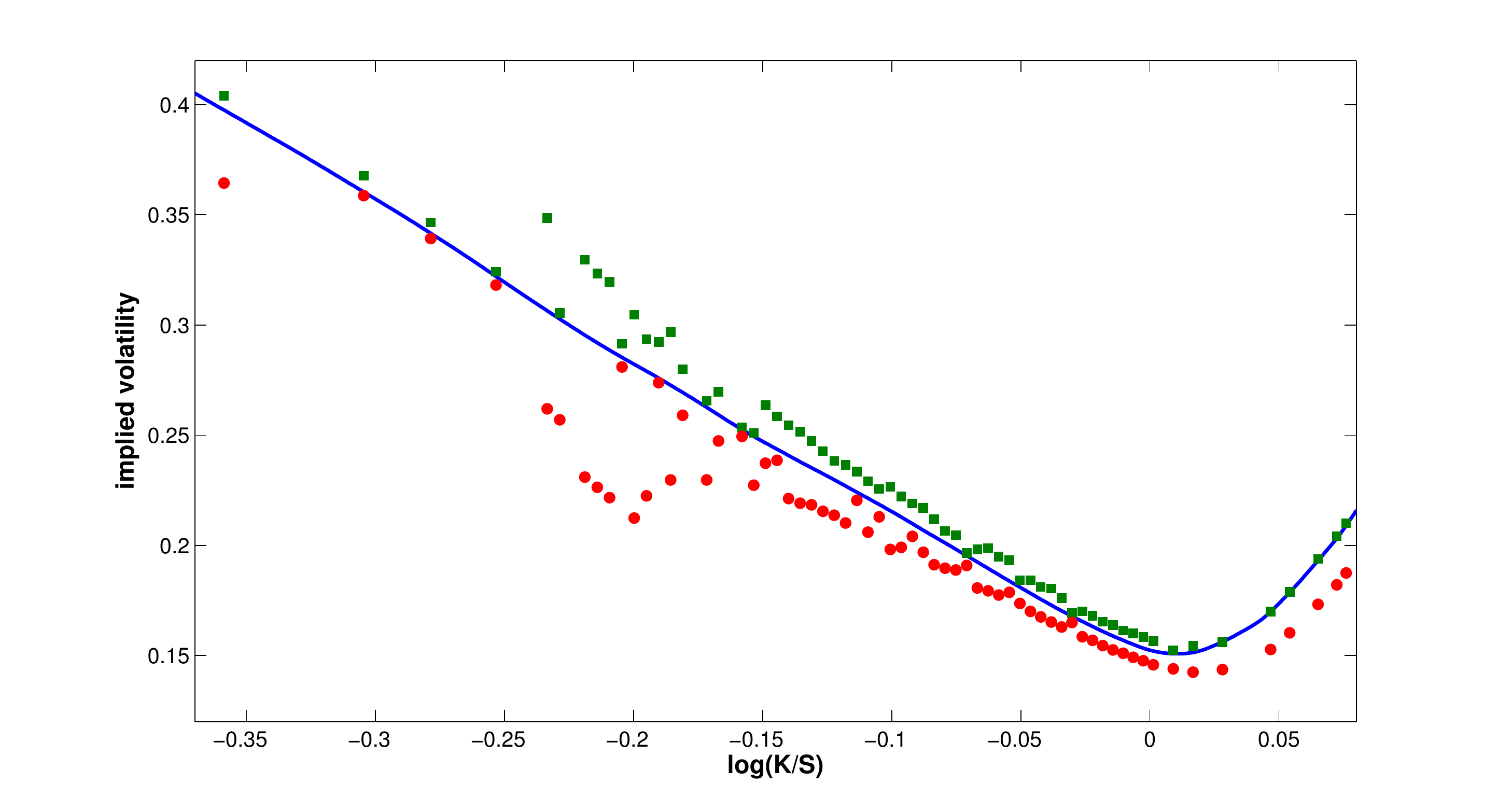}
    } & {
    \includegraphics[width = 0.48\textwidth]{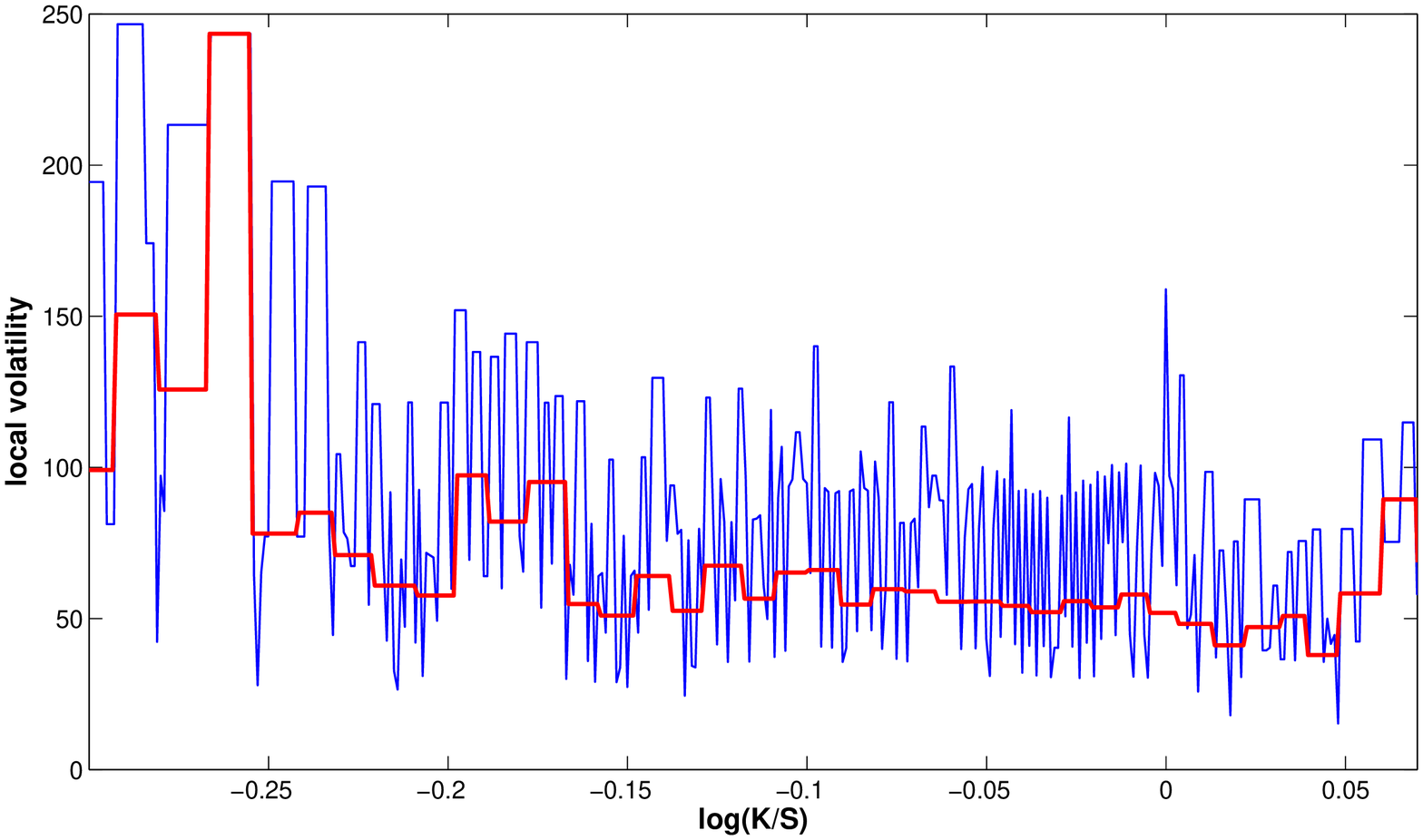}
    }\\
  \end{tabular}
  \caption{On the right: the diffusion coefficient corresponding to $27$ working days to maturity (in blue) and its approximation (in red). On the left: the implied volatility associated with the approximate diffusion coefficient (in blue), as well as the implied volatilities of bid and ask quotes (in red and green).}
    \label{fig:4}
\end{center}
\end{figure}

\begin{remark}\label{rem:shortMat}
Notice that we only use the first $5$ maturities in our numerical analysis, although there are market quotes for options with $5$ longer matures available on that day. This restriction is explained by the fact that the available bid and ask quotes may not allow for a strictly admissible set of option prices (in the sense of cf. Definition $\ref{def:strict.admiss}$). This, however, does not necessarily generate arbitrage, due to the following two observations. First, the option quotes for different strikes and maturities, provided in the database, are not always recorded simultaneously. Second, the interest rate is not identically zero, and the stocks included in the S$\&$P 500 index do pay dividends. In order to address the second issue, we had to assume that the dividend and interest rates are deterministic. Then, we discounted the option prices and strikes accordingly, to obtain expectations of the payoff functions applied to a martingale. The assumption of deterministic rates, of course, may not always be consistent with the pricing rule chosen by the market. However, this problem (as well as the first issue outlined above) goes beyond the scope of the present paper.
\end{remark}

It is important to mention that the method of cross-strike interpolation, presented in this subsection, is valuable on its own, not only in the context of LVG models. Indeed, having constructed the price curves, which have the $C^1$ and piecewise $C^2$ properties, one can interpolate them across maturities via a non-homogeneous LVG model. However, as discussed in Remark $\ref{rem:rem.AltMatInterp}$, by considering driftless diffusions run on different stochastic clocks, one can easily find other models that allow for a cross-maturity interpolation in a very similar way.
Another way to interpolate option prices across maturities is to define
$$
C(T,K) = \frac{e^{T_{i}}-e^{T}}{e^{T_i}-e^{T_{i-1}}} \bar{C}^{i-1}(K)
+ \frac{e^{T}-e^{T_{i-1}}}{e^{T_i}-e^{T_{i-1}}} \bar{C}^i(K),
$$
for $T\in[T_{i-1},T_i)$.
Then, at least formally, one can define a non-homogeneous local volatility model that reproduces the above option price surface. The corresponding diffusion coefficient $\tilde{a}$ is given by the Dupire's formula:
$$
\tilde{a}^2(T,K) = \frac{2 \partial_T C(T,K)}{\partial^2_{KK} C(T,K)},\,\,\,\,\,\,\,\,K\in(L_i,U_i),
$$
for all $T\in[T_{i-1},T_i)$, with $i=1,\ldots,M$. Note that the parabolic PDE's associated with the above local volatility (i.e. the Dupire's and Black-Scholes equations) are well posed, as follows, for example, from the results of \cite{KrylovVMO}. However, defining the associated diffusion process may still be a challenging problem, due to the discontinuities of $\tilde{a}$. 
Of course, using the results of this section, one can also find a non-homogeneous LVG model, which has ``more regular" diffusion coefficients and which approximates option prices up to the bid-ask spreads. This is done by approximating the pice-wise constant diffusion coefficient, which results from the calibration algorithm, as stated in Theorem $\ref{th:calib.th.main}$, with functions that possess the desired regularity. Such an approximation is discussed in the paragraph preceding Remark $\ref{rem:shortMat}$.

The rest of this subsection is devoted to the proof of Theorem $\ref{th:calib.th.main}$. In fact, this proof provides a detailed algorithm for computing the parameters $\left\{ \nu^i_j,\sigma^i_j \right\}$ that reproduce market call prices.

\noindent{\bf Structure of the proof}. The proof is presented in the form of an algorithm, which facilitates the implementation. However, it is rather technical and uses a lot of new notation. Therefore, here, we outline the structure and the main ideas of the proof. It is easy to see that any set of call price curves (as functions of strikes) produced by a family of homogeneous LVG models (different model for each curve), with piecewise constant diffusion coefficients, satisfies conditions $1-2$ of Assumption $\ref{assump:ass.1}$. 
However, we still need to ensure that the LVG models are chosen so that the resulting price curves also satisfy part $3$ of Assumption $\ref{assump:ass.1}$, which is a stronger version of the absence of calendar spread arbitrage. 
Thus, for each maturity $T_i$, we need to construct a pice-wise constant diffusion coefficient $a_i$ and the associated time value function $V^i$ (which is uniquely defined by (\ref{eq.calib.TV.2}) and the subsequent paragraph), such that: each time value curve $V^i$ matches the time values observed in the market, and the absence of calendar spread arbitrage is preserved ($V^i>V^{i-1}$, for all $i$).
In order to match the observed market values, each $V^i$ is constructed recursively, passing from $K_j$ to $K_{j+1}$. This allows us to construct $a_i$ and $V^i$ locally, so that $V^i$ solves (\ref{eq.calib.TV.2}) on $(L_i,K_{j+1})$, with $a_i$ in lieu of $a$. However, in order to obtain a bona fide time value function, we have to ensure that $V^i$ satisfies the zero boundary conditions at $L_i$ and $U_i$, and has a jump of size $-1$ at $x$. These conditions force us to control the value of the left derivative of $V^i$ at each strike $K_{j}$, in addition to the value of $V^i(K_j)$ itself (which must coincide with the market value). In order to satisfy the constraints on derivatives, along with the positivity of time value, we introduce two additional partition points between every $K_j$ and $K_{j+1}$. The choice of the additional partition points, as well as the proof that they are sufficient to fulfill the aforementioned conditions, takes most of the proof of Theorem $\ref{th:calib.th.main}$ (Steps 1.1-1.3). 
The proof, itself, has an inductive nature: with the induction performed over maturities, and, for every fixed maturity, over strikes. We show that the results of each iteration possess the necessary properties to serve as the initial condition for the next iteration.

\noindent{\bf Step 0}. 
Let us denote the market time values by $\bar{V}$: 
$$
\bar{V}(T_i,K^i_{j}) = \bar{C}(T_i,K^i_{j}) - (x-K^i_j)^+,
$$ 
for $i=1,\ldots,M$, $j=1,\ldots,N_i$.
It is clear that matching the market call prices is equivalent to matching their time values.
To simplify the notation, we add the strikes $K^i_0=L_i$ and $K^i_{N_i+1}=U_i$, for all $i=1,\ldots,M$, to the set of available market strikes, with the corresponding market time values being zero (since, in the calibrated model, the underlying cannot leave the interval $[L_i,U_i]$ by the time $T_i$). We will construct the interpolated price curves that match the original market prices, along with these additional ones.

Our construction will be recursive in $i$, starting from $i=1$.
Assume that we have constructed the interpolated time values for each maturity $T_m$,
$$
V^{m}(K) = V^{\nu^{m},\sigma^{m},z,x}(K),\,\,\,\,\,\,K\in\RR,
$$
with $m=0,\ldots,i-1$. For $m=0$, we set $V^0\equiv0$. In addition to matching the market time values, we assume that these functions satisfy:
$$
V^{m}(K) > V^{m-1}(K),\,\,\,\,\,K\in(L_m,U_m),
$$
for $m=1,\ldots,i-1$, and $V^{i-1}$ is strictly smaller than the market time values for maturities $T_i$ and larger. 
Our goal now is to construct $V^i=V^{\nu^{i},\sigma^{i},z,x}$, such that the extended family $\left\{V^m\right\}_{m=1}^i$ still satisfies the above monotonicity properties.

Without loss of generality, we can assume that the underlying level $x$ coincides with one of the strikes. If $x$ is not among the available strikes, we can always add a new market call price, for strike $x$ and maturities $T_1,\ldots,T_i$, as follows: if $x\in(K^i_j,K^i_{j+1})$, then, we choose an arbitrary $\delta_1\in(0,1)$ and introduce the additional market time values
$$
\bar{V}(T_m,x) = V^m(x),\,\,\,\,m=1,\ldots,i-1,
$$
$$
\bar{V}(T_i,x) = \delta_1 \left( \bar{C}\left(T_i,K^i_{j}\right)\frac{K^i_{j+1} - x}{K^i_{j+1} - K^i_j} 
+ \bar{C}\left(T_i,K^i_{j+1}\right) \frac{x - K^i_{j}}{K^i_{j+1} - K^i_j} \right) 
+ (1-\delta_1) \max\left(V^{i-1}(x),  \bar{C}\left(T_i,K^i_{j+1}\right)\right)
$$
It is easy to see that the new family of market call prices, for $m=i,\ldots,M$, is strictly admissible, and the constructed time value functions, $V^1,\ldots, V^{i-1}$, satisfy the properties discussed in the previous paragraph, with respect to the new data.

The construction of $V^i(K)$ is done recursively in $K$, splitting the real line into several segments.
We set $V^i(K)= 0$, for all $K\in(-\infty,L_i]$. We, then, extend $V^i$ to each $[L_i,K^i_j]$, increasing $j$ by one, as long as $K^i_{j+1}<x$.

\noindent{\bf Step 1}. Assume that $K^i_{j+1}<x$ and that, for all $K\in[L_i,K^i_{j}]$, we have constructed the time value function $V^i(K)$ in the form (\ref{eq.2}), with some $\left\{L_i=\nu_0<\cdots<\nu_{l(j)}=K^i_j\right\}$ and $\left\{\sigma_1>0,\ldots,\sigma_{l(j)}>0 \right\}$. In addition, we assume that $V^i(K)>V^{i-1}(K)$, for all $K\in[L_i,K^i_j]$, and the left derivative of $V^i(K)$ at $K=K^i_j$, denoted $B$, satisfies:
\begin{equation}\label{eq.calib.B.ineq}
B < \frac{\bar{V}(T_i,K^i_{j+1}) - \bar{V}(T_i,K^i_j)}{K^i_{j+1}-K^i_j}
\end{equation}
We need to extend $V^i$ to $[L_i,K^i_{j+1}]$ in such a way that: it remains in the form (\ref{eq.2}), on this interval, it matches the market time value at $K^i_{j+1}$, and its left derivative at $K^i_{j+1}$ satisfies the above inequality. 

Note that the case $j=0$ is not included in the above discussion. In this case, the left derivative of $V^i$ at $K^i_0=L_i$ is zero. However, the derivative of $V^i$ does not have to be continuous at this point, hence, we can change its value to a positive number. We choose this number as follows: fix an arbitrary $\delta_2\in(0,1)$ and define
\begin{equation}\label{eq.calib.initderiv}
B = \delta_2 \frac{\bar{V}\left(T_i, K^i_{1}\right)}{K^i_{1} - L_i}  
+ (1-\delta_2) {V^{i-1}_+}\left(L_i\right),
\end{equation} 
where ${V^{i-1}_+}$ is the right derivative of $V^{i-1}$.
Due to the strict admissibility of market data, as well as the convexity of $V^{i-1}(K)$ for $K\in[L_i,x]$, and the fact that $V^{i-1}(K^i_1)<\bar{V}(T_i,K^i_1)$, we easily deduce that $B$ given by (\ref{eq.calib.initderiv}) satisfies (\ref{eq.calib.B.ineq}), with $j=0$, and, in addition, $B>V^{i-1}_+(L_i)$.


\noindent {\bf Step 1.1}. Choose an arbitrary $\delta_3\in(0,1)$, and define
$$
B_1 = \delta_3 \frac{\bar{V}(T_i,K^i_{j+2}) - \bar{V}(T_i,K^i_{j+1})}{K^i_{j+2}-K^i_{j+1}}
+ (1-\delta_3) \frac{\bar{V}(T_i,K^i_{j+1}) - \bar{V}(T_i,K^i_{j})}{K^i_{j+1}-K^i_{j}}
$$
This will be the target derivative of extended time value function $V^i(K)$ at $K=K^i_{j+1}$.
To match this value, we need to introduce an additional jump point for the associated diffusion coefficients.

\noindent {\bf Step 1.2}. Introduce
$$
w=\frac{\bar{V}(T_i,K^i_{j+1}) + B K^i_{j} - A - B_1 K^i_{j+1}}{B-B_1}
$$
where $A>0$ is the left limit of $V^i(K)$ at $K=K^i_j$.
Due to the definition of $B_1$ and the inequality (\ref{eq.calib.B.ineq}), we have $w \in (K^i_{j},K^i_{j+1})$. In other words, $w$ is the abscissa of the intersection point of two lines: one that goes through $(K^i_j,A)$, with slope $B$, and the other one, that goes through $(K^i_{j+1},\bar{V}(T_i,K^i_{j+1}))$, with slope $B_1$ (cf. Figure $\ref{fig:5}$).

\begin{figure}
\begin{center}
  \begin{tabular} {cc}
    {
    \includegraphics[width = 0.48\textwidth]{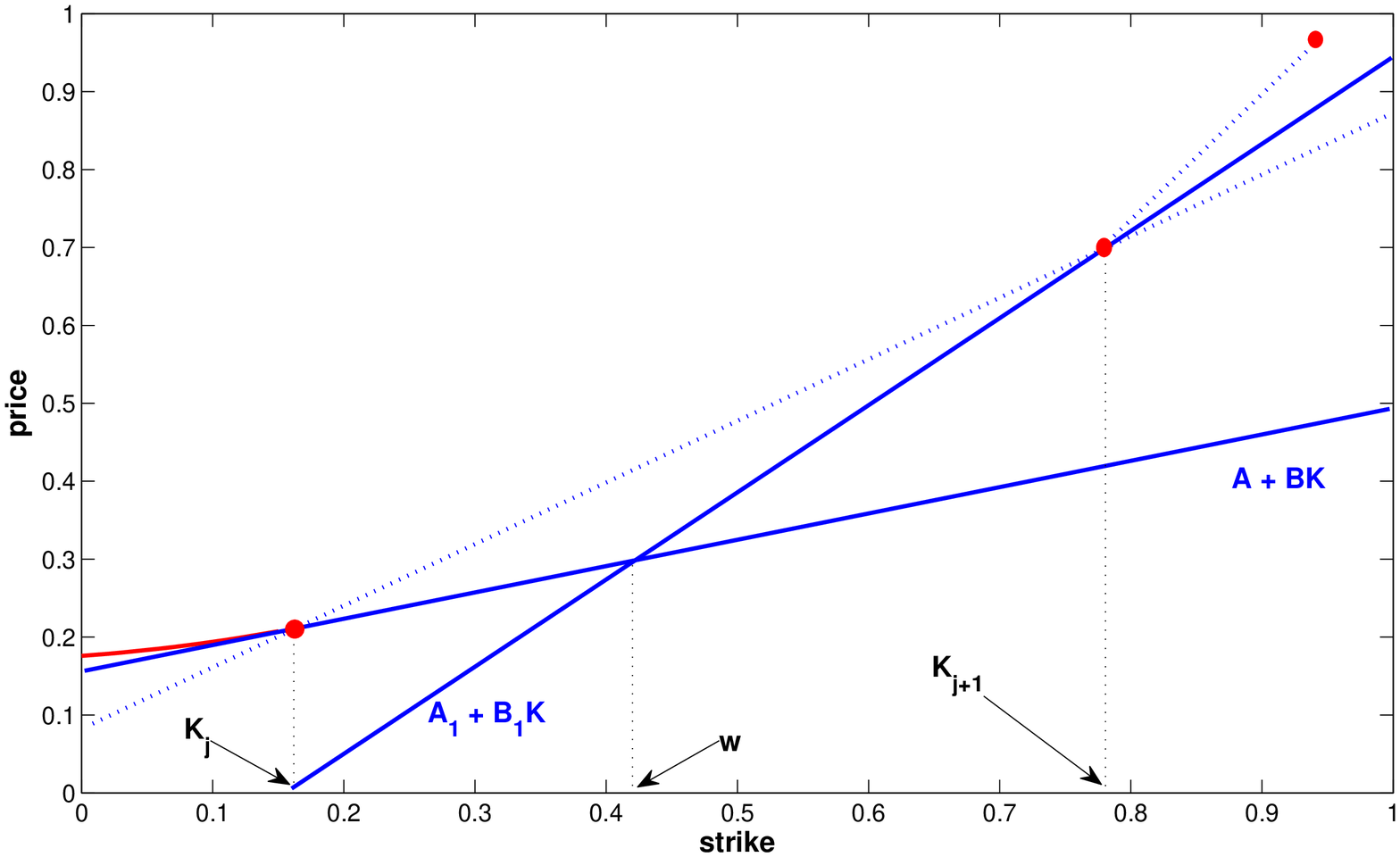}
    } & {
    \includegraphics[width = 0.48\textwidth]{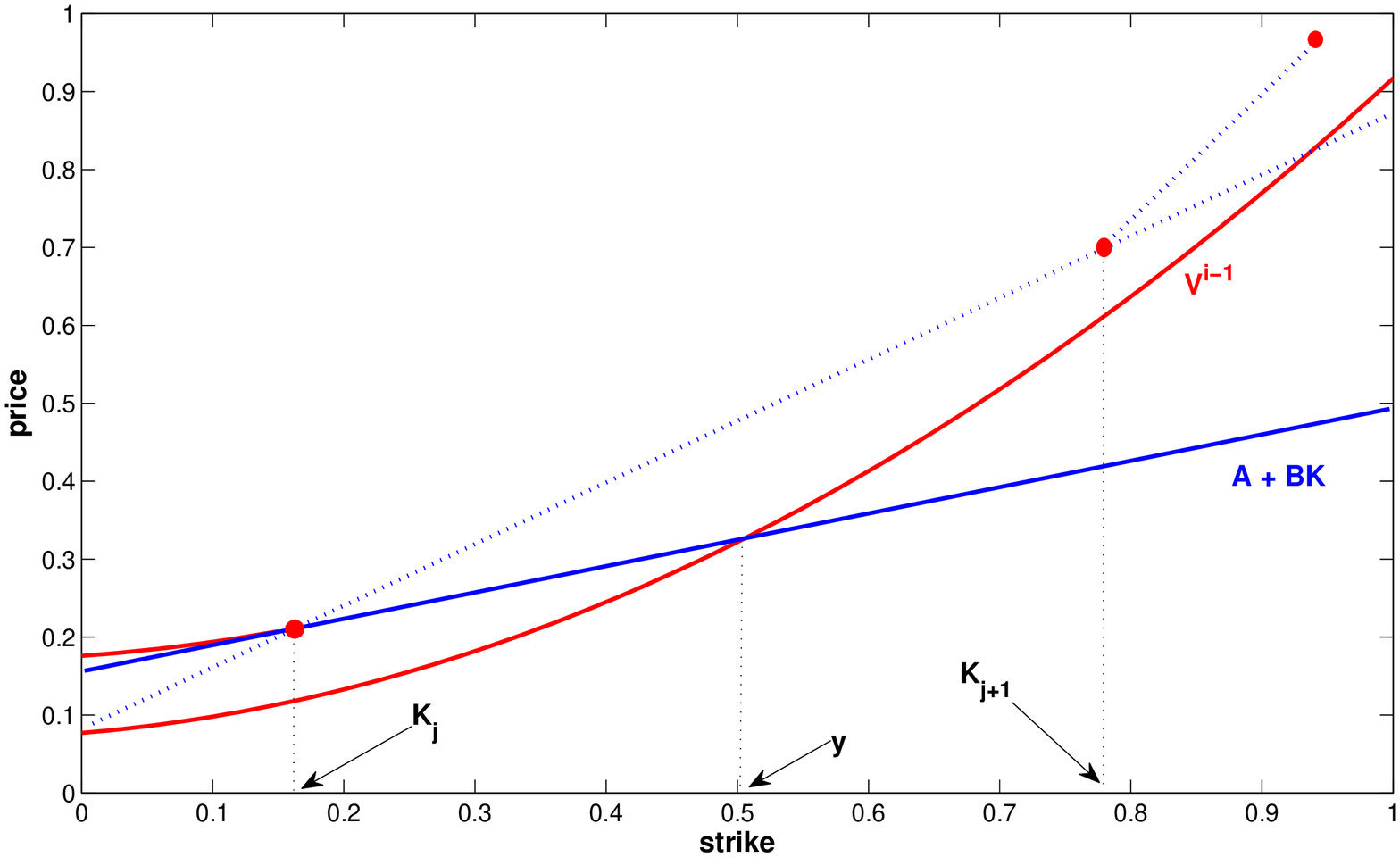}
    }\\
  \end{tabular}
  \caption{Visualization of the points $w$ (on the left) and $y$ (on the right). Red circles represent market time values for the current maturity. Red lines represent the already constructed time value interpolations.}
    \label{fig:5}
\end{center}
\end{figure}

\noindent {\bf Step 1.3}. Another potential problem is that the extended function $V^i$ needs to be strictly larger than $V^{i-1}$ on $[L_i,K^i_{j+1}]$. The two functions can intersect only if $A + (K^i_{j+1}-K^i_j)B < V^{i-1}(K^i_{j+1})$. If the latter inequality is satisfied, we introduce $y\in(K^i_j,K^i_{j+1})$ as the abscissa of the point at which the linear function intersects $V^{i-1}$:
$$
A + (y-K^i_j)B = V^{i-1}(y)
$$
Since $V^{i-1}$ is strictly convex and, either $A>V^{i-1}(K^i_j)$, or $A>V^{i-1}(K^i_j)$ and $B>V^{i-1}_+(K^i_j)$, the solution to the above equation exists and is unique in the interval $(K^i_j,K^i_{j+1})$. It can be computed, for example, via the bisection method.
If $A + (K^i_{j+1}-K^i_j)B \geq V^{i-1}(K^i_{j+1})$, we set $y=K^i_{j+1}$.

Next, we need to consider two cases.

\noindent {\bf Step 1.3.a}. If $w\leq y$, we search for the desired time value function $V^i(K)$, for $K\in[K^i_{j},w]$, in the form
\begin{equation}\label{eq.calib.mainth.eq1}
V^{i}(K) = f^1_{l(j)+1} e^{-zK/\sigma} + f^2_{l(j)+1} e^{zK/\sigma},
\end{equation}
where $\sigma$ is an unknown variable.
In order to guarantee that the interpolated time value function remains in the form (\ref{eq.2}), the coefficients $f^1_{l(j)+1}$ and $f^2_{l(j)+1}$ must satisfy:
$$
f^2_{l(j)+1} e^{zK^i_j/\sigma} = \frac{1}{2}\left(A + \frac{\sigma}{z} B\right),\,\,\,\,\,\,\,
f^1_{l(j)+1} e^{-zK^i_j/\sigma} = \frac{1}{2}\left(A - \frac{\sigma}{z} B\right),
$$
Since the extended time value function, $V^i$, given above, is a convex function with derivative $B$ at $K=K_j$, and because of the assumption $w\leq y$, it is easy to see that, for any choice of $\sigma>0$, $V^i(K)$ dominates the time value of earlier maturity, $V^{i-1}(K)$, from above, on the interval $K\in[L_i,w]$ (cf. Figure $\ref{fig:6}$). 

Next, we show that there exists a unique value of $\sigma$, such that the time value function (\ref{eq.calib.mainth.eq1}) allows for a further extension to $[L_i,K^i_{j+1}]$, such that it matches the market time value at $K^i_{j+1}$ and remains in the form (\ref{eq.2}). 
Notice that the value of $V^i(w)$ is given by $\tilde{A}$, and its left derivative is $\tilde{B}$, where:
\begin{equation}\label{eq.1.3.a.tildeA}
\tilde{A}= \tilde{A}(\sigma) = \frac{1}{2}\left(A - \frac{\sigma}{z} B\right)  e^{-z(w-K^i_j)/\sigma}
+ \frac{1}{2}\left(A + \frac{\sigma}{z} B\right) e^{z(w-K^i_j)/\sigma},
\end{equation}
\begin{equation}\label{eq.1.3.a.tildeB}
\tilde{B} = \tilde{B}(\sigma) = -\frac{z}{2\sigma}\left(A - \frac{\sigma}{z} B\right)  e^{-z(w-K^i_j)/\sigma}
+ \frac{z}{2\sigma}\left(A + \frac{\sigma}{z} B\right) e^{z(w-K^i_j)/\sigma}
\end{equation}
 
\begin{lemma}\label{le:temp.0}
Functions $\tilde{A}(\sigma)$ and $\tilde{B}(\sigma)$, defined in (\ref{eq.1.3.a.tildeA})-(\ref{eq.1.3.a.tildeB}), are strictly decreasing in $\sigma>0$. Moreover, the range of values of $\tilde{A}$ is given by
$$
\left(A + B (w-K^i_j),\infty\right),
$$
and the range of values of $\tilde{B}$ is $(B,\infty)$.
\end{lemma}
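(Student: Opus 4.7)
The plan is to reduce both expressions to simple functions of a single variable by absorbing the prefactors into hyperbolic functions, then verify monotonicity and the boundary limits directly.

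Set $h = w - K^i_j$, which is positive by the placement of $w$ in Step 1.2, and introduce the auxiliary variable $u = zh/\sigma \in (0,\infty)$, which is a strictly decreasing bijection of $\sigma \in (0,\infty)$ onto $(0,\infty)$. Rewriting the two defining expressions, I expect to obtain
\[
\tilde{A}(\sigma) = A\cosh(u) + Bh\,\frac{\sinh(u)}{u},
\qquad
\tilde{B}(\sigma) = \frac{A\,u}{h}\sinh(u) + B\cosh(u).
\]
Since $\sigma \mapsto u$ reverses monotonicity, it suffices to show that both functions, viewed now as functions of $u>0$, are strictly increasing. This reduces the claim to elementary facts about hyperbolic functions.

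The key inequalities I will invoke are $\cosh(u)$ strictly increasing on $(0,\infty)$, $u \mapsto \sinh(u)/u$ strictly increasing on $(0,\infty)$ (immediate from the power series $\sum_{k\geq0} u^{2k}/(2k+1)!$), and $u \mapsto u\sinh(u)$ strictly increasing (derivative $\sinh(u)+u\cosh(u)>0$). Combined with $A>0$ (the left limit of $V^i$ at $K^i_j$, which is positive by the inductive hypothesis) and $B>0$ (positive by the base case \eqref{eq.calib.initderiv} and by convexity of $V^i$ together with $V^i(L_i)=0$ in the inductive step), strict monotonicity in $u$ follows termwise, giving strict monotonic decrease in $\sigma$.

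For the ranges, I compute the two one-sided limits of each function in $u$. As $u\to 0^+$, I use $\cosh(u)\to 1$ and $\sinh(u)/u\to 1$ to get $\tilde{A}\to A + Bh = A + B(w-K^i_j)$, and $\frac{u}{h}A\sinh(u)\to 0$ together with $\cosh(u)\to 1$ to get $\tilde{B}\to B$. As $u\to\infty$, all three quantities $\cosh(u)$, $\sinh(u)/u$, and $u\sinh(u)$ blow up, so both $\tilde{A}$ and $\tilde{B}$ tend to $+\infty$. Combined with continuity and strict monotonicity, this identifies the ranges as $\bigl(A+B(w-K^i_j),\infty\bigr)$ and $(B,\infty)$, respectively.

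The argument is essentially computational and presents no real obstacle; the only step requiring a moment of care is the algebraic regrouping that turns the exponentials $e^{\pm zh/\sigma}$ together with the $\sigma$-dependent prefactors into the clean hyperbolic form above. Once that reorganization is carried out, the rest of the proof consists only of invoking standard monotonicity of $\cosh$, $\sinh(u)/u$, and $u\sinh(u)$ on the positive half-line, and computing four elementary limits.
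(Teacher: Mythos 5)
Your proof is correct, and it lands on the same two facts the paper establishes (strict monotonicity plus the two one-sided limits), but the execution is genuinely cleaner. The paper's Appendix B proof differentiates $\tilde{A}$ and $\tilde{B}$ directly with respect to $\sigma$, regroups terms by hand, and invokes the ad hoc auxiliary inequality $(1+y)e^{-2y} + y - 1 > 0$ (itself proved from $e^{2y} > 1 + 2y$) to conclude $\tilde{A}'(\sigma) \le 0$ and $\tilde{B}'(\sigma) \le 0$, then takes limits as $\sigma \to 0^+$ and $\sigma \to \infty$. Your substitution $u = z(w-K^i_j)/\sigma$ together with the hyperbolic regrouping
\[
\tilde{A} = A\cosh u + B(w-K^i_j)\,\frac{\sinh u}{u}, \qquad \tilde{B} = \frac{A}{w-K^i_j}\,u\sinh u + B\cosh u
\]
replaces that computation by the termwise monotonicity of $\cosh u$, $\sinh(u)/u$, and $u\sinh u$ on $(0,\infty)$, together with the positivity of $A$ and $B$ (which you correctly trace back to the construction, just as the paper tacitly assumes ``for any positive $A$ and $B$''). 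The range computation then reduces to $u\to 0^+$ and $u\to\infty$ limits that are immediate from the series or asymptotics. Both proofs deliver exactly the lemma; yours avoids the bespoke inequality and the product-rule algebra in exchange for one line of algebraic reorganization, and makes the strictness of the monotonicity more visibly automatic.
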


The proof of Lemma $\ref{le:temp.0}$ is given in Appendix B.
Recall that, for any $\sigma>0$, the time value $V^i$ can be extended to $[L_i,w]$ via (\ref{eq.calib.mainth.eq1}). Thus, we need to find $\sigma$, such that this time value function $V^i$ can be extended further, to $[L_i,K^i_{j+1}]$, so that it matches the given market time value, and the target derivative, $B_1$, at $K^i_{j+1}$.
Since $V^i(K)$ remains in the form (\ref{eq.2}), on the interval $K\in[w,K^i_{j+1}]$, it has to be given by
$$
\frac{1}{2}\left(\tilde{A}(\sigma) + \frac{\tilde{\sigma}}{z} \tilde{B}(\sigma)\right) e^{z(K-w)/\tilde{\sigma}}
+ \frac{1}{2}\left(\tilde{A}(\sigma) - \frac{\tilde{\sigma}}{z} \tilde{B}(\sigma)\right) e^{-z(K-w)/\tilde{\sigma}},
$$
where $\tilde{\sigma}$ has to be such that the market time value at $K=K^i_{j+1}$ is matched.
From Lemma $\ref{le:temp.0}$, we know that, for any fixed $\sigma>0$, the function
$$
\tilde{\sigma}\mapsto
\frac{1}{2}\left(\tilde{A}(\sigma) + \frac{\tilde{\sigma}}{z} \tilde{B}(\sigma)\right) e^{z(K^i_{j+1}-w)/\tilde{\sigma}}
+ \frac{1}{2}\left(\tilde{A}(\sigma) - \frac{\tilde{\sigma}}{z} \tilde{B}(\sigma)\right) e^{-z(K^i_{j+1}-w)/\tilde{\sigma}},
$$
defined for all $\tilde{\sigma}>0$, is strictly decreasing, and its range of values is
$$
\left(\tilde{A}(\sigma) + \tilde{B}(\sigma) (K^i_{j+1}-w),\infty\right)
$$
Due to (\ref{eq.calib.B.ineq}), we have
$$
A + B (w-K^i_j) + B(K^i_{j+1}-w) < \bar{V}(T_i,K^i_{j+1})
$$
Using the above and Lemma $\ref{le:temp.0}$, we conclude that there exists a unique $\hat{\sigma}$ satisfying
$$
\tilde{A}(\hat{\sigma}) + \tilde{B}(\hat{\sigma}) (K^i_{j+1}-w) = \bar{V}(T_i,K^i_{j+1})
$$
The above observation, along with the aforementioned monotonicity, implies that, for each $\sigma>\hat{\sigma}$, there exists a unique $\tilde{\sigma}=\tilde{\sigma}(\sigma)>0$ satisfying
$$
\frac{1}{2}\left(\tilde{A}(\sigma) + \frac{\tilde{\sigma}}{z} \tilde{B}(\sigma)\right) e^{z(K^i_{j+1}-w)/\tilde{\sigma}}
+ \frac{1}{2}\left(\tilde{A}(\sigma) - \frac{\tilde{\sigma}}{z} \tilde{B}(\sigma)\right) e^{-z(K^i_{j+1}-w)/\tilde{\sigma}}
 = \bar{V}(T_i,K^i_{j+1}),
$$
which can be obtained via the bisection method.
Notice that $\tilde{\sigma}(\sigma)$ is strictly decreasing in $\sigma\in(\hat{\sigma},\infty)$. 

Thus, for each choice of $\sigma>\hat{\sigma}$, we can extend the time value function $V^i$ from $[L_i,K^i_j]$ to $[L_i,K^i_{j+1}]$, so that it matches the market time value at $K^i_{j+1}$ and remains in the form (\ref{eq.2}).
It only remains to find $\sigma$ such that the left derivative of $V^i(K)$ at $K=K^i_{j+1}$ coincides with the target value, $B_1$. This condition can be expressed as follows:
\begin{equation}\label{eq.1.3.a.last}
\frac{z}{2\tilde{\sigma}(\sigma)} \left(\tilde{A}(\sigma) + \frac{\tilde{\sigma}(\sigma)}{z} \tilde{B}(\sigma)\right) e^{z(K^i_{j+1}-w)/\tilde{\sigma}(\sigma)} 
-\frac{z}{2\tilde{\sigma}(\sigma)}\left(\tilde{A}(\sigma) - \frac{\tilde{\sigma}(\sigma)}{z} \tilde{B}(\sigma)\right)
e^{-z(K^i_{j+1}-w)/\tilde{\sigma}(\sigma)}
= B_1
\end{equation}

\begin{lemma}\label{le:temp.2}
The left hand side of (\ref{eq.1.3.a.last}) is a strictly increasing function of $\sigma\in(\hat{\sigma},\infty)$, and its range of values includes $B_1$.
\end{lemma}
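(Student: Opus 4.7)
The plan is to view $F(\sigma)$ as the terminal slope $V'(K^i_{j+1}{}^-)$ of the unique $C^1$ function $V$ on $[K^i_j, K^i_{j+1}]$ solving the piecewise-constant-coefficient equation $V''=(z/a)^2 V$ with $a=\sigma$ on $[K^i_j,w]$, $a=\tilde{\sigma}(\sigma)$ on $[w, K^i_{j+1}]$, and boundary data $V(K^i_j)=A$, $V'(K^i_j{}^+)=B$, $V(K^i_{j+1})=\bar{V}:=\bar{V}(T_i,K^i_{j+1})$. Continuity of $F$ on $(\hat{\sigma},\infty)$ is immediate from continuity of $\tilde{A}(\sigma), \tilde{B}(\sigma)$ (explicit formulas) together with continuity of $\tilde{\sigma}(\sigma)$, which follows from the implicit function theorem applied to the matching equation whose strict monotonicity in $\tilde{\sigma}$ is the analogue of Lemma~\ref{le:temp.0} on $[w,K^i_{j+1}]$.

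The heart of the argument is strict monotonicity of $F$. Setting $h:=K^i_{j+1}-w$ and $\xi:=zh/\tilde{\sigma}(\sigma)>0$, the matching equation and the defining relation for $F$ become
\begin{equation}\label{eq:plan-match}
\bar{V}=\tilde{A}\cosh\xi+\frac{h\tilde{B}}{\xi}\sinh\xi, \qquad F=\frac{\xi\tilde{A}}{h}\sinh\xi+\tilde{B}\cosh\xi.
\end{equation}
Differentiating the first equation in $\sigma$, the coefficient of $d\xi$ equals $L:=\tilde{A}\sinh\xi+(h\tilde{B}/\xi)(\cosh\xi-\sinh\xi/\xi)>0$ (using $\xi\cosh\xi>\sinh\xi$ for $\xi>0$); combined with $d\tilde{A},d\tilde{B}<0$ from Lemma~\ref{le:temp.0}, this yields $d\xi/d\sigma>0$. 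Substituting the resulting expression for $d\xi$ into $dF$ gives $dF=c_1\,d\tilde{A}+c_2\,d\tilde{B}$, and routine simplification (using $\cosh^2\xi-\sinh^2\xi=1$) collapses $Lc_1$ and $\xi Lc_2$ to
\begin{align*}
L\,c_1 &= -\frac{\tilde{A}}{h}\sinh\xi\cosh\xi-\frac{\tilde{A}\xi}{h}-\frac{\tilde{B}\sinh^2\xi}{\xi}, \\
\xi L\,c_2 &= -\tilde{A}\sinh^2\xi+h\tilde{B}\Big(1-\frac{\sinh(2\xi)}{2\xi}\Big).
\end{align*}
Both are strictly negative -- the first transparently, the second via the elementary inequality $\sinh(2\xi)>2\xi$ for $\xi>0$ -- so $c_1,c_2<0$ and $dF/d\sigma>0$.

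For the endpoint values: as $\sigma\to\hat{\sigma}^+$, $\tilde{\sigma}\to\infty$, so $\xi\to 0^+$ and $F\to\tilde{B}(\hat{\sigma})$. The definition of $\hat{\sigma}$ gives $\tilde{A}(\hat{\sigma})+\tilde{B}(\hat{\sigma})h=\bar{V}$; combined with $\bar{V}=A+B(w-K^i_j)+B_1 h$ (defining property of $w$) and $\tilde{A}(\hat{\sigma})>A+B(w-K^i_j)$ (Lemma~\ref{le:temp.0}), this yields $\tilde{B}(\hat{\sigma})<B_1$. As $\sigma\to\infty$, Lemma~\ref{le:temp.0} gives $\tilde{A}\to A+B(w-K^i_j)$ and $\tilde{B}\to B$; using $A+B(K^i_{j+1}-K^i_j)<\bar{V}$, the constraint in \eqref{eq:plan-match} forces $\tilde{\sigma}$ to a finite positive limit. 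The limit function $V$ is linear of slope $B$ on $[K^i_j,w]$ and strictly convex on $[w,K^i_{j+1}]$, connecting $A+B(w-K^i_j)$ to $\bar{V}$ with left-slope $B$; since the secant slope over $[w,K^i_{j+1}]$ equals $B_1$, strict convexity forces the terminal slope $F(\infty)>B_1$. Continuity, strict monotonicity, and $F(\hat{\sigma}^+)<B_1<F(\infty)$ combined with the intermediate value theorem then yield $B_1$ in the range of $F$. The sole serious obstacle is the algebraic simplification producing $c_1$ and $c_2$; everything else is routine given the lemmas already in place.
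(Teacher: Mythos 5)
Your proof is correct, and it takes a genuinely different route from the paper's at the key step. Both you and the paper view the left-hand side of (\ref{eq.1.3.a.last}) as a composition: first $\sigma \mapsto (\tilde A(\sigma),\tilde B(\sigma))$ (whose monotonicity comes from Lemma~\ref{le:temp.0}), then $(a,b) \mapsto \tilde\sigma$ (via the matching constraint), then $(a,b,\tilde\sigma) \mapsto$ terminal slope. Where you differ is in proving that the inner map $(a,b)\mapsto$ terminal slope is decreasing. The paper does this with an ODE comparison: for $a_1<a_2$ it sets $u=F(a_1,b;\cdot)-F(a_2,b;\cdot)$, exploits that $u$ satisfies a linear second-order ODE with a nonnegative forcing term (since $\Sigma$ is increasing in $a$), and runs a maximum-principle style argument to conclude $u\le 0$ and hence $u'(K^i_{j+1})\ge 0$, with a separate step for strictness. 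You instead write the matching equation and $F$ in the $(\tilde A,\tilde B,\xi)$ coordinates with $\xi=z h/\tilde\sigma$, eliminate $d\xi$ by implicit differentiation, and exhibit $dF = c_1\,d\tilde A + c_2\,d\tilde B$ with $L c_1$ and $\xi L c_2$ simplifying to manifestly negative expressions; I verified both closed forms and they are exactly as you state, with negativity of $\xi L c_2$ coming from $\sinh(2\xi)>2\xi$. Your computation is the more elementary argument (no comparison principle, only elementary hyperbolic-function inequalities), at the cost of a longer algebraic simplification; the paper's comparison argument is shorter to write but relies on a qualitative ODE lemma that must itself be proved (and which the paper also uses elsewhere, so it amortizes the cost). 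Your endpoint analysis is the same as the paper's. Two small points worth stating explicitly in a polished write-up: (i) one needs $\tilde A(\sigma),\tilde B(\sigma)>0$ to conclude $L>0$ and $Lc_1<0$, which follows from Lemma~\ref{le:temp.0} and $A,B>0$; (ii) the continuity of $\tilde\sigma(\sigma)$ that you invoke from the implicit function theorem does require noting that the $\tilde\sigma$-derivative of the matching expression is nonvanishing, which is exactly the strict monotonicity established via (the analogue of) Lemma~\ref{le:temp.0}.
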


The proof of Lemma $\ref{le:temp.2}$ is given in Appendix B.
Lemma $\ref{le:temp.2}$ implies that there exists a unique solution $\sigma=\bar{\sigma}$ to the equation (\ref{eq.1.3.a.last}), and it can be computed via the bisection method.
Finally, we set: $\nu_{l(j)+1}=w$, $\sigma_{l(j)+1}=\bar{\sigma}$, $\nu_{l(j)+2}=K^i_{j+1}$ and $\sigma_{l(j)+2}=\tilde{\sigma}(\bar{\sigma})$, and $l(j+1)=l(j)+2$.

\begin{figure}
\begin{center}
  \begin{tabular} {cc}
    {
    \includegraphics[width = 0.48\textwidth]{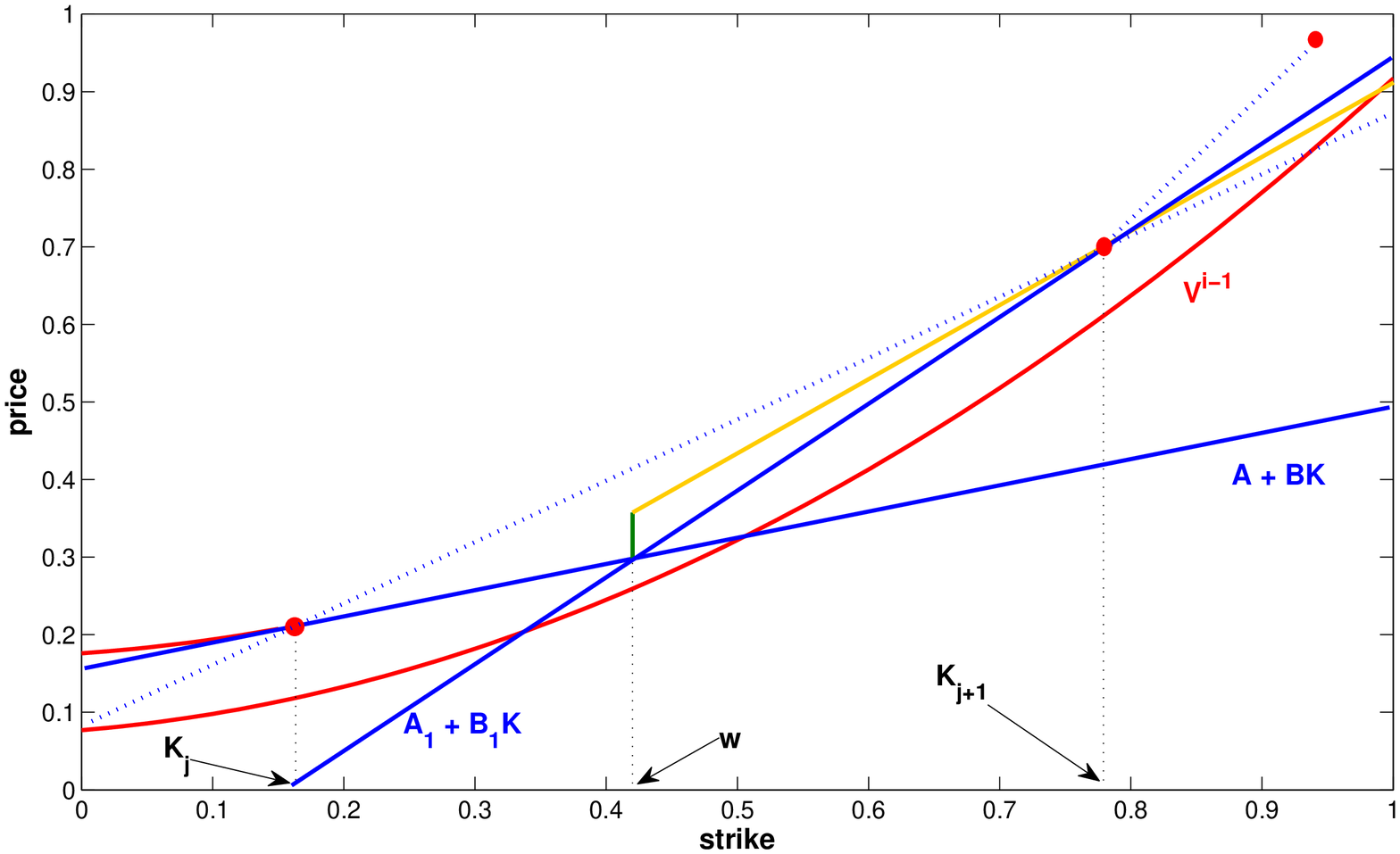}
    } & {
    \includegraphics[width = 0.48\textwidth]{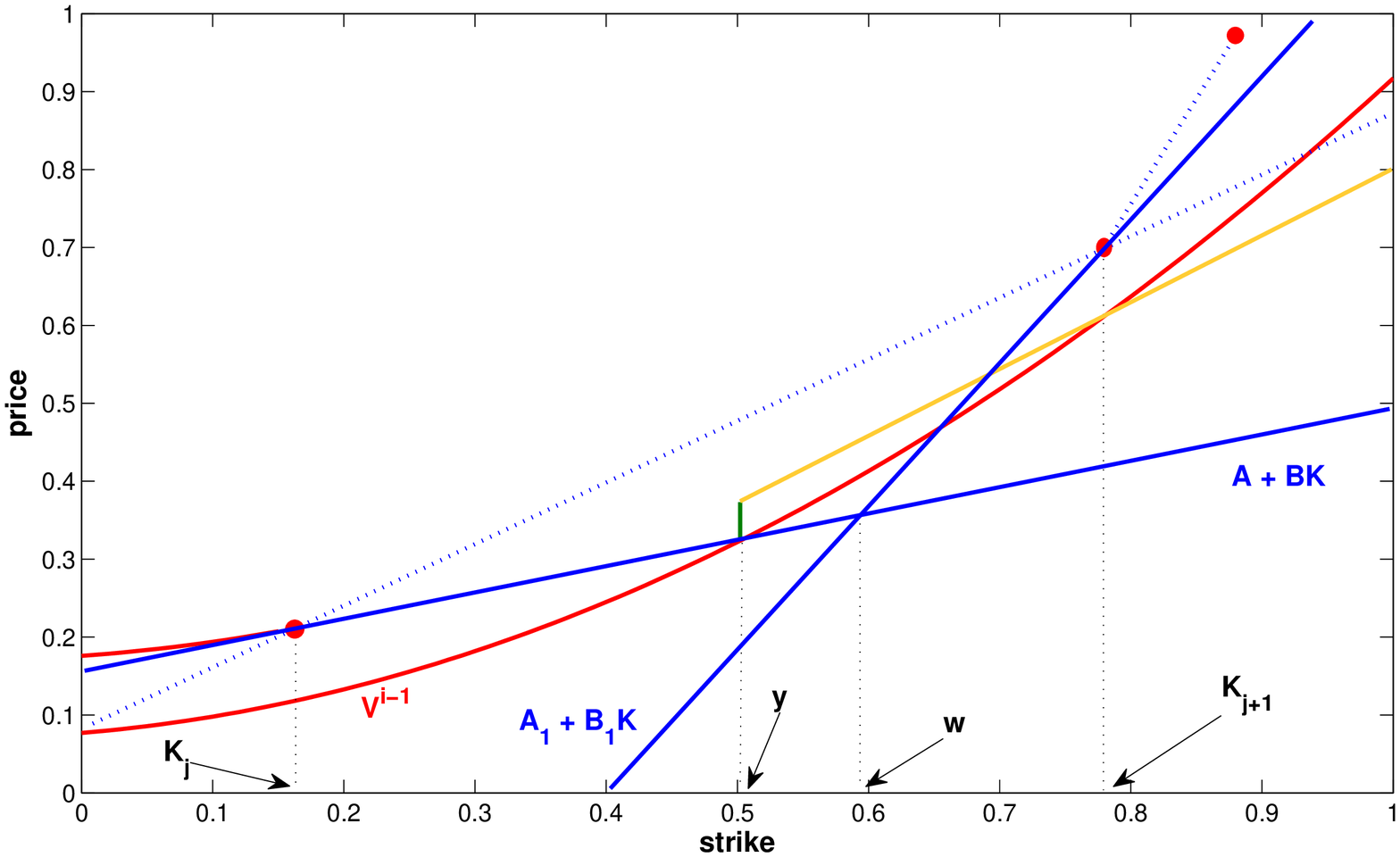}
    }\\
  \end{tabular}
  \caption{On the left: the case described in Step 1.3.a, where the yellow line represents the function $\tilde{A}(\hat{\sigma}) + \tilde{B}(\hat{\sigma}) K$, for $K\geq w$, and the vertical interval in green represents the range of values of $\tilde{A}(\sigma)$, for $\sigma\geq\hat{\sigma}$. On the right: the case described in Step 1.3.b, where the yellow line represents the function $\tilde{A}(\bar{\sigma}) + \tilde{B}(\bar{\sigma}) K$, for $K\geq y$, and the vertical interval in green represents the range of values of $\tilde{A}(\sigma)$, for $\sigma\geq\bar{\sigma}$.}
    \label{fig:6}
\end{center}
\end{figure}

\noindent {\bf Step 1.3.b}. If $w>y$, we search for the time value function $V^i(K)$, on the interval $K\in[K^i_j,y]$, in the following form
$$
V^i(K) = \frac{1}{2}\left(A - \frac{\sigma}{z} B\right)  e^{-z(K - K^i_j)/\sigma}
+ \frac{1}{2}\left(A + \frac{\sigma}{z} B\right) e^{z(K - K^i_j)/\sigma}
$$
The above expression guarantees that, for $K\in[K^i_j,y]$,
$$
V^i(K) > A + B(K-K^i_j) \geq V^{i-1}(K),
$$
due to convexity of $V^{i-1}$ and the definition of $y$ (cf. Figure $\ref{fig:6}$). 
In addition, for any $\sigma>0$, the extended time value function remains in the form (\ref{eq.2}), on the interval $K\in[L_i,y]$.
The values of $V^i(K)$ and its derivative, at $K=y$, are given, respectively, by:
$$
\tilde{A}(\sigma) = \frac{1}{2}\left(A - \frac{\sigma}{z} B\right)  e^{-z(y - K^i_j)/\sigma}
+ \frac{1}{2}\left(A + \frac{\sigma}{z} B\right) e^{z(y - K^i_j)/\sigma},
$$
$$
\tilde{B}(\sigma) = -\frac{z}{2\sigma}\left(A - \frac{\sigma}{z} B\right)  e^{-z(y - K^i_j)/\sigma}
+ \frac{z}{2\sigma}\left(A + \frac{\sigma}{z} B\right) e^{z(y - K^i_j)/\sigma}
$$
The functions $\tilde{A}$ and $\tilde{B}$, introduced above, are the same as those introduced in Step 1.3.a, with $y$ in place of $w$. Lemma $\ref{le:temp.0}$ can still holds for these functions (one can simply repeat its proof given in Appendix B). In particular,
$\tilde{A}(\sigma)$ and $\tilde{B}(\sigma)$ are strictly decreasing in $\sigma\in(0,\infty)$, and the range of values of $\tilde{A}(\sigma)$ is given by
$$
(A+B(y-K^i_j),\infty),
$$
while the range of values of $\tilde{B}$ is $(B,\infty)$.
These observations yield the following lemma.

\begin{lemma}
The function
$$
\sigma\mapsto \tilde{A}(\sigma) + \tilde{B}(\sigma) (K^i_{j+1}-y),
$$
defined for all $\sigma>0$, is strictly decreasing, and its range of values is $(A + B(K^i_{j+1}-K^i_j),\infty)$.
\end{lemma}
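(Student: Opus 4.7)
The lemma asserts monotonicity and the range of a function of $\sigma$ built from $\tilde{A}$ and $\tilde{B}$, which are formally the same expressions as in Step 1.3.a but with $y$ replacing $w$. Since the paper has already stated that Lemma \ref{le:temp.0} applies verbatim in this setting, the plan is to exploit that lemma directly: both $\tilde{A}(\sigma)$ and $\tilde{B}(\sigma)$ are strictly decreasing in $\sigma>0$, with ranges $(A+B(y-K^i_j),\infty)$ and $(B,\infty)$ respectively.

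For strict monotonicity of $\sigma\mapsto \tilde{A}(\sigma)+\tilde{B}(\sigma)(K^i_{j+1}-y)$, I would simply note that $K^i_{j+1}-y>0$ (since in Step 1.3.b the case $w>y$ is handled with $y\in(K^i_j,K^i_{j+1})$), so the function is a strictly positive linear combination of the two strictly decreasing continuous functions $\tilde A$ and $\tilde B$, and is therefore itself strictly decreasing and continuous.

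For the range, by the intermediate value theorem, it suffices to determine the two endpoint limits. The limit as $\sigma\downarrow 0$ is $+\infty$ because $\tilde{A}(\sigma)\to+\infty$ (the range of $\tilde A$ starts at $+\infty$), and $\tilde B$ is positive. For $\sigma\to\infty$, I would rewrite
$$
\tilde{A}(\sigma)=A\cosh\!\bigl(zh/\sigma\bigr)+\frac{B\sigma}{z}\sinh\!\bigl(zh/\sigma\bigr),\qquad
\tilde{B}(\sigma)=\frac{zA}{\sigma}\sinh\!\bigl(zh/\sigma\bigr)+B\cosh\!\bigl(zh/\sigma\bigr),
$$
with $h=y-K^i_j>0$, and Taylor-expand the hyperbolic functions in the small argument $zh/\sigma$. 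This gives $\tilde{A}(\sigma)\to A+B(y-K^i_j)$ and $\tilde{B}(\sigma)\to B$ as $\sigma\to\infty$. Adding these limits yields
$$
A+B(y-K^i_j)+B(K^i_{j+1}-y)=A+B(K^i_{j+1}-K^i_j),
$$
which, combined with strict monotonicity, establishes the claimed open range $\bigl(A+B(K^i_{j+1}-K^i_j),\infty\bigr)$.

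None of the steps should be an obstacle: the monotonicity is immediate from Lemma \ref{le:temp.0} and positivity of $K^i_{j+1}-y$, and the only mild computation is the limit $\sigma\to\infty$, which is a straightforward first-order expansion of $\cosh$ and $\sinh$. The cleanest presentation is probably to first write $\tilde{A}$, $\tilde{B}$ in the $\cosh/\sinh$ form above (which already makes the limits transparent) and then invoke Lemma \ref{le:temp.0} for monotonicity plus the behavior at $\sigma\downarrow 0$.
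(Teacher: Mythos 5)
Your proposal is correct and matches the paper's (implicit) argument: the paper simply notes that Lemma~\ref{le:temp.0} carries over with $y$ in place of $w$, so that $\tilde A$ and $\tilde B$ are strictly decreasing with ranges $(A+B(y-K^i_j),\infty)$ and $(B,\infty)$, and then combines these facts with $K^i_{j+1}-y>0$ exactly as you do. Your $\cosh/\sinh$ rewriting and Taylor expansion is a tidy way to re-derive the endpoint limits, but it is not really needed since those limits are already encoded in the range statements of Lemma~\ref{le:temp.0}; otherwise the two arguments coincide.
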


Due to the convexity of $V^{i-1}$, and the fact that $V^{i-1}(K^i_j)<V^i(K^i_j)$ and $V^{i-1}(y)=V^i(y)$, we obtain: 
$$
V^{i-1}(K^i_{j+1}) >  A + B(K^i_{j+1}-K^i_j)
$$
Thus, there exists a unique $\bar{\sigma}$ satisfying
$$
\tilde{A}(\bar{\sigma}) + \tilde{B}(\bar{\sigma}) (K^i_{j+1}-y) = V^{i-1}(K^i_{j+1}),
$$
which can be computed via the bisection algorithm (cf. Figure $\ref{fig:6}$).

\begin{lemma}\label{le:temp.3.1}
The following inequalities hold:
$$
\tilde{B}(\bar{\sigma}) < \frac{\bar{V}(T_i,K^i_{j+1}) - \tilde{A}(\bar{\sigma})}{K^i_{j+1}-y} < B_1,
\,\,\,\,\,\,\,\,\,\,\,\,\,\,\tilde{A}(\bar{\sigma}) + \tilde{B}(\bar{\sigma}) (K-y) > V^{i-1}(K),
$$
for all $K\in[y,K^i_{j+1})$.
\end{lemma}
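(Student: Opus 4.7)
The three inequalities are essentially independent, so I would dispatch them one at a time. All of them rest on three facts: (a) the defining equation of $\bar{\sigma}$, namely $\tilde{A}(\bar{\sigma})+\tilde{B}(\bar{\sigma})(K^i_{j+1}-y)=V^{i-1}(K^i_{j+1})$; (b) the range statement for $\tilde{A}$ from Lemma \ref{le:temp.0} (applied with $y$ in place of $w$), which gives $\tilde{A}(\sigma)>A+B(y-K^i_j)=V^{i-1}(y)$ for every $\sigma>0$, where the equality uses the defining equation for $y$; and (c) the assumption of Step 1.3.b that $w>y$.

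For (1), I would simply rewrite the defining equation of $\bar{\sigma}$ as
$\tilde{B}(\bar{\sigma})=\bigl(V^{i-1}(K^i_{j+1})-\tilde{A}(\bar{\sigma})\bigr)/(K^i_{j+1}-y)$, and then invoke the inductive hypothesis from Step 0 that $V^{i-1}$ is strictly below the market time values at maturity $T_i$, i.e.\ $V^{i-1}(K^i_{j+1})<\bar{V}(T_i,K^i_{j+1})$.

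For (2), I would exploit the geometric meaning of $w$. By the definition of $w$, the line $\ell_1(K)=A+B(K-K^i_j)$ and the line $\ell_2(K)=\bar{V}(T_i,K^i_{j+1})+B_1(K-K^i_{j+1})$ intersect at $K=w$. Using (\ref{eq.calib.B.ineq}) one checks $\ell_1(K^i_{j+1})<\ell_2(K^i_{j+1})$, and using the convexity of the market time values (which implies $B_1>\bigl(\bar{V}(T_i,K^i_{j+1})-\bar{V}(T_i,K^i_j)\bigr)/(K^i_{j+1}-K^i_j)$) one checks $\ell_1(K^i_j)>\ell_2(K^i_j)$. Hence $\ell_1>\ell_2$ on $(K^i_j,w)$. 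Since $y<w$ by hypothesis and $\ell_1(y)=V^{i-1}(y)$ by definition of $y$, this gives $V^{i-1}(y)>\bar{V}(T_i,K^i_{j+1})-B_1(K^i_{j+1}-y)$, i.e.\ $B_1>\bigl(\bar{V}(T_i,K^i_{j+1})-V^{i-1}(y)\bigr)/(K^i_{j+1}-y)$. Combining with $\tilde{A}(\bar{\sigma})>V^{i-1}(y)$ from (b) yields
\[
\frac{\bar{V}(T_i,K^i_{j+1})-\tilde{A}(\bar{\sigma})}{K^i_{j+1}-y}
<\frac{\bar{V}(T_i,K^i_{j+1})-V^{i-1}(y)}{K^i_{j+1}-y}<B_1,
\]
which is (2).

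For (3), set $f(K)=\tilde{A}(\bar{\sigma})+\tilde{B}(\bar{\sigma})(K-y)-V^{i-1}(K)$. As a linear function minus the convex function $V^{i-1}$, $f$ is concave on $[y,K^i_{j+1}]$. The defining equation of $\bar{\sigma}$ gives $f(K^i_{j+1})=0$, and (b) gives $f(y)=\tilde{A}(\bar{\sigma})-V^{i-1}(y)>0$. By concavity $f$ dominates its chord on $[y,K^i_{j+1}]$, so for $K\in[y,K^i_{j+1})$
\[
f(K)\ge f(y)\frac{K^i_{j+1}-K}{K^i_{j+1}-y}+f(K^i_{j+1})\frac{K-y}{K^i_{j+1}-y}
=f(y)\frac{K^i_{j+1}-K}{K^i_{j+1}-y}>0,
\]
which is (3). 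The only part requiring real care is (2): the geometric unpacking of $w>y$ in terms of $B_1$ and $V^{i-1}(y)$ is the main content, and it is the step I would be most worried about getting wrong, mainly in verifying the relative positions of $\ell_1$ and $\ell_2$ at the endpoints of $[K^i_j,K^i_{j+1}]$ so that the intersection is genuinely characterised by the sign change on either side of $w$.
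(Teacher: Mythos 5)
Your proposal is correct and follows essentially the same route as the paper: inequality (1) from the defining equation of $\bar{\sigma}$ together with $V^{i-1}(K^i_{j+1})<\bar{V}(T_i,K^i_{j+1})$, inequality (2) from $\tilde{A}(\bar{\sigma})>A+B(y-K^i_j)$ together with the algebraic fact that $A+B(y-K^i_j)+B_1(K^i_{j+1}-y)>A+B(w-K^i_j)+B_1(K^i_{j+1}-w)=\bar{V}(T_i,K^i_{j+1})$ (which is exactly what your $\ell_1$-versus-$\ell_2$ comparison unpacks, just phrased geometrically rather than as the one-line chain the paper uses), and inequality (3) from the convexity of $V^{i-1}$ and the two boundary comparisons at $y$ and $K^i_{j+1}$.
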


The proof of Lemma $\ref{le:temp.3.1}$ is given in Appendix B.
We add the next element of the partition, $\nu_{l(j)+1}=y$, and the next value of the diffusion coefficient, $\sigma_{l(j)+1}=\bar{\sigma}$. Finally, we repeat Step $1$, with $y$ in lieu of $K^i_j$, and with $\tilde{A}(\bar{\sigma})$ and with $\tilde{B}(\bar{\sigma})$ in lieu of $A$ and $B$ respectively. The already constructed time value function $V^i$, dominates $V^{i-1}$, on the interval $[L_i,y]$. Due to Lemma $\ref{le:temp.3.1}$, the values of $\tilde{A}(\bar{\sigma})$ and $\tilde{B}(\bar{\sigma})$, satisfy all the properties assumed for $A$ and $B$ at the beginning of Step $1$. In addition, due to the last inequality in Lemma $\ref{le:temp.3.1}$, the graphs of $\tilde{A}(\bar{\sigma}) + \tilde{B}(\bar{\sigma})(K-y)$ and $V^{i-1}$ do not intersect on $[y,K^i_{j+1}]$ (cf. Figure $\ref{fig:6}$). Hence, the algorithm in Step $1.3$ ends up in case a.
As a result, we obtain the values of $\nu_{l(j)+2}$, $\sigma_{l(j)+2}$, $\nu_{l(j)+3}=K_{j+1}$, $\sigma_{l(j)+3}$, and set $l(j+1)=l(j)+3$.


We repeat Step $1$, increasing $j$ by one each time, as long as $K^i_{j+1}< x$. Thus, we obtain an interpolation of the time value function on the interval $K\in[L_i,K^i_{j}]$, where $j$ is such that $x=K^i_{j+1}$.

\noindent{\bf Step 2}. Perform Step $1$, moving from $K=U_i$ backwards. Namely, we consider the change of variables: from $K\in[L_i,U_i]$ to $K'=L_i+U_i-K\in[L_i,U_i]$. We apply this change of variables to the market time values, as well as to the interpolated time value functions for earlier maturities. In particular, we obtain the new underlying level $x'=L_i+U_i-x$, as well as the new market time values $\bar{V}(T_i,L_i+U_i-K_j')$, for all strikes $K_j'=L_i+U_i-K^i_{N_i+1-j}$, with $j=0,\ldots,N_i+1$. Then, we repeat Step $1$, using this new, modified, data as an input. This gives us an interpolated time value function on the interval $K'\in [L_i,K_j']$, where $j$ is such that $x'=K_{j+1}'$. In the original variables, we obtain an interpolation of the time value function on the interval $K\in[K^i_{N_i+1-j},U_i]$, where, in turn, $x=K^i_{N_i-j}$. It is easy to see that the interpolated time value, on the interval $[K^i_{N_i+1-j},U_i]$, is in the form (\ref{eq.2.1}).

\noindent{\bf Step 3}. Finally, we extend the interpolated time value function to the entire interval $[L_i,U_i]$.
Recall that $x$ coincides with one of the strikes, say $K^i_j$. In Steps $1$ and $2$, we constructed the interpolated time value function for $K \in [L_i,K^i_{j-1}]\cup[K^i_{j+1},U_i]$. Notice that, repeating the algorithm outlined in Steps $1$ and $2$, we can extend $V^i$ to the entire interval $[L_i,U_i]$, so that it matches all market time values, dominates $V^{i-1}$ from above, and, in addition, it is in the form (\ref{eq.2}) on $[L_i,x]$, in the form (\ref{eq.2.1}) on $[x,U_i]$, and it has the prescribed left and right derivatives at $x$, $B_1$ and $B_2$ respectively. In order for the interpolation function $V^i$ to be a bona fide time value, it needs to have a jump of size $-1$ at $x$. In other words, we need to choose $B_1$ and $B_2$ so that $B_1-1=B_2$. The only constraints on the choices of $B_1$ and $B_2$ come from the market data and the previous constructed interpolation function on $K \in [L_i,K^i_{j-1}]\cup[K^i_{j+1},U_i]$. These constraints are given by: 
\begin{equation}\label{eq.calb.main.eq.last}
B_1 > \frac{\bar{V}(T_i,K^i_j) - \bar{V}(T_i,K^i_{j-1})}{K^i_j - K^i_{j-1}},\,\,\,\,\,\,\,\,\,\,
B_2 < \frac{\bar{V}(T_i,K^i_{j+1}) - \bar{V}(T_i,K^i_{j})}{K^i_{j+1} - K^i_{j}}
= \frac{\bar{C}(T_i,K^i_{j+1}) - \bar{C}(T_i,K^i_{j})}{K^i_{j+1} - K^i_{j}}
\end{equation}
Thus, we choose an arbitrary $\delta_4\in(0,1)$ and define
$$
B_1 = \delta_4 + 
\delta_4 \frac{\bar{V}(T_i,K^i_{j+1}) - \bar{V}(T_i,K^i_{j}) }{K^i_{j+1}-K^i_{j}}
+ \left(1 - \delta_4\right) \frac{\bar{V}(T_i,K^i_{j})  - \bar{V}(T_i,K^i_{j-1}) }{K^i_{j}-K^i_{j-1}}
$$
$$
 =
\delta_4 \frac{\bar{C}(T_i,K^i_{j+1}) - \bar{C}(T_i,K^i_{j}) }{K^i_{j+1}-K^i_{j}}
- \delta_4 \frac{\bar{C}(T_i,K^i_{j})  - \bar{C}(T_i,K^i_{j-1}) }{K^i_{j}-K^i_{j-1}}
+ \frac{\bar{V}(T_i,K^i_{j})  - \bar{V}(T_i,K^i_{j-1}) }{K^i_{j}-K^i_{j-1}}
$$
Similarly, we define
$$
B_2 =B_1-1
= \delta_4 - 1 + 
\delta_4 \frac{\bar{V}(T_i,K^i_{j+1}) - \bar{V}(T_i,K^i_{j}) }{K^i_{j+1}-K^i_{j}}
+ \left(1 - \delta_4\right) \frac{\bar{V}(T_i,K^i_{j})  - \bar{V}(T_i,K^i_{j-1}) }{K^i_{j}-K^i_{j-1}}
$$
$$
=
\delta_4\frac{\bar{C}(T_i,K^i_{j+1}) - \bar{C}(T_i,K^i_{j})}{K^i_{j+1}-K^i_{j}}
+(1-\delta_4) \frac{\bar{C}(T_i,K^i_{j}) - \bar{C}(T_i,K^i_{j-1})}{K^i_{j}-K^i_{j-1}}
$$
Since the market call prices are strictly admissible (in particular, strictly convex across strikes), we conclude that $B_1$ and $B_2$ satisfy (\ref{eq.calb.main.eq.last}).

The following lemma, whose proof is given in Appendix B, completes the proof of the theorem.
 
 \begin{lemma}\label{le:calib.le.1.2}
For any $z>0$ and any strictly admissible market data, $\left\{ T_i, K^i_j, \bar{C}(T_i,K^i_j), L_i, U_i, x\right\}$, the vectors $\nu^i$ and $\sigma^i$, for $i=1,\ldots,M$, produced by the above algorithm (Steps $0-3$) are such that the associated price curves $\left\{\bar{C}^i=C^{\nu^i,\sigma^i,z,x}\right\}_{i=1}^M$ satisfy Assumption $\ref{assump:ass.1}$.
\end{lemma}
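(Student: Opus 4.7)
The plan is to verify each of the three conditions of Assumption $\ref{assump:ass.1}$ for the interpolated call price curves $\bar{C}^i = C^{\nu^i,\sigma^i,z,x} = V^i + (x-K)^+$ produced by the algorithm. The key identity, coming from the PDDE (\ref{eq.calib.TV.2}) with piecewise constant $a_i$, is that on each subinterval where $a_i$ takes the constant value $\sigma$ one has $(V^i)''(K) = (z^2/\sigma^2)\,V^i(K)$. Throughout I will use that $V^0 \equiv 0$ and, for $i \geq 1$, $V^{i-1}$ is extended by zero outside $[L_{i-1},U_{i-1}] \subset [L_i,U_i]$, consistent with the convention $\bar{C}^{m}(K)=\bar{C}^{m-1}(K)$ for $K\notin(L_m,U_m)$.

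Conditions 1 and 2 follow quickly. For condition 1, the algorithm builds $V^i$ to satisfy $V^i(L_i)=V^i(U_i)=0$ via the $\sinh$-type solutions with zero boundary data, and since $V^{i-1}$ also vanishes at $L_i$ and $U_i$, the difference $\bar{C}^i - \bar{C}^{i-1} = V^i - V^{i-1}$ vanishes at both endpoints. For condition 2, the identity above gives $(V^i)''(K)>0$ on each subinterval of constancy of $a_i$ (using $V^i>0$ in $(L_i,U_i)$, which is immediate from the explicit form of the constructed solutions together with the sign conditions maintained by the algorithm); this function is continuous on each such subinterval, with first-kind jumps at the finitely many breakpoints $\nu^i_j$ (which may include the point $x$).

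Condition 3 is the substantive step. Rewriting
\[
\frac{\bar{C}^i(K)-\bar{C}^{i-1}(K)}{\partial^2_{KK}\bar{C}^i(K)} = \frac{a_i^2(K)}{z^2}\Bigl(1-\frac{V^{i-1}(K)}{V^i(K)}\Bigr),
\]
the upper bound is automatic because $a_i^2$ takes only finitely many positive values. For the uniform lower bound I would partition $(L_i,U_i)$ into: (a) the set outside $[L_{i-1},U_{i-1}]$, where $V^{i-1}\equiv 0$ and the ratio is simply $a_i^2/z^2$, bounded below; (b) any compact subset of $(L_{i-1},U_{i-1})$, where continuity plus the strict pointwise inequality $V^{i-1}<V^i$ guaranteed by Steps 1.3.a and 1.3.b yields a uniform bound $V^{i-1}/V^i \leq 1-c$ by compactness; (c) one-sided neighborhoods of $L_{i-1}$ and $U_{i-1}$ inside $[L_i,U_i]$.

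The main obstacle is region (c), where numerator and denominator can vanish together. When $L_i<L_{i-1}$, $V^{i-1}$ is identically zero on a left neighborhood of $L_{i-1}$ while $V^i(L_{i-1})>0$, so $V^{i-1}/V^i$ extends continuously by $0$ across $L_{i-1}$ and there is no problem. The delicate case is $L_i=L_{i-1}$: here both functions vanish linearly at the common endpoint, with $V^j(K)\sim (V^j)'_+(L_i)(K-L_i)$ as $K\downarrow L_i$. The invariant I would appeal to is that the algorithm's choice (\ref{eq.calib.initderiv}) with $\delta_2\in(0,1)$ forces $(V^i)'_+(L_i)=B>(V^{i-1})'_+(L_i)$, because the inductive hypothesis $V^{i-1}(K^i_1)<\bar{V}(T_i,K^i_1)$ together with strict convexity of $V^{i-1}$ (from $(V^{i-1})''>0$ on each piece) gives $(V^{i-1})'_+(L_i)\leq V^{i-1}(K^i_1)/(K^i_1-L_i)<\bar{V}(T_i,K^i_1)/(K^i_1-L_i)$, a strict gap. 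Thus $\limsup_{K\downarrow L_i} V^{i-1}(K)/V^i(K) = (V^{i-1})'_+(L_i)/(V^i)'_+(L_i)<1$, and the symmetric argument through Step 2 handles $U_i$. Splicing the regional bounds gives a single $c>0$ uniformly on $(L_i,U_i)$, completing condition 3 and hence the lemma.
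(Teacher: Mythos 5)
Your proof is correct and follows essentially the same route as the paper's: both reduce condition~3 of Assumption~\ref{assump:ass.1} to the boundary behavior of $V^{i-1}/V^i$ near $L_i$ and $U_i$ (the interior being handled by compactness and the strict inequality $V^i>V^{i-1}$), and both settle the delicate coincident-endpoint case $L_i=L_{i-1}$ by the strict derivative gap $(V^i)'_+(L_i) > (V^{i-1})'_+(L_i)$ coming from the choice~(\ref{eq.calib.initderiv}) with $\delta_2\in(0,1)$. One small remark: your prefactor $a_i^2(K)/z^2$ is the correct one; the paper's displayed identity writes the reciprocal $z^2/a_i^2(K)$, a typo that does not affect the argument since $a_i^2$ takes finitely many positive values and is hence bounded above and away from zero.
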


\begin{remark}
Notice that the constants $\delta_i$, for $i=1,2,3,4$, appearing in the proof, can be chosen arbitrarily (within the interval $(0,1)$). The choice of the boundary points $\left\{L_i,U_i \right\}$ introduces additional flexibility in the algorithm. It is worth mentioning that the properties of interpolated implied volatilities and the associated diffusion coefficients may depend on the choices of these constants. Therefore, one may try to optimize over different values of $\left\{\delta_i\right\}$ and $\left\{L_i,U_i \right\}$, for example, to improve the smoothness of resulting diffusion coefficients. Continuing along these lines, one may also want to iterate over all possible choices of arbitrage-free option prices, consistent with the market. Recall that the market data contains only bid and ask quotes, and we propose to choose the exact values of option prices by solving a linear feasibility problem (cf. Remark $\ref{rem:addStrikes}$). Optimizing over all possible choices of option prices, in theory, may improve the smoothness properties of the output. However, this would come at a very high computational cost, and would lead to a multivariate nonlinear optimization problem, which we tried to avoid in the present paper.
\end{remark}

\section{Summary and Extensions}

In this paper, we considered the
risk-neutral process for the (forward) price of an asset, underlying
a set of European options, in the form of a driftless time-homogeneous diffusion
subordinated to an unbiased gamma process. We named this model LVG
and showed that it yields forward and backward PDDE's for option prices.
These PDDE's can be used to speed up the computations, but, more importantly,
they allow for explicit exact calibration of a LVG model to European options prices.
In particular, the diffusion coefficient of the underlying can be represented explicitly through a
single (continuous) implied volatility smile.
A mild extension of LVG allows for analogous calibration to multiple continuous smiles.
Finally, we showed how the aforementioned PDDE's can be used to construct continuously differentiable 
(with piecewise continuous second derivative) arbitrage-free implied volatility smiles from a finite set of option prices, with multiple strikes and maturities.
The resulting calibration algorithm does not involve any multivariate optimization, and it only requires solutions to one-dimensional root-search problems and elementary functions.
We illustrated the theory with numerical results based on a real market data.


There are many possible avenues for future research. 
One of the potential criticism of the proposed calibration method is the fact that the resulting diffusion coefficient has (a finite number of) discontinuities. 
This feature may create difficulties when computing the greeks, or prices of path-dependent and American-style options in the calibrated model. An idea on how to resolve this problem is outlined briefly in the few paragraphs preceding the proof of Theorem $\ref{th:calib.th.main}$. However, a more detailed analysis would be very useful.

Another limitation of the results presented herein is that the risk-neutral price process of the underlying is assumed to be a martingale. 
Indeed, in many cases, the dividend and interest rates are not negligible. If they are assumed to be deterministic, then, the calibration problem can be easily reduced to the driftless case by discounting. However, as discussed in Remark $\ref{rem:shortMat}$, the assumption of deterministic rates may not be consistent with the market. Therefore, an extension of the proposed model that allows for a non-zero drift of the underlying is very desirable.
In particular, it seems possible to extend the results of this paper to a risk-neutral price process obtained as a time change of a diffusion with drift. However, certain challenges may arise and they need to be addressed: in particular, the associated PDDE's may no longer have closed form solutions even if the diffusion and drift coefficients are piecewise constant.

A more thorough numerical analysis, in particular, investigating the prices of exotic products produced by the calibrated models, is also very desirable. We did not conduct such a study in the present paper, due to its already quite extensive length.

Finally, one can examine the problem of calibrating a LVG model to other financial instruments, such as the barrier or American options.

\section*{Appendix A}

\emph{\bf Proof of Theorem $\ref{th:DD.th1}$}.
Theorems 2.2 and 2.7 in \cite{KrylovVMO}, as well as Theorem 1.II in \cite{Campanato} (a respective theorem is applicable depending on whether each of the boundary points is finite) provide existence and uniqueness results for a class of parabolic PDEs with initial and boundary conditions. This class includes the following equation:
\begin{equation*}
\partial_{\tau} u(\tau,x) = \frac{1}{2} a^2(x) \partial^2_{x} u(\tau,x) + \frac{1}{2} a^2(x) \phi''(x) ,
\,\,\,\,\,\,\,\,\,\, u(0,x)=u(\tau,L_+)=u(\tau,U_-)=0,
\end{equation*}
implying that it has a unique weak solution.
In addition $u$ is absolutely bounded by a constant (which follows from the estimates in Theorems 2.2 and 2.7, in \cite{KrylovVMO}, and in Theorem 1.II, in \cite{Campanato}), and it is vanishing at $L$ and $U$.
Then, $v=u+\phi$ solves (\ref{eq.DD.BS}) and satisfies $v(0,x)=\phi(x)$. In addition, it satisfies: $v(\tau,L_+)=\phi(L_+)$ and/or $v(\tau,U_-)=\phi(U_-)$, whenever $L$ and/or $U$ are finite. To show that this solution coincides with $V^{D,\phi}$, we apply the generalized It\^o's formula (cf. Theorem 2.10.1 in \cite{KrylovControlledDiff}), to obtain
$$
\EE^x v(T-\zeta^{r,R}\wedge T, D_{\zeta^{r,R}\wedge T}) = v(T,x),
$$
where $\zeta^{r,R}$ is the first exit time of $D$ from the interval $(r,R)$, with arbitrary $L<r<R<U$. Notice that $v(\tau,x)$ is bounded by an exponential of $x$, uniformly over $\tau\in(0,T)$. Thus, we can apply the dominated convergence theorem, to conclude that, as $r\downarrow L$ and $R\uparrow U$,
$$
v(T,x) = \EE^x v(T-\zeta^{r,R}\wedge T, D_{\zeta^{r,R}\wedge T}) \rightarrow \EE^x v(T-\zeta\wedge T, D_{\zeta\wedge T})
= \EE^x v(T-\zeta, D_{\zeta})\bone_{\left\{\zeta\leq T\right\}} 
+ \EE^x \phi(D_{T})\bone_{\left\{\zeta > T\right\}},
$$
where $\zeta$ is the time of the first exit of $D$ from $(L,U)$. Recall that the diffusion can only exit the interval $(L,U)$ through a finite boundary point. Thus,
$$
v\left(T-\zeta, D_{\zeta}\right)\bone_{\left\{\zeta\leq T\right\}} 
= v\left(T-\zeta, L_+\right)\bone_{\left\{\zeta\leq T,\, D_{\zeta} = L_+\right\}} 
+ v\left(T-\zeta, U_-\right)\bone_{\left\{\zeta\leq T,\, D_{\zeta} = U_-\right\}}
$$
$$
= \phi(L_+)\bone_{\left\{\zeta\leq T,\, D_{\zeta} = L_+\right\}} 
+ \phi(U_-)\bone_{\left\{\zeta\leq T,\, D_{\zeta} = U_-\right\}}
 = \phi(D_{\zeta}) \bone_{\left\{\zeta\leq T\right\}}
$$
Since the diffusion is absorbed at the boundary points, we have:
$$
v(T,x) = \EE^x\left( \phi(D_{T})\bone_{\left\{\zeta > T\right\}}\right) + \EE^x\left( \phi(D_{\zeta}) \bone_{\left\{\zeta\leq T\right\}}\right)
= \EE^x\left( \phi(D_{T})\bone_{\left\{\zeta > T\right\}}\right) + \EE^x\left( \phi(D_{T}) \bone_{\left\{\zeta\leq T\right\}}\right)
= \EE^x \phi(D_{T}),
$$
which completes the proof of the theorem.

\emph{\bf Proof of Theorem $\ref{th:DD.th2}$}. Let us apply the It\^o-Tanaka rule (cf. Theorem 7.1, on page 218, in \cite{KaratzasShreve}) to the call payoff:
$$
(D_t - K)^+ = (D_0 - K)^+ + \int_0^t a(D_s) \bone_{\left\{ D_s \geq K \right\}} d W_s + \Lambda_t(K),
$$
where $\Lambda$ is the local time of $D$. Taking expectations, we obtain
$$
C^D(t,K,x) = \EE^x (D_t - K)^+ = (x - K)^+ + \EE \Lambda_t(K).
$$
Multiplying by $\phi$ and integrating, we use the properties of local time (cf. Theorem 7.1, on page 218, in \cite{KaratzasShreve}), to obtain
$$
\int_L^U C^D(t,K,x) \phi(K) dK = \int_L^U (x - K)^+ \phi(K) dK + \frac{1}{2} \int_0^t \EE^x \left(a^2(D_u) \phi(D_u)\right) du.
$$
Using Fubini's theorem, we obtain the following identity:
$$
\QQ^{D,x}(D_u \leq K) = \EE^x \bone_{\left\{ D_u \leq K \right\}} = \partial_K C^D(u,K,x),
$$
which holds for almost all $K\in(L,U)$.
Therefore, the measure generated by $\partial_K C^D(u,\cdot,x)$ is the distribution of $D_u$ under measure $\QQ^{D,x}$.
Thus, we conclude:
$$
\int_L^U C^D(t,K,x) \phi(K) dK = \int_L^U (x - K)^+ \phi(K) dK + \frac{1}{2} \int_0^t \int_L^U a^2(K) \phi(K) d\left(\partial_K C^D(u,K,x)\right) du.
$$

\emph{\bf Proof of Theorem $\ref{th:pdde.BS.th1}$}. 
Let us multiply the left hand side of equation (\ref{eq.DD.BS}) by the density of $\Gamma_{t}$ and an arbitrary test function $\varphi\in C^{\infty}_0\left((L,U)\right)$, then, divide it by $a^2/2$ and integrate with respect to $\tau$ and $x$:
$$
\int_{L}^U \int_0^T \frac{\tau^{t/t^* - 1}e^{-\tau/t^*}}{(t^*)^{t/t^*} \Gamma(t/t^*)} \partial_{\tau}V^{D,\phi}(\tau,x) \frac{2\varphi(x)}{a^2(x)} d\tau dx
= \int_{L}^U \int_0^T \frac{\tau^{t/t^* - 1}e^{-\tau/t^*}}{(t^*)^{t/t^*} \Gamma(t/t^*)} \partial_{\tau} \left( V^{D,\phi}(\tau,x) - \phi(x) \right) d\tau \frac{2\varphi(x)}{a^2(x)} dx
$$
\begin{equation}\label{eq.PfThm5.conv.1}
= \frac{T^{t/t^* - 1}e^{-T/t^*}}{(t^*)^{t/t^*} \Gamma(t/t^*)} \int_{L}^U \left(V^{D,\phi}(T,x) - \phi(x)\right)\frac{2\varphi(x)}{a^2(x)} dx 
\end{equation}
$$
- \int_0^T \frac{\left((t/t^* - 1)\tau^{t/t^* - 2} - \tau^{t/t^* - 1}/t^* \right)e^{-\tau/t^*}}{(t^*)^{t/t^*} \Gamma(t/t^*)} 
\int_{L}^U \left( V^{D,\phi}(\tau,x) - \phi(x) \right) \frac{2\varphi(x)}{a^2(x)} dx d\tau
$$  
$$
\rightarrow \int_{L}^U \int_0^{\infty} \frac{\left( \tau^{t/t^* - 1}/t^* - (t/t^* - 1)\tau^{t/t^* - 2} \right)e^{-\tau/t^*}}{(t^*)^{t/t^*} \Gamma(t/t^*)} 
 \left( V^{D,\phi}(\tau,x) - \phi(x) \right) d\tau \frac{2\varphi(x)}{a^2(x)} dx,
$$
as $T\rightarrow\infty$.
In the above, we applied Fubini's theorem and used the fact that $\phi$ is absolutely bounded and, hence, so is $V^{D,\phi}$. Recall also that $\tau\geq t^*$. 
Applying Fubini's theorem once more, we interchange the order of integration in the left hand side of (\ref{eq.PfThm5.conv.1}), and, using equation (\ref{eq.DD.BS}), integrate by parts with respect to $x$. This, together with the convergence in (\ref{eq.PfThm5.conv.1}), implies:
\begin{equation}\label{eq.PfThm5.conv.2}
\int_0^{\infty}  \frac{\tau^{t/t^* - 1}e^{-\tau/t^*}}{(t^*)^{t/t^*} \Gamma(t/t^*)} \int_L^U V^{D,\phi}(\tau,x) \varphi''(x)dx d\tau
\end{equation}
$$
=
\int_L^U \int_0^{\infty} \frac{\left( \tau^{t/t^* - 1}/t^* - (t/t^* - 1)\tau^{t/t^* - 2} \right)e^{-\tau/t^*}}{(t^*)^{t/t^*} \Gamma(t/t^*)} \left(V^{D,\phi}(\tau,x) - \phi(x) \right) d\tau \frac{2}{a^2(x)} \varphi(x) dx,
$$
where both integrals are absolutely convergent.
Using Fubini's theorem and the definition of $V^{\phi}$, we conclude that 
\begin{equation}\label{eq.PfThm5.conv.2.1}
\int_0^{\infty}  \frac{\tau^{t/t^* - 1}e^{-\tau/t^*}}{(t^*)^{t/t^*} \Gamma(t/t^*)} \int_L^U V^{D,\phi}(\tau,x) \varphi''(x)dx d\tau
= \int_L^U V^{\phi}(t,x) \varphi''(x)dx
\end{equation}
Next, we notice that, since $t\geq t^*$, the expression inside the outer integral in the right hand side of (\ref{eq.PfThm5.conv.2}) is a locally integrable function of $x\in(L,U)$. Then, the fact that equations (\ref{eq.PfThm5.conv.1}) and (\ref{eq.PfThm5.conv.2}) hold for all test functions $\varphi$ yields:
\begin{equation}\label{eq.PfThm5.conv.2.2}
\partial^2_{xx}V^{\phi}(t,x)  
= \frac{2}{a^2(x)} \int_0^{\infty} \frac{\left( \tau^{t/t^* - 1}/t^* - (t/t^* - 1)\tau^{t/t^* - 2} \right) e^{-\tau/t^*}}{(t^*)^{t/t^*} \Gamma(t/t^*)} \left(V^{D,\phi}(\tau,x) - \phi(x) \right) d\tau,
\end{equation}
which holds for every $t\geq t^*$ and almost all $x\in(L,U)$.
Since the second derivative of $V^{\phi}(\tau,\cdot)$ is locally integrable, its first derivative is absolutely continuous. To show that $V^{\phi}(\tau,L_+)=V^{\phi}(\tau,U_-)=0$, we recall the definition of $V^{\phi}$ as an expectation and apply the dominated convergence theorem.
When $t=t^*$, equation (\ref{eq.PfThm5.conv.2.2}) becomes
$$
\partial^2_{xx}V^{\phi}(t^*,x)
= \frac{2}{a^2(x) t^*}
\int_0^{\infty} \frac{e^{-\tau/t^*}}{t^*} \left(V^{D,\phi}(\tau,x) - \phi(x) \right) d\tau
= \frac{2}{a^2(x) t^*} \left(V^{\phi}(t,x) - \phi(x) \right),
$$
which holds for almost all $x\in(L,U)$.
Since $V^{\phi}(\tau,\cdot)$ and $\phi$ are continuous, the above equation holds for all points $x$ at which $a$ is continuous. Therefore, $V^{\phi}(t^*,\cdot)$ possesses all the properties stated in the theorem.

Next, we show uniqueness. Consider any function $u\in C^1(L,U)$, which has an absolutely continuous derivative and satisfies (\ref{eq.pdde.BS}), along with zero boundary conditions. Applying the generalized It\^o's formula (cf. Theorem 2.10.1 in \cite{KrylovControlledDiff}), we obtain:
$$
u(D_{t\wedge\zeta^{r,R}}) - u(D_0) = \int_0^{t\wedge\zeta^{r,R}} \frac{1}{2} a^2(D_s) u''(D_s) ds + \int_0^{t\wedge\zeta^{r,R}} u'(D_s) dD_s,
$$
where, as before, $\zeta^{r,R}$ is the first exit time of $D$ from the interval $(r,R)$, with arbitrary $L<r<R<U$. Notice that, since $u$ is continuous and satisfies zero boundary conditions, it is absolutely bounded. Due to equation (\ref{eq.pdde.BS}), $u''$ is absolutely bounded as well. Thus, taking expectations and applying the dominated convergence theorem, as $r\downarrow L$ and $R\uparrow U$, we obtain
$$
\EE^x u(D_{t}) - u(x) = \EE \int_0^{t} \frac{1}{2} a^2(D_s) u''(D_s) \bone_{\left\{s\leq\zeta\right\}} ds
$$
Since $\Gamma$ is independent of $D$, we can plug $\Gamma_{t^*}$, in place of $t$, into the above expression, to obtain: 
$$
\EE^x u\left(D_{\Gamma_{t^*}}\right) - u(x) = \EE^x \int_0^{\Gamma_{t^*}} \frac{1}{2} a^2(D_s) u''(D_s) \bone_{\left\{s\leq\zeta\right\}} ds,
$$
$$
 \int_0^{\infty} \frac{e^{-t/t^*}}{t^*} \EE^x u(D_{t}) dt - u(x) = \int_0^{\infty} \frac{e^{-t/t^*}}{t^*} \EE^x \int_0^{t} \frac{1}{2} a^2(D_s) u''(D_s) \bone_{\left\{s\leq\zeta\right\}} ds dt,
$$
$$
 \int_0^{\infty} \frac{e^{-t/t^*}}{t^*} \EE^x u(D_{t}) dt - u(x) = \int_0^{\infty} \frac{e^{-t/t^*}}{t^*} \EE^x \int_0^{t} \frac{1}{t^*} \left( u(D_s) - \phi(D_s) \right) \bone_{\left\{s\leq\zeta\right\}} ds dt,
$$
\begin{equation}\label{eq.PfThm5.eq.1}
 \int_0^{\infty} \frac{e^{-t/t^*}}{t^*} \EE^x u(D_{t}) dt - u(x) = \frac{1}{t^*} \int_0^{\infty} \frac{e^{-t/t^*}}{t^*} \int_0^{t}  \EE^x \left( u(D_s) - \phi(D_s) \right) ds dt,
\end{equation}
where we made use of the fact that $u(L_+)=u(U_-)=\phi(L_+)=\phi(U_-)=0$, to obtain (\ref{eq.PfThm5.eq.1}). Thus, we continue:
\begin{equation}\label{eq.PfThm5.eq.2}
\int_0^{\infty} \frac{e^{-t/t^*}}{t^*} \EE^x u(D_{t}) dt - u(x) = \frac{1}{t^*} \int_0^{\infty} e^{-t/t^*}  \EE^x \left( u(D_t) -  \phi(D_t) \right) dt,
\end{equation}
where we integrated by parts in the right hand side of (\ref{eq.PfThm5.eq.1}), to obtain (\ref{eq.PfThm5.eq.2}). This yields:
$$
u(x) = \frac{1}{t^*} \int_0^{\infty} e^{-t/t^*}  \EE^x \phi(D_t) dt = V^{\phi}(t^*,x)
$$

\emph{\bf Proof of Lemma $\ref{le:pdde.DUP.le1}$}. Let us show that $S_{t^*}$ has a density in $(L,U)$ under any $\QQ^x$. Consider arbitrary $K\in(L,U)$ and $\varepsilon\in(0,1)$, s.t. $[K-\varepsilon,K+2\varepsilon] \subset (L,U)$. Choose a smooth function $\rho_{\varepsilon}$, which dominates the indicator function of $[K,K+\varepsilon]$ from above, whose support is contained in $[K-\varepsilon,K+2\varepsilon]$, and which satisfies
$$
\int_L^U \rho^2_{\varepsilon}(x) dx \leq 2\varepsilon.
$$
Introduce
$$
u(x) = \EE^x \left(\rho_{\varepsilon}(S_{t^*})\right),
$$
for all $x\in(L,U)$. Theorem $\ref{th:pdde.BS.th1}$ implies that $u\in C^1(L,U)$, $u(L_+)=u(U_-)=0$, $u''\in L^2(L,U)$, and it satisfies:
\begin{equation}\label{eq.AA.1}
\frac{1}{2} a^2(x) u''(x) - \frac{1}{t^*} \left(u(x) - \rho_{\varepsilon}(x)\right) = 0,
\end{equation}
for all $x\in(L,U)$, except the points of discontinuity of $a$. Next, we need to recall the standard estimates for the weak solutions of elliptic PDEs via the norm of the right hand side (see, for example, \cite{KrylovVMOelliptic} and references therein). The only issue that needs to be resolved, before we can apply the standard results, is that, although $u$ belongs to the Sobolev space $W^2_2\left((L,U)\right)$, it may not belong to $W^2_2(\RR)$. This is due to the fact that $u'$ may have a discontinuity at $L$ or $U$, if the boundary point is attainable. To resolve this problem, we consider a sequence of functions $u_n\in C_0^{\infty}(\RR)\subset W^2_2(\RR)$ which approximates $u$ in the sense of the Sobolev norm $\|\cdot\|_{W^2_2(L,U)}$. 
Applying Remark 3.3, in \cite{KrylovVMOelliptic}, to $u_n$ and passing to the limit, as $n\rightarrow\infty$, we estimate the Sobolev norm of $u$: 
\begin{equation}\label{eq.SobEst.1}
\int_L^U u^2(x) + (u'(x))^2 + (u''(x))^2 dx \leq c_1 \int_L^U \rho^2_{\varepsilon}(x) dx,
\end{equation}
where the constant $c_1$ is independent of $\varepsilon$.
Next, the Morrey's inequality yields
$$
\sup_{x\in(L,U)} |u(x)| \leq c_2 \left( \int_L^U u^2(x) + (u'(x))^2 dx \right)^{1/2} \leq c_3 \left( \int_L^U \rho^2_{\varepsilon}(x) dx\right)^{1/2} \leq c_3 \sqrt{2\varepsilon}
$$
Finally, we obtain
$$
\QQ^x(S_{t^*}\in[K,K+\varepsilon]) \leq u(x) \leq c_3 \sqrt{2\varepsilon},
$$
Therefore, the distribution of $S_{t^*}$ is absolutely continuous with respect to the Lebesgue measure on $(L,U)$ and, in particular, it has a density. It was shown in the proof of Theorem $\ref{th:DD.th2}$ that
$$
\QQ^x(S_{t^*}\leq K)=\partial_K C(t^*,K,x)
$$
Therefore, $\partial_K C(t^*,K,x)$ is absolutely continuous in $K\in(L,U)$, and its derivative, $\partial^2_{KK} C(t^*,\cdot,x)$, is the density of $S_{t^*}$ under $\QQ^x$.

\emph{\bf Proof of Theorem $\ref{th:pdde.DUP.th1}$}.
Differentiating both sides of (\ref{eq.DD.dup1}) with respect to $\tau$, then, multiplying them by an exponential and a test function $\varphi\in C^{\infty}_0\left((L,U)\right)$, we integrate, to obtain
\begin{equation}\label{eq.PfThm7.eq.0}
\int_0^{T} \frac{e^{-\tau/t^*}}{t^*} \frac{d}{d \tau} \int_L^U C^D(\tau,K,x) \frac{2}{a^2(K)} \varphi(K) dK d\tau 
= \int_0^{T} \frac{e^{-\tau/t^*}}{t^*} \int_L^U  \varphi(K) d\left(\partial_{K} C^D(\tau,K,x)\right) d\tau
\end{equation}
Let us perform integration by parts in the left hand side of (\ref{eq.PfThm7.eq.0}):
$$
\int_0^{T} \frac{e^{-\tau/t^*}}{t^*} \frac{d}{d \tau} \int_L^U C^D(\tau,K,x)  \frac{2}{a^2(K)} \varphi(K) dK d\tau 
= \frac{e^{-T/t^*}}{t^*} \int_L^U C^D(T,K,x)  \frac{2}{a^2(K)}\varphi(K) dK
$$
\begin{equation}\label{eq.PfThm7.eq.1}
- \frac{1}{t^*} \int_L^U (x-K)^+  \frac{2}{a^2(K)}\varphi(K) dK 
+ \frac{1}{t^*} \int_0^{T} \frac{e^{-\tau/t^*}}{t^*} \int_L^U C^D(\tau,K,x)  \frac{2}{a^2(K)}\varphi(K) dK d\tau,
\end{equation}
where we made use of the fact that 
$$
\lim_{\tau\downarrow 0} C^D(\tau,K,x) = (x-K)^+,
$$
which, in turn, follows from square integrability of the martingale $D$ and the dominated convergence theorem. 
Using (\ref{eq.PfThm7.eq.1}), as well as the dominated convergence and Fubini's theorems, we obtain:
$$
\lim_{T\rightarrow\infty} \int_0^{T} \frac{e^{-\tau/t^*}}{t^*} \frac{d}{d \tau} \int_L^U C^D(\tau,K,x)  \frac{2}{a^2(K)}\varphi(K) dK d\tau
=  \int_L^U \frac{2}{a^2(K)t^*}\left(C(t^*,K,x) - (x-K)^+\right) \varphi(K) dK
$$
On the other hand, integrating by parts and applying the Fubini and dominated convergence theorems once more, we derive:
$$
\lim_{T\rightarrow\infty} \int_0^{T} \frac{e^{-\tau/t^*}}{t^*} \int_L^U  \varphi(K) d\left(\partial_{K} C^D(\tau,K,x)\right) d\tau
= \lim_{T\rightarrow\infty}  \int_L^U  \varphi''(K) \int_0^{T} \frac{e^{-\tau/t^*}}{t^*} C^D(\tau,K,x) d\tau dK
$$
$$
= \int_L^U C(t^*,K,x) \varphi''(K) dK = \int_L^U \partial^2_{KK} C(t^*,K,x) \varphi(K) dK
$$
Collecting the above, we rewrite equation (\ref{eq.PfThm7.eq.0}) as
\begin{equation*} 
\int_L^U \left( \frac{1}{t^*} \left( C(t^*,K,x) - (x-K)^+\right) - \frac{1}{2} a^2(K) \partial^2_{KK} C(t^*,K,x) \right) \varphi(K) dK = 0,
\end{equation*}
which holds for any $\varphi\in C^{\infty}_0\left((L,U)\right)$. This implies that equation (\ref{cfpdde}) holds for almost every $K\in(L,U)$. Since $C(t^*,\cdot,x)$ and $(x-\cdot)^+$ are continuous functions, we conclude that (\ref{cfpdde}) holds everywhere except the points of discontinuity of $a$.

The boundary conditions for call price function follow from the dominated convergence theorem. The square integrability of $\partial^2_{KK} C(t^*,\cdot,x)$ follows from the fact that it is absolutely integrable and vanishes at the boundary (the latter, in turn, follows from the boundary conditions and equation (\ref{cfpdde})). To show uniqueness, we assume that there exists another function, with the same properties, and denote the difference between the two by $u$. Function $u$ is a weak solution of the homogeneous version of (\ref{cfpdde}):
$$
\frac{1}{2} a^2(x) u''(x) - \frac{1}{t^*} u(x) = 0,
$$
vanishing at the boundary. Finally, as in the proof of Lemma $\ref{le:pdde.DUP.le1}$, we apply Remark 3.3 in \cite{KrylovVMOelliptic} to obtain an estimate for the Sobolev norm of $u$, analogous to (\ref{eq.SobEst.1}), with zero in lieu of $\rho_{\varepsilon}$. This implies that $u\equiv 0$.

\emph{\bf Proof of Theorem $\ref{th:pdde.DUP.th2}$}.
Notice that, for a fixed $K\in(L,U)$, the function $C(t^*,K,\,.\,)$ is measurable. This follows from the measurability of $C^D(T,K,\,.\,)$, which, in turn, follows from the measurability of the mapping $x\mapsto \QQ^{D,x}$, as a property of Markov family. At any point $K\in(L,U)$ that does not coincide with a discontinuity point of $a$, the call price function is twice continuously differentiable in $K$, and, hence, its second derivative, $\partial^2_{KK}C(t^*,K,x)$, is also measurable as a function of $x$. Thus, we can use (\ref{cfpdde}) to obtain
\begin{equation}\label{eq.appendixA.pdde.DUP.th2.1}
\frac{1}{t^*} \left( \EE^xC(t^*,K,S_{\tau}) - \EE^x (S_{\tau}-K)^+\right)
 =   \frac{a^2(K)}{2} \EE^x \left(\partial^2_{KK}C(t^*,K,S_{\tau})\right),
\end{equation}
which holds for all $\tau\geq0$ and all $K\in(L,U)$ except the points of discontinuity of $a$. Since $\partial^2_{KK}C(t^*,\cdot,S_{\tau})$ is nonnegative and integrates to one, the application of Fubini's theorem yields
\begin{equation}\label{eq.PfThm8.eq.1}
\EE^x \left(\partial^2_{KK}C(t^*,K,S_{\tau})\right) = \partial^2_{KK}\EE^x C(t^*,K,S_{\tau}) 
= \partial^2_{KK}C(\tau + t^*,K,x),
\end{equation}
where the derivatives in the right hand side are understood in a weak sense, and we used Proposition $\ref{th:lvg.le1}$ to obtain the last equality. Due to (\ref{eq.appendixA.pdde.DUP.th2.1}), $\partial^2_{KK}C(\tau + t^*,\cdot,x)$ is, in fact, locally integrable on $(L,U)$, and equation (\ref{cfpdde1}) holds for almost every $K\in(L,U)$. Using the monotone convergence theorem, it is easy to show that $C(\tau,\cdot,x)$ is continuous, for any $\tau\geq0$, and, hence, (\ref{cfpdde1}) holds everywhere except the points of discontinuity of $a$. The boundary conditions follow from the dominated convergence theorem. In addition, (\ref{eq.PfThm8.eq.1}) and an application of Fubini's theorem imply that $\partial^2_{KK}C(\tau + t^*,\cdot,x)$ is absolutely integrable on $(L,U)$. This, along with the boundary conditions and equation (\ref{cfpdde1}), imply that $\partial^2_{KK}C(\tau + t^*,\cdot,x)$ is square integrable. 
To show uniqueness, we apply the same argument as at the end of the proof of Theorem $\ref{th:pdde.DUP.th1}$.

\section*{Appendix B}

\emph{\bf Proof of Lemma $\ref{le:temp.0}$}.
Notice that, for all $y>0$, we have:
$$
\left( (1+y) e^{-2y} + y - 1 \right)'
= 1 - e^{-2y}(1+2y) > 0,
$$
which follows from the inequality $e^{2y}>1+2y$. Since $(1+0) e^{-2\cdot0} + 0 - 1=0$, we obtain
$$
(1+y) e^{-2y} + y - 1 > 0,\,\,\,\,\,\,\,\,\,\,\,y>0.
$$
Therefore, for any positive $A$ and $B$,
$$
\tilde{A}'(\sigma) = \frac{1}{2}\left(\left(A - \frac{\sigma}{z} B\right) \frac{z(w-K^i_j)}{\sigma^2} - \frac{B}{z} \right) e^{-z(w-K^i_j)/\sigma}
- \frac{1}{2} \left(\left(A + \frac{\sigma}{z} B\right) \frac{z(w-K^i_j)}{\sigma^2} - \frac{B}{z} \right) e^{z(w-K^i_j)/\sigma}
$$
$$
 = \frac{1}{2}\left( A \frac{z(w-K^i_j)}{\sigma^2} - \frac{B}{z}\left(\frac{z(w-K^i_j)}{\sigma}  + 1\right) \right) e^{-z(w-K^i_j)/\sigma}
$$
$$
- \frac{1}{2} \left(A \frac{z(w-K^i_j)}{\sigma^2} + \frac{B}{z}\left( \frac{z(w-K^i_j)}{\sigma} - 1\right) \right) e^{z(w-K^i_j)/\sigma}
 = \frac{1}{2} A \frac{z(w-K^i_j)}{\sigma^2} \left(e^{-z(w-K^i_j)/\sigma} - e^{z(w-K^i_j)/\sigma} \right)
$$
$$
 - \frac{1}{2} \frac{B}{z} e^{z(w-K^i_j)/\sigma} \left( \left(\frac{z(w-K^i_j)}{\sigma}  + 1\right) e^{-2z(w-K^i_j)/\sigma}
+ \left( \frac{z(w-K^i_j)}{\sigma} - 1\right) \right) \leq 0
$$
Similarly, we have:
$$
\tilde{B}'(\sigma) =
-\frac{1}{2}\left( \left( \frac{z}{\sigma} A - B\right) \frac{z(w-K^i_j)}{\sigma^2} - \frac{z}{\sigma^2} A  \right)  e^{-z(w-K^i_j)/\sigma}
$$
$$
+ \frac{1}{2}\left( -\left(\frac{z}{\sigma} A + B\right) \frac{z(w-K^i_j)}{\sigma^2} - \frac{z}{\sigma^2} A  \right) e^{z(w-K^i_j)/\sigma}
$$
$$
=
-\frac{z}{2\sigma^2}\left( \frac{z(w-K^i_j)}{\sigma} A - B (w-K^i_j) - A  \right)  e^{-z(w-K^i_j)/\sigma}
$$
$$
- \frac{z}{2\sigma^2}\left( \frac{z(w-K^i_j)}{\sigma} A + B (w-K^i_j) + A  \right) e^{z(w-K^i_j)/\sigma}
$$
$$
\leq
\frac{z}{2\sigma^2}\left( B (w-K^i_j) + A  \right)  e^{-z(w-K^i_j)/\sigma}
- \frac{z}{2\sigma^2}\left( B (w-K^i_j) + A  \right) e^{z(w-K^i_j)/\sigma}
\leq 0
$$
Taking limits as $\sigma$ converges to $0$ and $\infty$, we complete the proof of the lemma.

\emph{\bf Proof of Lemma $\ref{le:temp.2}$}.
For any $a<\bar{V}(T_i,K^i_{j+1})$ and any
$$
b\in\left(0,\frac{\bar{V}(T_i,K^i_{j+1})-a}{K^i_{j+1}-w}\right),
$$ 
we introduce $\Sigma=\Sigma(a,b)$ as the unique solution to
\begin{equation}\label{eq.PfLem20.1}
\frac{1}{2}\left(a + \frac{\Sigma}{z} b \right) e^{z(K^i_{j+1}-w)/\Sigma}
+ \frac{1}{2}\left(a - \frac{\Sigma}{z} b \right) e^{-z(K^i_{j+1}-w)/\Sigma}
 = \bar{V}(T_i,K^i_{j+1}),
\end{equation}
It is shown in the proof of Theorem $\ref{th:calib.th.main}$ that such $\Sigma$ exists and is unique. 
It was also shown that the left hand side of (\ref{eq.PfLem20.1}) is strictly increasing in $a$ and $b$, and it is strictly decreasing in $\Sigma$. This yields that $\Sigma(a,b)$ is strictly increasing in $a$ and $b$.

Consider the associated function
$$
F(a,b;K) = \frac{1}{2}\left(a + \frac{\Sigma(a,b)}{z} b\right) e^{z(K-w)/\Sigma(a,b)}
+ \frac{1}{2}\left(a - \frac{\Sigma(a,b)}{z} b\right)  e^{-z(K-w)/\Sigma(a,b)},
$$
defined for $K\in[w,K^i_{j+1}]$. Note that $F(a,b;\cdot)$ is convex, increasing, and it solves the ODE (\ref{eq.calib.TV.2}). In addition, $F(a,b;w)=a$, $\partial_K F(a,b;w)=b$, $F(a,b;K^i_{j+1}) = \bar{V}(T_i,K^i_{j+1})$, and
$$
\partial_K F(a,b;K^i_{j+1}) = 
\frac{z}{2\Sigma(a,b)}\left(a + \frac{\Sigma(a,b)}{z} b\right) e^{z\left(K^i_{j+1}-w\right)/\Sigma(a,b)}
-\frac{z}{2\Sigma(a,b)}\left(a - \frac{\Sigma(a,b)}{z} b\right)  e^{-z\left(K^i_{j+1}-w\right)/\Sigma(a,b)}
$$
Let us show that, for a fixed $b$, $\partial_K F(a,b;K^i_{j+1})$ is strictly decreasing as a function of 
$$
a\in\left(\bar{V}\left(T_i,K^i_{j+1}\right) - b\left(K^i_{j+1}-w\right),\,\bar{V}\left(T_i,K^i_{j+1}\right)\right)
$$
Choose any two points $a_1<a_2$, from the above interval. Notice that the function $u(K)=F(a_1,b;K) - F(a_2,b;K)$ satisfies:
\begin{equation}\label{eq.PfLem20.2}
u''(K) - \frac{z^2}{\Sigma^2(a_2,b)} u(K) = \left(  \frac{z^2}{\Sigma^2(a_1,b)} -  \frac{z^2}{\Sigma^2(a_2,b)} \right) F(a_1,b;K),
\end{equation}
on the interval $K\in(w,K^i_{j+1})$
The right hand side of (\ref{eq.PfLem20.2}) is nonnegative. Notice that $u(w)=a_1 - a_2 < 0$. Assume that there is $K\in(w,K^i_{j+1})$, such that $u(K)>0$, then, there must exist a point $w_1\in(w,K^i_{j+1})$, such that $u(w_1)>0$ and $u'(w_1)>0$. From (\ref{eq.PfLem20.2}), we conclude that $u(K)$ is strictly convex in the interval $K\in[w_1,K^i_{j+1}]$. In particular, it means that $u(K^i_{j+1})>0$, which is a contradiction. Therefore, the above assumption is wrong and $u(K)\leq0$, for all $K\in[w,K^i_{j+1}]$. This means that $F(a_1,b;K) \leq F(a_2,b;K)$, for all $K\in[w,K^i_{j+1}]$, and, due to the fact that $F(a_1,b;K^i_{j+1}) = F(a_2,b;K^i_{j+1})$, we obtain $\partial_K F(a_1,b;K^i_{j+1}) \geq \partial_K F(a_2,b;K^i_{j+1})$. To show that the inequality is strict, we notice that, if $\partial_K F(a_1,b;K^i_{j+1}) = \partial_K F(a_2,b;K^i_{j+1})$, then, $u'(K^i_{j+1})=0$ and (\ref{eq.PfLem20.2}) implies that $u''(K^i_{j+1})>0$, which, in turn, implies that $u$ is strictly positive in a left neighborhood of $K^i_{j+1}$, which is impossible.
Repeating the above arguments, one can show that $\partial_K F(a,b;K^i_{j+1})$ is strictly decreasing in $b\in(0,(\bar{V}(T_i,K^i_{j+1})-a)/(K^i_{j+1}-w))$.

Finally, using both the current notation and that introduced in the proof of Theorem $\ref{th:calib.th.main}$ (Step 1.3.a), we notice that: 
$$
\tilde{\sigma}(\sigma) = \Sigma(\tilde{A}(\sigma),\tilde{B}(\sigma)),
$$
and recall that $\tilde{A}(\sigma)$ and $\tilde{B}(\sigma)$ are strictly decreasing in $\sigma\in(\hat{\sigma},\infty)$. It is easy to see that the range of values of $(\tilde{A}(\sigma),\tilde{B}(\sigma))$ belongs to the set of admissible values of $(a,b)$, defined earlier in this proof. 
Collecting the above, we conclude that
$$
\frac{z}{2\tilde{\sigma}(\sigma)} \left(\tilde{A}(\sigma) + \frac{\tilde{\sigma}(\sigma)}{z} \tilde{B}(\sigma)\right) e^{z(K^i_{j+1}-w)/\tilde{\sigma}(\sigma)} 
-\frac{z}{2\tilde{\sigma}(\sigma)}\left(\tilde{A}(\sigma) - \frac{\tilde{\sigma}(\sigma)}{z} \tilde{B}(\sigma)\right)
e^{-z(K^i_{j+1}-w)/\tilde{\sigma}(\sigma)}
$$
\begin{equation}\label{eq.PfLem20.3}
= \partial_K F\left(\tilde{A}(\sigma),\tilde{B}(\sigma);K^i_{j+1}\right),
\end{equation}
is a strictly increasing function of $\sigma\in(\hat{\sigma},0)$.
Notice that, as $\sigma\downarrow\hat{\sigma}$, we have: $\tilde{\sigma}(\sigma)\rightarrow\infty$, and the left hand side of (\ref{eq.PfLem20.3}) converges to $\tilde{B}(\hat{\sigma})<B_1$. When $\sigma\rightarrow\infty$, we obtain: $\tilde{A}(\sigma)\rightarrow A + B(w-K^i_j)$ and $\tilde{B}(\sigma)\rightarrow B$, and, in turn, $\partial_K F\left(\tilde{A}(\sigma),\tilde{B}(\sigma);K^i_{j+1}\right)\rightarrow  \partial_K F\left(A + B(w-K_j),B;K^i_{j+1}\right)$. Since $F(a,b;\cdot)$ is strictly convex, we have
$$
\partial_K F\left(A + B(w-K^i_j),B;K^i_{j+1}\right) > \frac{F\left(A + B(w-K^i_j),B;K^i_{j+1}\right) - F\left(A + B(w-K^i_j),B;w\right)}{K^i_{j+1}-w}
$$
$$
= \frac{\bar{V}(T_i,K^i_{j+1}) - A - B(w-K^i_j)}{K^i_{j+1}-w} = B_1,
$$
where the last equality is due to the definition of $w$.
This completes the proof of the lemma.

\emph{\bf Proof of Lemma $\ref{le:temp.3.1}$}.
To show the first inequality in the statement of the lemma, we notice that
$$
\tilde{A}(\bar{\sigma}) + \tilde{B}(\bar{\sigma}) (K^i_{j+1}-y) = V^{i-1}(K^i_{j+1}) 
= \bar{V}(T_{i-1},K^i_{j+1}) < \bar{V}(T_{i},K^i_{j+1})
$$
In addition, since $y<w$ and $B<B_1$, we have:
$$
\tilde{A}(\bar{\sigma}) + B_1(K^i_{j+1}-y) > A + B(y-K^i_j) + B_1(K^i_{j+1}-y) 
$$
$$
> A + B(w-K^i_j) + B_1(K^i_{j+1}-w) = \bar{V}(T_i,K^i_{j+1})
$$
The last inequality in the statement of the lemma follows from the strict convexity of $V^{i-1}$ and the following two observations:
$$
\tilde{A}(\bar{\sigma}) + \tilde{B}(\bar{\sigma}) (K^i_{j+1}-y) = V^{i-1}(K^i_{j+1}),
\,\,\,\,\,\,\,\,\,\,\,\,\tilde{A}(\bar{\sigma}) > A + B(y-K^i_j) = V^{i-1}(y)
$$

\emph{\bf Proof of Lemma $\ref{le:calib.le.1.2}$}.
Parts 1 and 2 of Assumption $\ref{assump:ass.1}$ are satisfied by construction. Let us show that the price curves $\bar{C}^i=C^{\nu^i,\sigma^i,z,x}$ satisfy the following: for all $i\geq1$, $\left(\bar{C}^i(K) - \bar{C}^{i-1}(K)\right)/\partial^2_{KK}\bar{C}^i(K)$ is bounded from above and away from zero, uniformly over $K\in(L_i,U_i)$.
Observe that
$$
\frac{\bar{C}^i(K) - \bar{C}^{i-1}(K)}{\partial^2_{KK}\bar{C}^i(K)}
= \frac{z^2}{a_i^2(K)} \frac{V^{\nu^i,\sigma^i,z,x}(K) - V^{\nu^{i-1},\sigma^{i-1},z,x}(K) }{V^{\nu^i,\sigma^i,z,x}(K)},
$$
where
$$
a_i(K) = \sum_{j=1}^{n} \sigma^i_{j} \bone_{\left[\nu^i_{j-1},\nu^i_j\right)}(K)
$$
Since each value function $V^{\nu^i,\sigma^i,z,x}$ is bounded from above and away from zero, on any compact in $(L_i,U_i)$, we only need to analyze the limiting behavior of the above ratio, as $K\downarrow L_i$ and $K\uparrow U_i$. 
Due to Assumption $\ref{assump:ass.2}$, $L_{i}\leq L_{i-1}$, for all $i$.
If $L_i < L_{i-1}$, then, as $K\downarrow L_i$,
$$
\frac{V^{\nu^i,\sigma^i,z,x}(K) - V^{\nu^{i-1},\sigma^{i-1},z,x}(K) }{V^{\nu^i,\sigma^i,z,x}(K)} \rightarrow 1
$$
If $L_i=L_{i-1}$, we have:
$$
\lim_{K\downarrow L_i} \frac{V^{\nu^i,\sigma^i,z,x}(K) - V^{\nu^{i-1},\sigma^{i-1},z,x}(K) }{V^{\nu^i,\sigma^i,z,x}(K)}
$$
\begin{equation}\label{eq.PfLem23.1}
= \lim_{K\downarrow L_i} \frac{ \lambda^i e^{z(K-L_i)/\sigma^i_1} - \lambda^i e^{-z(K-L_i)/\sigma^i_1} 
- \lambda^{i-1} e^{z(K-L_i)/\sigma^{i-1}_1} + \lambda^{i-1} e^{-z(K-L_i)/\sigma^{i-1}_1} }
{ \lambda^i e^{z(K-L_i)/\sigma^i_1} - \lambda^i e^{-z(K-L_i)/\sigma^i_1}},
\end{equation}
$$
=  \frac{ \lambda^i/\sigma^i_1 - \lambda^{i-1} /\sigma^{i-1}_1 }{ \lambda^i /\sigma^i_1 },
$$
with some positive constants $\lambda^i$ and $\lambda^{i-1}$. Notice that $V^i_+(L_i)=2\lambda^i/\sigma^i_1$, where $V^i_+$ is the right derivative of $V^i=V^{\nu^i,\sigma^i,z,x}$. Similarly, $V^{i-1}_+(L_i)=2\lambda^{i-1}/\sigma^{i-1}_1$. 
Recall (\ref{eq.calib.initderiv}), which guarantees that $V^i_+(L_i) > V^{i-1}_+(L_i)$. Hence, the right hand side of (\ref{eq.PfLem23.1}) is strictly positive. Analogous arguments hold in the case of $K\uparrow U_i$, which completes the proof of the lemma.

\bibliographystyle{abbrv}
\bibliography{MatchingEurOptions_refs}

\ed